\theoremstyle{plain}
\newtheorem{prop}{Proposition}
\newtheorem{lem}[prop]{Lemma}
\theoremstyle{remark}
\newtheorem{ex}{Example}
\renewcommand{\H}{\mathcal{H}}
\newcommand{\B}{\mathcal{B}}
\newcommand{\C}{\mathcal{C}}
\newcommand{\X}{\mathcal{X}}
\newcommand{\V}{\mathcal{V}}
\renewcommand{\S}{\mathcal{S}}
\newcommand{\A}{\mathcal{A}}
\renewcommand{\P}{\mathcal{P}}
\newcommand{\dist}{\operatorname{dist}}
\newcommand{\len}{\operatorname{length}}
\newcommand{\brad}[1]{( #1|}
\newcommand{\ked}[1]{|#1)}
\newcommand{\braked}[2]{( #1 | #2 )}
\DeclarePairedDelimiterX\hej[1]\lbrace\rbrace{#1}
\DeclarePairedDelimiterX{\braketHS}[2]{\langle}{\rangle_{\textsc{hs}}}{#1|#2}
\DeclarePairedDelimiterX{\expect}[1]{\langle}{\rangle}{#1}
\DeclarePairedDelimiterX{\inner}[2]{\langle}{\rangle}{#1, #2}
\DeclarePairedDelimiterX{\iner}[1]{\langle}{\rangle}{#1, #1}
\title{Geometric Operator Quantum Speed Limit, Wegner Hamiltonian Flow and Operator Growth}
\author{Niklas Hörnedal\,\orcidlink{0000-0002-2005-8694}\,}
\affiliation{Department  of  Physics  and  Materials  Science,  University  of  Luxembourg,  L-1511  Luxembourg, G. D.  Luxembourg}
\author{Nicoletta Carabba\,\orcidlink{0000-0002-9447-9213}\,}
\affiliation{Department  of  Physics  and  Materials  Science,  University  of  Luxembourg,  L-1511  Luxembourg, G. D.  Luxembourg}
\author{Kazutaka Takahashi\,\orcidlink{0000-0001-7321-2571}\,}
\affiliation{Department  of  Physics  and  Materials  Science,  University  of  Luxembourg,  L-1511  Luxembourg, G. D.  Luxembourg}
\affiliation{Department of Physics Engineering, Faculty of
Engineering, Mie University, Mie 514–8507, Japan}
\author{Adolfo del Campo\,\orcidlink{0000-0003-2219-2851}\,}
\affiliation{Department  of  Physics  and  Materials  Science,  University  of  Luxembourg,  L-1511  Luxembourg, G. D.  Luxembourg}
\affiliation{Donostia International Physics Center,  E-20018 San Sebasti\'an, Spain}
\begin{document}

\maketitle
\begin{abstract}
Quantum speed limits (QSLs) provide lower bounds on the minimum time required for a process to unfold by using a distance between quantum states and identifying the speed of evolution or an upper bound to it. We introduce a generalization of QSL to characterize the evolution of a general operator when conjugated by a unitary. The resulting operator QSL (OQSL) admits a geometric interpretation, is shown to be tight, and holds for operator flows induced by arbitrary unitaries, i.e., with time- or parameter-dependent generators. The derived OQSL is applied to the Wegner flow equations in Hamiltonian renormalization group theory and the operator growth quantified by the Krylov complexity.
\end{abstract}

\tableofcontents

In what time scale does a physical process unfold? 
Time-energy uncertainty relations have long been used to estimate characteristic time scales in physical processes, including lifetimes in quantum decay, tunneling times, and the duration of a quantum jump, among others \cite{Pfeifer95,Busch2008,Schulman2008,Dodonov2015}. In the quantum domain, Mandelstam and Tamm put the time-energy uncertainty relation on firm ground in their 1945 work \cite{Mandelstam91}. They   provided its rigorous derivation by combining the Heisenberg equation of motion and the Robertson uncertainty relation. They went a step further by identifying the minimum time for the quantum state of a system to evolve into a distinct state, using the energy dispersion of the initial state as an upper bound to the speed of evolution. In doing so, they introduced an early example of a quantum speed limit (QSL). 

Over the last decades, such an approach has been refined and generalized to a great extent \cite{Deffner17,GongHamazaki22}. Margolus and Levitin found an alternative bound to the speed of evolution in terms of the mean energy of the system \cite{Margolus98}, and ensuing works showed that an infinite family of bounds exist in terms of other moments of the generator of evolution \cite{Zych06,Margolus11}. QSLs have also been derived for time-dependent Hamiltonians \cite{Uhlmann1992,Deffner2013JPA,Okuyama2018}, open systems described by master equations \cite{Taddei13,delcampo13,Deffner2013PRL,Campaioli2019}, and with a stochastic evolution under continuous quantum measurements  \cite{GarciaPintos19}. They have been further extended to the classical domain, with applications ranging from Hamiltonian dynamics to stochastic thermodynamics \cite{Shanahan18,Okuyama18,Shiraishi18,Nicholson2020,VoVuHasegawa20,VuHasegawa21,GarciaPintos2022}.
QSLs generally involve a notion of distance between quantum states and an upper bound to the speed of quantum evolution.
While the Bures angle, defined in terms of the Uhlmann fidelity \cite{Bengtsson2017}, is often the default choice for the distance between quantum states, other alternatives  can provide tighter QSLs \cite{Pires16,Campaioli2018,Campaioli2019,Hornedal22MT}. This freedom is particularly important in the context of many-body systems given the orthogonality catastrophe and the growth of the Hilbert space with the system size \cite{Bukov2019,Fogarty20,Suzuki20,Delcampo21,Hamazaki2022}.

The emphasis on quantum state distinguishability was a key stepping stone in developing QSLs. Today, QSLs find manifold applications in quantum metrology and parameter estimation \cite{Braunstein96,Giovannetti11,Toth14,Beau17}, quantum control \cite{Caneva09,An16,Funo17,Campbell2017}, and quantum thermodynamics \cite{delcampoGoold2014,Binder15}, among other fields.
However, certain phenomena  are naturally described in terms of operator flows, i.e., the continuous evolution of an operator according to given equations of motion. A typical example is the description of quantum evolution in the Heisenberg picture, but the relevance of operator flows is not restricted to quantum dynamics in rotating frames.
Operator flows naturally arise in the Wegner flow equations for  Hamiltonian renormalization \cite{Wegner94,GlazekWilson93,GlazekWilson94,Wegner2001,Kehrein2007}, the study of operator growth and quantum complexity \cite{Keyserlingk18,Nahum18,Rakovszky18,Khemani18,Parker2019}, and correlation functions \cite{Forster18}, to name some examples. 

Motivated by these applications, we have introduced the notion of QSL for operator flows in Ref. \cite{Carabba22}, which we shall term Operator QSL (OQSL) hereafter. Another approach pursuing QSLs for observables was proposed in \cite{Mohan21}.
At variance with conventional QSLs, OQSLs involve a notion of distance between operators (instead of quantum states) and an upper bound on the corresponding speed of evolution. OQSLs have proved useful in characterizing  the time evolution of autocorrelation functions, setting bounds on dynamical susceptibilities arising in linear response theory, and the precision in parameter estimation with thermal quantum systems \cite{Carabba22}. However, in their current form, OQSLs are restricted to flows associated with a constant generator, i.e., one that is  independent of time or the  relevant parameter characterizing the evolution. In addition, OQSLs lack in their present formulation an intuitive geometric description, common in other bounds arising in quantum information geometry, such as the Mandelstam-Tamm QSL. 

In this work, we introduce an OQSL valid under unitary dynamics generated by a time- (or parameter-) dependent generator, that can be a generic operator or an observable, e.g., a Hamiltonian. The resulting bound admits a geometric interpretation and is shown to be tight. We illustrate its usefulness in the study of Wegner flow equations in the theory of Hamiltonian continuous renormalization group \cite{Wegner2001,GlazekWilson93,GlazekWilson94}.  We characterize the OQSL for the Hamiltonian flow  in detail and illustrate the possibility of  saturating it when the flow of Hamiltonian parameters is described by the Toda equations. We further apply the new OQSL to the problem of operator growth in unitary quantum dynamics in Krylov space, i.e., as characterized by the Krylov complexity \cite{Parker2019,Barbon2019,Caputa2019,Dymarsky20,Rabinovici2021,Hornedal22}. We close with a discussion and an outlook pointing out directions for further work.

\section{Motivation}

In quantum physics, a broad class of time-correlation functions is defined through a positive semi-definite inner product. Consider any operator $A$ evolving unitarily according to $\dot{A}_t = i\mathds{L}A_t$, where $\mathds{L} = [H,\cdot]$ is the Liouvillian superoperator. One class of time-correlation functions that have been considered in the literature takes the form $C(t) = \braked{A}{A_t}$, defined with the help of a Hermitian bilinear form in the space of operators,
\begin{equation}
\label{time-cor}
    \braked{A}{B} = \Tr(A^\dagger\rho_1B\rho_2).
\end{equation}
The operators $\rho_1$ and $\rho_2$ are in general positive semi-definite, commute with the Hamiltonian, and need not have unit trace. A familiar example of this type of correlation function is obtained by setting $\rho_1 = \rho_2 = \mathds{1}$.
Eq. \eqref{time-cor} reduces then to the Hilbert-Schmidt inner product. Another familiar instance corresponds to the choice of the identity $\rho_1 = \mathds{1}$ and  the canonical Gibbs state $\rho_2 = e^{-\beta H}/Z$ at inverse temperature $\beta$, with partition function $Z=\Tr e^{-\beta H}$. Note that in general, the bilinear form in \eqref{time-cor} is not positive definite, which means that it does not always define an inner product; instead, it will define a positive semi-definite inner product.\footnote{We use the convention that an inner product must satisfy positive-definiteness, be linear in the second argument and conjugate symmetric} As a result, there might exist cases in which $A\neq 0$ and $\braked{A}{A} = 0$ are simultaneously fulfilled. Another example of a correlation function defined through a positive semi-definite inner product, i.e. $C(t) = \braked{A}{A_t}$, is given by the so-called Kubo inner product \cite{Mori65,Kubo66,Forster18,Recursionmethod}
\begin{equation}
    \braked{A}{B} = \frac{1}{\beta}\int_0^\beta d\lambda \expect{e^{\lambda H}A^\dagger e^{-\lambda H}B}_\beta - \expect{A^\dagger}_\beta\expect{B}_\beta.
\end{equation}
Here, $\expect{\cdot}_\beta$ denote the thermal expectation value and $\beta$ is once again the inverse temperature.

The function $\norm{A} = \sqrt{\braked{A}{A}}$ is a seminorm and we note that the condition $\norm{A_t} = \norm{A}$ is satisfied in the above examples for all $t$.\footnote{A seminorm fulfills all the properties of regular norm except that non-zero vectors can have norm 0.} We might then ask, given a unitary flow induced by a possibly time-dependent Hamiltonian $H$, what is the minimal time $\tau$ for an initial observable $A$ to reach some specific value of $C(t)$ provided that $\norm{A_t} = \norm{A}$ is satisfied during the whole evolution? What follows is a derivation of a speed limit that lower bounds this minimum time.

\section{An Operator Quantum Speed Limit}
\label{OQSL}

The complex Hilbert space $\H$ used to model the system  will be assumed throughout this paper to be finite-dimensional. We define $\B$ to be the space of linear operators acting on this space. 
Moreover, let $\textrm{End}(\B)$ be the space consisting of all vector space endomorphism of $\B$. This linear space is commonly referred to as the Liouville space in the literature \cite{Ernst90,Gyamfi20} and its elements are referred to as superoperators.

Positive semi-definiteness of $\braked{\cdot}{\cdot}$ makes it possible for $C(t)$ to be constant for certain paths $A_t$ even though the operator is changing in time. We will see that there is a way of ``carving away'' the degrees of freedom in $\H$ that do not contribute to changes in $C(t)$. In doing so, we obtain an effective Hilbert subspace $\H_\P$, where the restriction of $\braked{\cdot}{\cdot}$ onto this subspace defines a proper inner-product. We will see that the condition  $\norm{A_t} = \norm{A}$ implies that $A_t$ will be situated on a sphere centered at the origin in $\H_\P$ with radius $\norm*{A}$. This observation will then enable us to derive a geometric OQSL.

\subsection{Construction of the effective Hilbert space}

We  let $\inner{\cdot}{\cdot}_\textsc{h}$ denote the Hilbert-Schmidt inner product on $\B$. It can be shown that any positive semi-definite inner product $\braked{\cdot}{\cdot}$ can be expressed as $\braked{\cdot}{\cdot} = \inner{\cdot}{\P \cdot }_\textsc{h}$, where $\P$ is a positive semi-definite superoperator with respect to the Hilbert-Schmidt inner product. This superoperator will be unique, provided that $\braked{\cdot}{\cdot}$ has been specified---see Appendix \ref{app-A}. As we prove in Appendix \ref{app-B}, a useful relation between $\braked{\cdot}{\cdot}$ and $\P$ is that 
\begin{equation}
\label{kernel}
    \norm{A} = 0 \iff\P A = 0.
\end{equation}
Since $\P$ is self-adjoint, it follows from the spectral theorem that the linear space $\B$ can be expressed as a direct sum of the eigenspaces of $\P$ \cite{Friedberg2018}. Consequently, the image of $\P$ will be spanned by the eigenvectors corresponding to non-zero eigenvalues, and we thus have that $\B = \textrm{im}(\P)\oplus \textrm{ker}(\P)$, where $\textrm{im}(\P)$ and $\textrm{ker}(\P)$ are the image and kernel of $\P$ respectively. Relation \eqref{kernel} then implies that the restriction of $\braked{\cdot}{\cdot}$ to $\textrm{im}(\P)$ will be positive definite and will thus define an inner product on the Hilbert space defined by $\H_\P = \textrm{im}(\P)$. We will express this inner product with the usual bracket notation $\inner{\cdot}{\cdot}$.

For any operator $A\in\B$, we will let $\hat{A}\in\H_\P$ denote the orthogonal projection of $A$ onto $\H_\P$.\footnote{By orthogonal, we mean orthogonality measured by the Hilbert-Schmidt inner product.} More explicitly, given the spectral decomposition $\P = \sum_{k} p_k \Pi_k$, where $p_k$ are the non-zero eigenvalues of $\P$ and $\Pi_k$ are the corresponding eigenprojections, we have that $\hat{A} = \sum_k \Pi_k A$. We show in Appendix \ref{app-C} that
\begin{equation}
    \label{inner-equality}
    \braked{A}{B} = \braket*{\hat{A}}{\hat{B}} \quad \forall A,B\in\B.
\end{equation}
An important consequence of \eqref{inner-equality} is that $C(t) = \braket*{\hat{A}}{\hat{A}_t}$. This means that if we are interested in how $C(t)$ changes over time, then we only need to consider the projected dynamics $\hat{A}(t)$ of $A(t)$.


\subsection{Deriving the speed limit}
\label{QSL derivation}

If $d$ is the complex dimension of the Hilbert space $\H_\P$, then we can view it as a real vector space isomorphic to $\mathds{R}^{2d}$. We endow this space with the Riemannian metric given by the real part of the inner product $\inner{\cdot}{\cdot}$. The condition that $\braked{A}{A} = \braked{A_t}{A_t}$ holds for all $t\in[0,\tau]$ then means that $\hat{A}_t$ will be situated on the $(2d-1)$-dimensional sphere $\S_\norm*{A}$ with radius $\norm*{A}$ centered at the origin. The shortest path connecting two points on the sphere lies on a great circle with a distance given by the angle between the points times the radius. In other words, if $\hat{A}$ and $\hat{B}$ are two operators on the sphere with radius $\norm{A}$, then the geodesic distance is given by
\begin{equation}
    \dist(\hat{A},\hat{B}) = \norm{A}\arccos(\frac{\Re \braked{A}{B}}{\norm{A}^2}).
\end{equation}

The length of any curve $\hat{A}_t$ on the sphere must be greater or equal to the geodesic distance. As a consequence, we obtain an operator quantum speed limit $\tau_\textsc{qsl}$ by noting that
\begin{equation}
\label{first_QSL}
    \tau = \frac{\len([\hat{A}_t])}{\frac{1}{\tau}\len([\hat{A}_t])}\geq \frac{\dist(\hat{A}_0,\hat{A}_\tau)}{\frac{1}{\tau}\len([\hat{A}_t])},
\end{equation}
where $\len([\hat{A}_t]) = \int_0^\tau \norm{\mathds{L}_tA_t}dt$ is the length of the evolution. More explicitly, the speed limit 
can be written in terms of $C(t)$ as
\begin{equation} \label{result1}
    \tau \geq \tau_\textsc{qsl},\quad \tau_\textsc{qsl} = \frac{\sqrt{C(0)}\arccos(\Re\frac{ C(\tau)}{C(0)})}{\V_\tau},
\end{equation}
where $\V_\tau=\frac{1}{\tau}\int_0^\tau \norm{\mathds{L}_tA_t}dt$ is the time averaged speed of the evolution. For non-constant speeds, $\V_\tau$ needs in practice be replaced by an upper bound on the speed of the evolution in order to make $\tau_\textsc{qsl}$ independent on the time $\tau$ one is trying to estimate.

We stress that the speed limit $\tau$ is only guaranteed to hold whenever $\norm{A_t}$ is preserved for the evolving operator. This is always satisfied in the case when $\P = \mathds{1}$ so that $\braked{\cdot}{\cdot}$ becomes the Hilbert-Schmidt inner product. For more general choices of $\P$ however, norm preservation will not be guaranteed to hold for all initial operators. In fact, the norm is preserved for all initial operators if and only if $[\mathds{L},\P] = 0$. Thus, Eq. \eqref{result1} characterizes an infinite set of universal speed limits. It can be applied both for quantum states and observables and is fairly easy to calculate. We will analyze its tightness further down in the text.

Let us express \eqref{result1} in terms of the components of $A$, $\P$ and the eigenvalues of $\mathds{L}$. Assuming $[\mathds{L},\P] = 0$, let $\P_{nm}$ and $\omega_\alpha = E_n - E_m$ be the eigenvalues of $\P$ and $\mathds{L}$ with respect to a common eigenbasis. Moreover, let $A_{ij}$ be the components of $A$ with respect to this basis. Define $v_\alpha = \frac{\P_{nm}\abs{A_{nm}}^2}{C(0)}$, where $C(0) = \sum_{n,m}\P_{nm}\abs{A_{nm}}^2$. We can then express \eqref{result1} more explicitly, as
\begin{equation}
    \tau_\textsc{qsl} = \frac{\arccos(\Re\sum_{\alpha}v_\alpha e^{i\omega_{\alpha}\tau})}{\sqrt{\sum_{\gamma}v_\gamma \omega_{\gamma}^2}},
\end{equation}

where the indices $\alpha$ and $\gamma$ runs from 1 to $\textrm{dim}(\H)^2$.
\begin{ex}
    In the case when $C(t) = \Tr(A^\dagger A_t e^{-\beta H}/Z)$ we have that $\P A = Ae^{-\beta H}/Z$ and a common eigenbasis between $\mathds{L}$ and $\P$ is given by the operators $\ketbra{E_n}{E_m}$ where the eigenvalues of $\P$ are given by $\P_{nm} = e^{-\beta E_m}/Z$.
\end{ex} 

\begin{ex}
    For the Kubo inner-product we can view $\P$ as the composition $\P = \A \circ \B \circ \C$ of the three superoperators defined by $\A(A) = \frac{Ae^{-\beta H}}{Z}$, $\B(A) = \frac{1}{\beta}\int^{\beta}_0 d\lambda e^{-\lambda H} Ae^{\lambda H}$ and $\C A = A - \Tr(A)\mathds{1}$. The superoperators $\A$ and $\B$ commute with a common eigenbasis being given by the eigenvectors $\ketbra{E_n}{E_m}$. A speed limit for a time-evolving operator $A_t$ using the Kubo inner-product can then be obtained by using $\Tilde{A}_t = \C(A_t)$ and $\Tilde{\P} = \A\circ\B$ instead of $A_t$ and $\P$ and then use that $\Tilde{\P}_{nm} = \frac{e^{-\beta E_m}}{\beta Z}\int_0^\beta e^{\lambda(E_m-E_n)}d\lambda$ and $\Tilde{A}_{nm} = A_{nm} - \sum_k A_{kk}$.
\end{ex}

Let us note that the OQSL \eqref{result1} is saturated by any traceless observable involving solely the coupling of the ground state and an excited state of a time-independent Hamiltonian, in the same spirit as the superposition of  these eigenstates saturates the standard QSLs for states \cite{Levitin2009}. To show this, let us consider a time-independent Hamiltonian $H$ and let us denote by $\ket{0}$ and $\ket{E}$ its ground state and the eigenstate with energy $E$: $H\ket{0}=0$ and $H\ket{E}=E\ket{E}$. Then, it is straightforward to see that the operator $A=\frac{1}{\sqrt{2}}(\ketbra{0}{E}+\ketbra{E}{0})$ exactly saturates the OQSL \eqref{result1} with respect to the Hilbert-Schmidt inner product $\braked{A}{B}=\Tr (A^\dagger B)$. Indeed, the autocorrelation function reduces to $C(t)=\cos(Et)$, in natural units, while the velocity is constant and takes the value $\mathcal{V}=\|[H,A]\|=E$. Therefore, being $C(0)=1$ due to the normalization, the OQSL \eqref{result1} reduces to an identity at any time.

This intuition, that when the operator dynamics is confined to two time-independent Hamiltonian eigenspaces the evolution occurs along a geodesic trajectory, can be made more rigorous and general, as we shall describe below identifying the general conditions for the saturation of the OQSL \eqref{result1}.

\subsection{Refined speed Limit}
\label{refined}

Suppose we can find a decomposition of $\hat{A}_t$ such that $\hat{A}_t = S + V_t$, where $S$ and $V_t$ stays orthogonal with respect to the metric $\Re\inner{\cdot}{\cdot}$ throughout the evolution. This is for example natural to consider when there are symmetries present in the system where one could consider $S$ being a symmetry commuting with the Hamiltonian. We then have that the minimal time it takes $V$ to reach the operator $V_\tau$ is smaller or equal to the time it takes $\hat{A}$ to reach $\hat{A}_\tau$. We can thus obtain another speed limit by substituting $C(t)$ with $\braked{V}{V_t}$ in \eqref{result1}. Using that $S$ and $V_t$ are orthogonal, one can write $\Re\braked{V}{V_t} = \Re C(t) -\norm{S}^2$. Our refined speed limit thus takes the form  \begin{equation}
\label{result2}
    \tau \geq \tau_\textsc{ref},\quad \tau_\textsc{ref} = \frac{\sqrt{C(0)-\norm{S}^2}\arccos(\Re\frac{ C(\tau)-\norm{S}^2}{C(0)-\norm{S}^2})}{\V_\tau}.
\end{equation}
This OQSL is a generalization of \eqref{result1} since we can always trivially consider the case when $S = 0$. Also, since $\mathds{1}$ stays invariant throughout the flow generated by the Hamiltonian, we can always consider the choice $S = \braked{\mathds{1}}{A}\mathds{1}$. In the particular case when $\braked{\cdot}{\cdot}$ is the Hilbert-Schmidt inner product, this choice of $S$ reduces \eqref{result2} to the speed limit, denoted by $T_\Theta$, in \cite{Campaioli2018}.

What is worth noting is that the speed limit \eqref{result2} becomes tighter the larger the norm of $S$ is. We can understand this from a geometrical perspective. The speed limit is saturated whenever the traced-out curve of $V_t$ follows a great circle on the sphere $\S_\norm*{V}$. This means that the evolution will be contained in a two-dimensional subspace. If we then let $X$ and $Y$ be a pair of orthonormal vectors spanning this subspace, we must have that $V_t/\norm{V} = \cos{\theta(t)}X + \sin{\theta(t)}Y$, where $\theta(t)$ is some real-valued function of $t$.\footnote{In fact, if we choose $X$ and $Y$ so that $X = V$, then $\theta$ is the angle between $V$ and $V_t$ and is given by $\theta(t) = \arccos(\Re\frac{ C(\tau)-\norm{S}^2}{C(0)-\norm{S}^2})$.} Given that $S$ is non-zero, the corresponding curve of $\hat{A}$ must then be situated on an effective Bloch sphere spanned by the operators $X$, $Y$ and $S$.\footnote{We emphasize that the operators $S$, $X$ and $Y$ are orthogonal with respect to $\Re\inner{\cdot}{\cdot}$ and not necessarily the Hilbert-Schmidt inner product.} More explicitly, we have that $\hat{A} = \norm{V}\cos{\theta(t)}X + \norm{V}\sin{\theta(t)}Y + S$  will trace out a curve following a circle centered at $S$ (see figure \ref{sphere1}). 
\begin{figure}[t]
\centering\includegraphics[width=0.4\columnwidth]{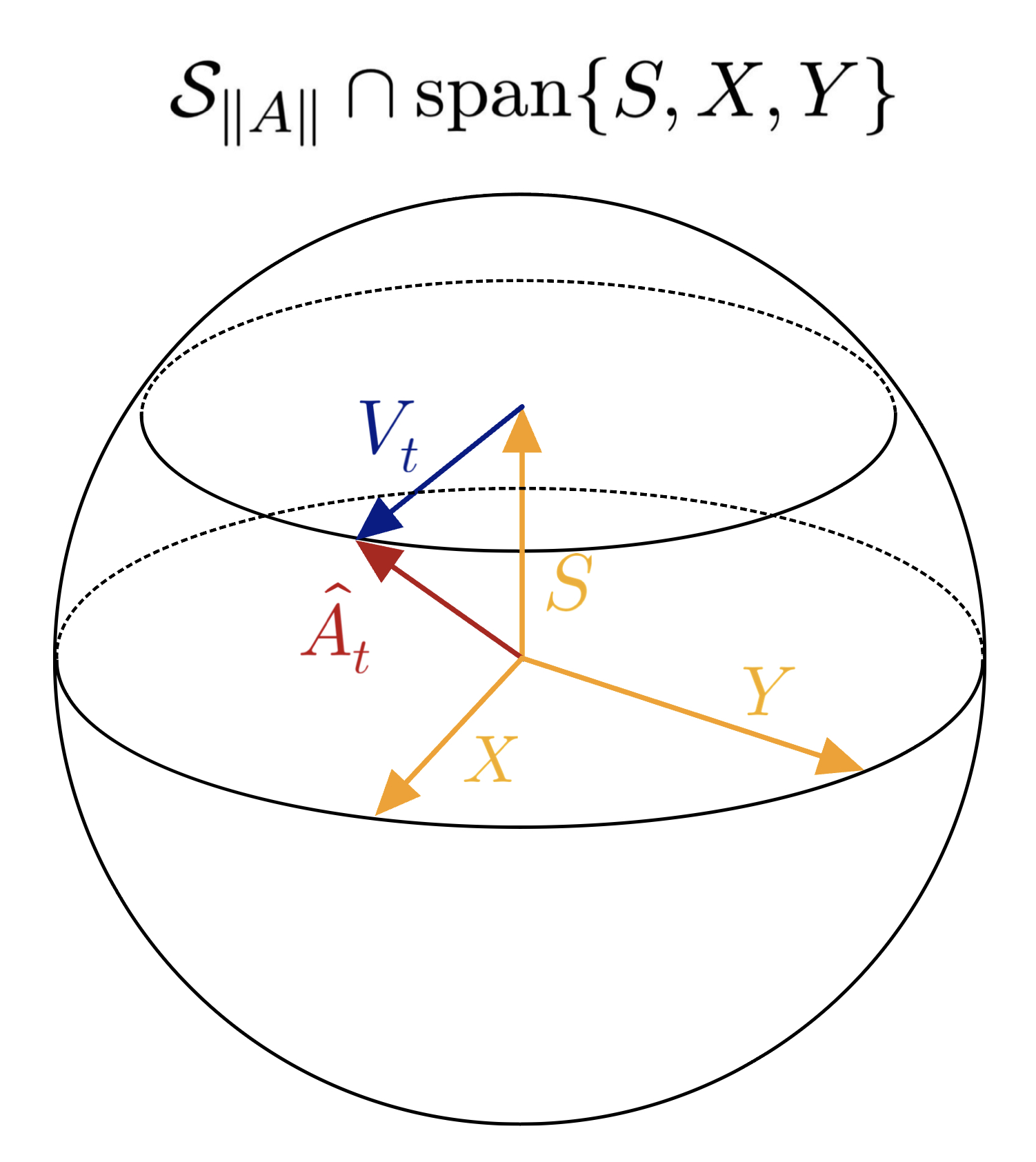}
\caption{
As the evolving operator saturates the refined speed limit, it will move along a circle that is displaced from the origin by $S$. As a consequence, the length of the traced-out curve will be strictly larger than the geodesic distance in $\S_\norm*{A}$. The numerator in \eqref{result2} is exactly the length of this traced-out curve, and we can thus conclude that the refined speed limit is strictly tighter than the original one, given that $S\neq 0$.}
\label{sphere1}
\end{figure}
Since this circle is not centered at the origin, it will not be a great circle on the sphere $\S_\norm*{A}$. A consequence of this is that the length of this curve must be strictly larger than the geodesic distance on $\S_\norm*{A}$. This length is precisely the numerator in \eqref{result2} and we can thus conclude that \eqref{result2} gives a strictly tighter inequality than \eqref{result1}. The difference between these inequalities becomes greater the larger $\norm{S}$ is, which can be seen in figure \ref{sphere2}.
\begin{figure}[t]
\centering\includegraphics[width=0.8\columnwidth]{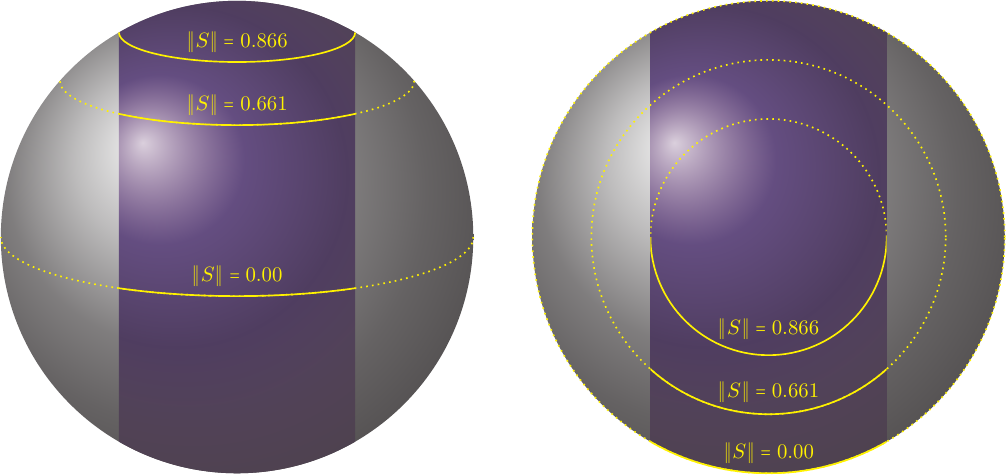}
\caption{
As $\norm{S}$ grows larger, the center of the circle that $\hat{A}_t$ follows will be closer to the poles of the sphere. Consequently, $\hat{A}_t$ moves in a more curved path, as highlighted by the yellow segments where the right figure shows a top view of the sphere, and have to travel further in order to reach the same angle. The result of this is that the refined speed limit becomes increasingly tight the larger $\norm{S}$ is.}
\label{sphere2}
\end{figure}

Consider the case when $\ker(\mathds{L})\cap\H_\P$ is invariant throughout the evolution, for example, when the Hamiltonian commutes with itself for any two points in time. We then have that the norm of $S$ is maximal when $S$ is equal to the orthogonal projection of $\hat{A}$ onto the subspace $\ker(\mathds{L})\cap\H_\P$---see Appendix \ref{app-E}.\footnote{The orthogonal projection here is with respect to the inner product $\inner{\cdot}{\cdot}$.} Calling this component $P_0$, we thus have that the tightest possible refinement, in this case, is given by
\begin{equation}
\label{result3}
    \tau\geq \tau_\textsc{oref}\geq \tau_\textsc{ref}\geq \tau_\textsc{qsl},\quad \tau_\textsc{oref} = \frac{\sqrt{C(0)-\norm{P_0}^2}\arccos(\Re\frac{ C(\tau)-\norm{P_0}^2}{C(0)-\norm{P_0}^2})}{\V_\tau}.
\end{equation}
We will refer to this as the optimal refinement of the OQSL.

\section{More on the Conditions for Saturation} \label{saturation}

As discussed in section \ref{OQSL}, the part of the evolution that induces a change in the time-correlation function will be situated on the sphere $\S_\norm*{A}$ in the effective Hilbert space $\H_\P$. The OQSL \eqref{result1} will then be saturated whenever the traced-out curve follows a great circle. In the case when we could find a decomposition $\hat{A}_t = S + V_t$, where $S$ and $V_t$ remain orthogonal, we could consider the refined speed limit \eqref{result2}, which is saturated if and only if $V_t$ follows a great circle on the sphere $\S_\norm*{V}$. We discussed that for saturation of \eqref{result2}, there exists a pair of orthonormal vectors $X$ and $Y$ such that $\hat{A} = \norm{V}\cos{\theta(t)}X + \norm{V}\sin{\theta(t)}Y + S$. If we choose $X = \hat{A}/\norm{A}$, then the function $\theta(t)$ is the angle between $V$ and $V_t$ with respect to the real-valued inner product $\Re\inner{\cdot}{\cdot}$ and is more explicitly given by $\theta(t) = \arccos(\Re\frac{ C(\tau)-\norm{S}^2}{C(0)-\norm{S}^2})$.

This section will discuss the conditions for saturation for two particular cases. In the first case, we consider the consequences of the operator $\hat{A}$ having support in only two of the eigenspaces of the Hamiltonian. In the second case, we assume $\hat{A}$ to be Hermitian, allowing us to draw connections between the saturation of the OQSLs and the dimension of an underlying Krylov space.

\subsection{Evolution with support in only two eigenspaces of the Hamiltonian}
\label{2-support}

Consider the case when the Hamiltonian commutes with itself at any two points in time. We will consider the case when $\hat{A}$ has non-zero support in only two of the eigenspaces of the Hamiltonian.\footnote{An operator $A$ is said to have non-zero support in a subspace $\X\subseteq\H$ if $\exists \ket{\psi}\in\X$ s.t. $A\ket{\psi}\neq 0$.} Let $E$ and $E'$ denote the energies of these two eigenspaces through time and let $\omega = E - E'$ be the energy gap. If $P_E$ and $P_{E'}$ are the corresponding eigenspace projectors, then $\hat{A} = P_E\hat{A}P_E + P_E\hat{A}P_{E'} + P_{E'}\hat{A}P_E + P_{E'}\hat{A}P_{E'}$ and it is straight forward to check that $\mathds{L} P_{E}\hat{A}P_{E'} = \omega P_{E}\hat{A}P_{E'}$, $\mathds{L} P_{E'}\hat{A}P_{E} = -\omega P_{E'}\hat{A}P_{E}$ and $\mathds{L} P_E\hat{A}P_E = \mathds{L} P_{E'}\hat{A}P_{E'} = 0$. In other words, $\hat{A}$ is spanned by the eigenspaces of the Liouvillian with eigenvalues 0, $\omega$, and $-\omega$. Let $P_0$, $P_{\omega}$, and $P_{-\omega}$ be the corresponding projections of $\hat{A}$ onto these three eigenspaces. More explicitly, we have in this specific case that $P_0 = P_{E}\hat{A}P_{E} + P_{E'}\hat{A}P_{E'}$, $P_\omega = P_{E}\hat{A}P_{E'}$ and $P_{-\omega} = P_{E'}\hat{A}P_{E}$ and we can write $\hat{A} = P_0 + P_{\omega} + P_{-\omega}$. The evolution of $\hat{A}$ is given by
\begin{equation}
\label{optimal dynamics}
    \begin{split}
        \hat{A}_t &= P_0 + e^{i\int_0^t\omega(t') dt'}P_{\omega} + e^{-i\int_0^t\omega(t') dt'}P_{-\omega}\\
        &= P_0 + \cos(\int_0^t\omega(t') dt')\big(P_\omega + P_{-\omega}\big) + \sin(\int_0^t\omega(t') dt')\big(iP_{\omega} - iP_{-\omega}\big)\\
        &= P_0 + \cos{\theta(t)}X_\omega + \sin{\theta(t)}Y_\omega.
    \end{split}
\end{equation}
Here, we have introduced the non normalized operators $X_\omega = P_{\omega}+P_{-\omega}$ and $Y_\omega = iP_{\omega}-iP_{-\omega}$ and  the angle $\theta(t) = \int_0^t\omega(t') dt'$  between $\hat{A}-P_0$ and $\hat{A}_t-P_0$. The requirement that $\norm{A_t}$ is constant in the interval $[0,\tau]$ implies that $P_0$, $X$ and $Y$ are orthogonal and that $X$ and $Y$ have the same norm---see Appendix \ref{app-F}.\footnote{Note that we never need to assume that $\hat{A}$ is Hermitian in order to conclude this.} We can thus conclude that $\hat{A}_t-P_0$ moves along a great arc and thus saturates the optimal refined speed limit \eqref{result3}. One of the consequences of this is that any qubit system with a commutative Hamiltonian must satisfy the speed limit \eqref{result3}.

The above result can be extended to non-commuting Hamiltonians that keep the eigenspaces corresponding to $E$ and $E'$ invariant. One might wonder whether the Hamiltonian must keep these eigenspaces invariant for the evolving operator $\hat{A}_t$ to achieve saturation. The answer is no. To see this, we can consider the commuting Hamiltonian $H_t$ above and add to it any non-trivial Hamiltonian $\Tilde{H}_t$ that commutes with $\hat{A}_t$ for all times $t\in[0,\tau]$. The Hamiltonian $H_t + \Tilde{H}_t$ then generates the same path for $\hat{A}_t$ as the Hamiltonian $H_t$. The difference is that $H_t + \Tilde{H}_t$ will not keep the eigenspaces of $E$ and $E'$ invariant.

\subsection{Relation to Krylov dimension for Hermitian operators}
\label{Krylov}

When $\hat{A}$ is Hermitian, it can be illuminating to describe the saturation conditions in terms of the dimension of the Krylov space of the evolving operator $\hat{A}_t$. The Krylov space is defined to be the smallest subspace containing the evolution. In the Hermitian case, this is a real vector space. We can then conclude that saturation of \eqref{result1} happens if and only if the dimension of the Krylov space is equal to two. Similarly, given that $S\neq 0$, we have that a necessary condition for saturation of \eqref{result1} is that the dimension of the Krylov space is equal to three.

In the case when the Liouvillian is time-independent, the Krylov dimension is given by the number of eigenspaces of $\mathds{L}$ that $\hat{A}$ has support in---see Appendix \ref{app-D}. If one of these eigenspaces is the kernel of $\mathds{L}$, then we can say that the bound will be saturated if and only if the Krylov dimension is smaller or equal to three.

\section{Hamiltonian Flow Equations in Continuous Renormalization Group}

Operator flows are ubiquitous in physics and are not restricted to time evolution.  In this section, we show that the continuous renormalization group provides an arena in which Hamiltonian flows naturally occur and where OQSLs are of relevance. 

As a preamble, we note that the continuous renormalization group is also extensively used in the study of the complexity of quantum states. For instance, the continuous version of the Entanglement Renormalization tensor networks, cMERA \cite{Haegeman13,Nozaki2012}, implements a real space renormalization in which the flow of a quantum state is described as a function of a continuous parameter characterizing the length scale. A cMERA Hamiltonian generates translations of  the cMERA parameter. The use of conventional QSL has been explored in this context to investigate the complexity of states in quantum field theory  \cite{Molina-Vilaplana2018}, using the path integral description of tensor networks \cite{Caputa17,Caputa17b}. These works result from an effort to characterize the growth of quantum complexity in quantum field theories, a context in which conventional QSLs had been  applied  \cite{Brown16,Brown16b}. Indeed, all these results concern flows of quantum states and can thus be tackled with conventional QSL.

In this section, we consider  a different framework for the continuous renormalization group as a paradigmatic example and test bed for our result, the OQSL \eqref{result1}. Specifically, we focus on the Hamiltonian flow formulated by Wegner \cite{Wegner94} and Glazek and Wilson \cite{GlazekWilson93,GlazekWilson94}, as a  method for Hamiltonian block-diagonalization \cite{Wegner2001,Kehrein2007}. Consider the Hamiltonian flow $H(l)=U(l)H(0)U(l)^\dagger$ with respect to some parameter $l$, where $U(l)$ is a unitary operator satisfying $U(0)=\mathds{1}$, and $H(0)$ is the  Hamiltonian of which the block diagonal form is desired. The flowing Hamiltonian satisfies the differential equation
\begin{equation}\label{Wegnerflow}
	\frac{dH(l)}{dl}=[\eta(l),H(l)],
\end{equation}
where $\eta(l)=\frac{dU(l)}{dl}U(l)^\dagger$ is the $l$-dependent generator of the unitary flow. For specific choices of $\eta(l)$, the initial Hamiltonian $H(0)$ eventually flows to its (block)-diagonal form, thus achieving the desired diagonalization. At each point of the flow, let us define the \textit{target} Hamiltonian $H_T(l)$ as the diagonal part of $H(l)$, 
\begin{equation} \label{targetH}
	H_T(l)=\sum_n\epsilon_n(l) \ket{n}\bra{n},
\end{equation}
where we have defined $\epsilon_n(l)\equiv H_{nn}(l)$. One instance of a generator that achieves this is $\eta=[H_T,H]$, as originally proposed in \cite{Wegner94,Wegner2001}. We shall refer to this specific choice as the Wegner flow, for simplicity, even when any Hamiltonian flow described by (\ref{Wegnerflow}) and converging to $H_T(l)$ is generally referred to as a Wegner flow in the literature. Clearly, the choice $\eta=[H_T,H]$  is not the only possibility and we shall consider a different choice  in an example below, that associated with the so-called Toda flow. Let us stress that the $l$-dependent diagonal entries $\epsilon_n(l)$ in Eq.~\eqref{targetH} do \textit{not} correspond to the Hamiltonian eigenvalues unless we have reached the end of the flow $l=l_f$. In that case, the flowing Hamiltonian has been transformed into its diagonal form to coincide with the target part, $H(l_f)=H_T(l_f)$. The final $l_f$ is typically reached as $l\to\infty$, as shown explicitly in the practical example below. As a result,  implementations of the Hamiltonian flow generally terminate at a finite value before the asymptotic diagonal Hamiltonian is exactly reached. This underlines the importance of exploring the Hamiltonian flow along the process, e.g., as characterized by the OQSL.

By applying our general result \eqref{result1} to the evolution \eqref{Wegnerflow}, with $t\to l$, $A_t\to H(l)$ and $H(t)\to -i\eta(l)$, we obtain the following OQSL on the Hamiltonian flow
\begin{equation}\label{QSLWegner}
l \geq l_\textsc{qsl},\quad l_\textsc{qsl} = \frac{\|H\|\arccos{\frac{\braked{H(0)}{H(l)}}{\norm{H}^2}}}{\mathcal{V}_{l}},
\end{equation}
where the speed, averaged over the interval $[0,l]$, is given by
\begin{equation}
    \mathcal{V}_{l} = \frac{1}{l}\int_{0}^{l} \norm{[\eta(t),H(t)]} dt.
\end{equation}
For the rest of this section, we will only consider the case when $\braked{\cdot}{\cdot}$ is the Hilbert-Schmidt inner product. In this case we have that $\norm{[\eta,H]} = \Tr([\eta,H]^2)$.
We first characterize the Hamiltonian flow equations with the conventional choice of the generator due to Wegner, bringing out a formal analogy with dephasing in open systems and pointing out the differences. Next, we introduce an alternative choice of the generator that gives rise to the Toda flow. We then show that, while the Wegner flow does not saturate the OQSL, the Toda flow can do so under the given conditions that we identify.

\subsection{Dephasing-like Wegner flow}

Before applying the OQSL \eqref{QSLWegner} in an explicit example, we show that the Wegner flow, although unitary, features formal similarities with a dephasing evolution, under which the off-diagonal entries of the density matrix $\rho(t)$ decay and the fixed point is set by its diagonal part only. The crucial difference is that in the Wegner flow, the diagonal part evolves as well, and it does so in such a way that the total evolution is unitary and the final diagonal form coincides with the diagonalized initial Hamiltonian. The equation of motion governing the decay of the off-diagonal elements of the Hamiltonian is well-known \cite{Wegner2001,Kehrein2007}.
Here we aim to characterize the decay of the off-diagonal elements $H_{nm}(l)$ with $n\neq m$ and reveal its formal analogy with dephasing. To this end, let us adopt the approach of vectorization \cite{Uzdin2016} and represent an operator $A=\sum_{n,m}A_{nm}\ket{n}\bra{m}$ as a normalized vector
\begin{equation}\label{vector}
	\ket{A}=\frac{1}{\|A\|}\sum_{n,m}A_{nm}\ket{n,m},
\end{equation}
where $|n,m\rangle=\ket{n}\otimes\ket{m}$ and the normalization factor has been introduced to enhance the comparison with the evolution of a quantum state $\rho$. Now, since we are interested in the decay of the off-diagonal elements, let us introduce the (super)projector $\mathds{Q}$ over the non-diagonal part of the Hamiltonian, 
\begin{equation} \label{super-projector}
	\mathds{Q}=\mathds{1}_{\mathcal{B}}-\mathds{P}_{H_T},
\end{equation}
where $\mathds{1}_{\mathcal{B}}$ is the identity on the operator space $\mathcal{B}$ and $\mathds{P}_{H_T}$ is the (super)projector over the target Hamiltonian
\begin{equation} \label{projector-target}
	\mathds{P}_{H_T}=\ketbra{H_T}{H_T}=\frac{1}{\|H_T\|^2}\sum_{n,m}\epsilon_n\epsilon_m\ket{n,n}\bra{m,m}.
\end{equation}
From this expression, it follows that $\mathds{P}_{H_T}\ket{H_T}=\ket{H_T}$ and $\mathds{Q}\ket{H_T}=0$. To characterize the dephasing-like decay realized by the Wegner flow, let us consider the overlap between the flowing Hamiltonian and its non-diagonal part
\begin{equation}\label{defAq}
	\mathcal{A}_\mathds{Q}(l)\equiv \bra{H(l)}\mathds{Q}\ket{H(l)},
\end{equation}
which quantifies how far $H(l)$ is from being diagonal and vanishes as $l\to l_f$, that is as $H(l)\to H_T(l)$. By substituting Eqs.~\eqref{super-projector} and \eqref{projector-target} we obtain
\begin{equation}
	\mathcal{A}_\mathds{Q}(l)=1-\frac{\|H_T\|^2}{\|H\|^2},
\end{equation}
where we stress that $\|H_T\|^2=\sum_n\epsilon^2_n(l)$ depends on $l$. The quantity $\mathcal{A}_\mathds{Q}$ identically vanishes at the end of the flow, when the Hamiltonian is diagonalized and $H(l)=H_T(l)$. Remarkably, if we choose the Wegner generator to be $\eta=[H_T,H]$ \cite{Wegner2001}, then $\mathcal{A}_\mathds{Q}$ decays monotonically. Indeed, its decay rate reads as
\begin{equation}
	\frac{d}{dl}\mathcal{A}_\mathds{Q}(l)=-\frac{1}{\|H\|^2}\frac{d}{dl}\sum_n\epsilon^2_n=\frac{1}{\|H\|^2}\frac{d}{dl}\sum_{n\neq m}|H_{nm}|^2,
\end{equation}
where the last inequality follows from the conservation of the total norm, $\frac{d}{dl}\|H\|=0$. If $\eta=[H_T,H]$ \cite{Wegner2001}, it is straightforward to compute that
\begin{equation}
	\frac{d H_{nm}}{dl}=\sum_k (\epsilon_n + \epsilon_m- 2\epsilon_k) H_{nk}H_{km}
\end{equation}
and therefore
\begin{equation}\label{decoffd}
\frac{d}{dl}\mathcal{A}_\mathds{Q}(l)= - \frac{2}{\|H\|^2} \sum_{n,m}(\epsilon_n-\epsilon_m)^2 |H_{nm}|^2 <0.
\end{equation}
As a result, the overlap \eqref{defAq}, quantifying how far we are from the target, is monotonically decreasing during the flow. This feature suggests an analogy with the well-known model of dephasing, where the purity is found to decrease monotonically \cite{Lidar2006}, in a similar manner as in Eq.~\eqref{decoffd}. To show this, let us consider the case of pure dephasing
\begin{equation}
\frac{d\rho}{dt} = -[X,[X,\rho]],
\end{equation}
where $X$ is a time-independent Hermitian operator satisfying $X\ket{n} = x_n \ket{n}$. Then it can be shown \cite{Lidar2006} that the purity $\Tr \rho^2$ decreases monotonically with time as
\begin{equation} \label{purity-decay}
\frac{d\Tr\rho^2}{dt} = -2\sum_{nm} (x_n-x_m)^2 |\rho_{mn} |^2.
\end{equation}
Now, Eq.~\eqref{decoffd} can be also expressed as a ``purity''-decay of the off-diagonal Hamiltonian $H_\text{off-diag}(l)=H(l)-H_T(l)$ and it reads as
\begin{equation} \label{off-diag-decay}
\frac{d\Tr H_\text{off-diag}^2}{dl} = \frac{d}{dl}\sum_{i\neq j}|H_{ij}|^2=- 2\sum_{i,j}(\epsilon_i-\epsilon_j)^2 |({H_\text{off-diag}})_{ij}|^2,
\end{equation}
which formally corresponds to Eq.~\eqref{purity-decay} with $\rho\to H_\text{off-diag}$ and $X\to H_T$. We conclude that the Wegner flow \eqref{Wegnerflow} generated by $\eta=[H_T,H]$,
\begin{equation}
	\frac{dH}{dl}=[[H_T,H],H],
\end{equation}
which is unitary, suppresses the off-diagonal part of the Hamiltonian as if it was undergoing a dephasing evolution under the action of the diagonal part.

\subsection{Wegner and Toda flows}

As already advanced, there are several choices of the generator $\eta$ 
for diagonalizing a given $N\times N$ matrix $H(0)$.
As a possible form, consider 
\begin{equation}
\label{Tadaeta}
  \eta_{nm}(l)=H_{nm}(l){\rm sgn}\,(m-n),
\end{equation}
for $m\ne n$ and $\eta_{nn}=0$.
Equation (\ref{Wegnerflow}) then reduces to
\begin{equation}
  \frac{dH_{nm}(l)}{dl}=\sum_k H_{nk}(l)H_{km}(l)
  \left[{\rm sgn}\,(k-n)-{\rm sgn}\,(m-k)\right].
\end{equation}
Further, assume that the matrix $H(l)$ takes a symmetric
tridiagonal form.
Then, the equations for diagonal and off-diagonal components
are written respectively as 
\begin{eqnarray}
  && \frac{dH_{nn}(l)}{dl}=2(H_{n,n+1}^2(l)-H_{n-1,n}^2(l))
  \qquad (n=1,2,\dots,N), \label{toda1}\\
  && \frac{dH_{n,n+1}(l)}{dl}=H_{n,n+1}(l)(H_{n+1,n+1}(l)-H_{nn}(l))
  \qquad (n=1,2,\dots,N-1), \label{toda2}
\end{eqnarray}
with $H_{01}=H_{N,N+1}=0$.
This set of equations takes a closed form and is known as
the Toda equations in classical nonlinear integrable
systems~\cite{Toda1967a, Toda1967b, Flaschka1974}. We thus refer to the Hamiltonian flow generated by (\ref{Tadaeta}) as the Toda flow. 

We note that Eq.~(\ref{decoffd}) is not satisfied 
in the present choice of $\eta$.
However, it is still guaranteed that 
the matrix is diagonalized at large $l$
due to the relation 
\begin{equation}
  \frac{d}{dl}\sum_{n=1}^k H_{nn}(l)=2H_{k,k+1}^2(l)\ge 0,
\end{equation}
where $k=1,2,\dots,N$~\cite{Moser1975, Monthus2016}. 
The relation for $k=1$ denotes that $H_{11}(l)$
is a non-decreasing function.
Since $\Tr H^2(l)$ is independent of $l$,
each component of $H(l)$ is not divergent, 
if each component of the original matrix $H_0$ takes a finite value.
We conclude that $\lim_{l\to\infty}H_{11}(l)$ converges to a finite value 
and $\lim_{l\to\infty}H_{12}(l)=0$.
Then, we examine the relation for $k=2$ to
conclude that $\lim_{l\to\infty}H_{22}(l)$ takes a finite value
and $\lim_{l\to\infty}H_{23}(l)=0$.
We can repeat the same consideration for the other values of $k$
to conclude that
$H(l)$ is diagonalized at $l\to\infty$ keeping the eigenvalues
of the matrix unchanged.

A possible realization of the tridiagonal matrix is
the one-dimensional XY model with isotropic interaction \cite{Okuyama2017}, 
\begin{equation}
  H(l)=\frac{1}{2}\sum_{n=1}^{N-1} v_n(l)\left(X_nX_{n+1}+Y_nY_{n+1}\right)
  +\frac{1}{2}\sum_{n=1}^Nh_n(l)Z_n.
\end{equation}
In the $z$-basis, the second term represents the diagonal part 
and the first term represents the off-diagonal part.
The corresponding generator of the time evolution is 
\begin{equation}
  \eta(l)=\frac{i}{2}\sum_{n=1}^{N-1} v_n(l)\left(X_nY_{n+1}-Y_nX_{n+1}\right),
\end{equation}
and the set of coupling functions
$\{v_1(l),v_2(l),\dots,v_N(l),h_1(l),h_2(l),\dots, h_{N-1}(l)\}$
satisfies the Toda equations
\begin{eqnarray}
  && \frac{dh_n(l)}{dl}=2(v_n^2(l)-v_{n-1}^2(l)), \\
  && \frac{dv_n(l)}{dl}=v_n(l)(h_{n+1}(l)-h_{n}(l)).
\end{eqnarray}
This Hamiltonian commutes with the total magnetization
$M=\sum_{n=1}^N Z_n$ and the matrix form of
the single flip sector with $M=\pm (N-2)$
takes a tridiagonal form.

\begin{figure}[t]
\centering\includegraphics[width=0.8\columnwidth]{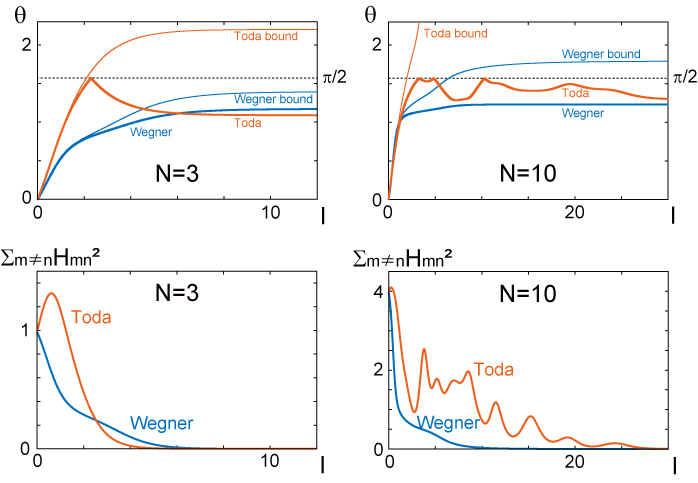}
\caption{
OQSLs for the Wegner and Toda flows.
The upper panels represent 
$\theta(l)=\arccos (\Tr(H(0)H(l))/\norm{H(0)}^2)$ (bold lines)
and its bound $\int_0^l ds\,||[\eta(s),H(s)]||/||H||$ (thin lines)
for Wegner and Toda flows
with $N=3$ (left panel) and $N=10$ (right).
We set the initial matrix as a symmetric tridiagonal form
and each component is taken from a uniform random number between $-1$ and $1$.
In the lower panels, we plot the sum of off-diagonal components 
$\sum_{m\ne n} H_{mn}^2(l)$.
}
\label{wt1}
\end{figure}

We plot examples of the Wegner and Toda flows in Fig.~\ref{wt1}.
We take a traceless symmetric tridiagonal matrix as an initial given Hamiltonian $H(0)$ 
 in which each nonzero component is taken from a uniform random number
between $-1$ and $1$.
The numerical results in Fig.~\ref{wt1} imply that 
the Toda flow gives a tight bound for a small $l$ and becomes worse
for a large $l$ due to the nonmonotonic decay of
the off-diagonal components.

In Fig.~\ref{wt2}, we show how the result is dependent on
the dimension of the matrix $N$. 
For the Wegner flow, when $N$ is not considerably large, 
the angle $\theta(l)$ between $H(0)$ and $H(l)$ grows faster
by the $l$-evolution as $N$ becomes large.
In the Wegner flow, even though we start the time evolution
from a tridiagonal form, 
the matrix breaks the band structure during the flow,
which makes $\theta$ a large value.
It does not necessarily give the property that
the overlap $(H(0)|H(l))$ decays rapidly as a function of $N$,
as we can see in some many-body systems exhibiting the orthogonality catastrophe.
On the other hand, the Toda flow does not show any growing behavior
as a function of $N$.
This is due to the property that the tridiagonal form is kept
throughout the time evolution.
As for $l_\textsc{qsl}$, a saturating behavior is seen for the Wegner flow
and is not seen for the Toda flow.

\begin{figure}[t]
\centering\includegraphics[width=0.8\columnwidth]{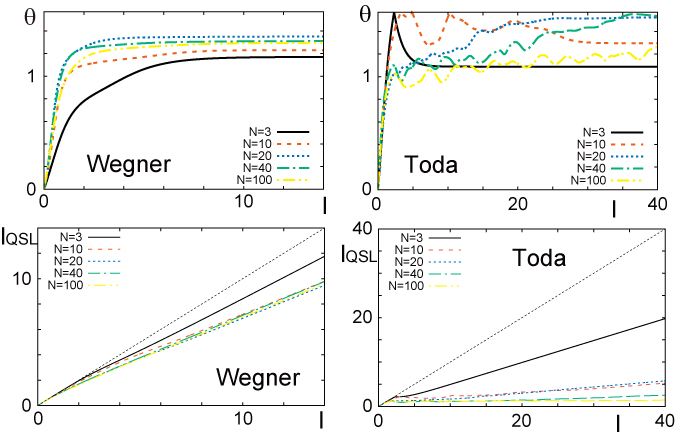}
\caption{
Plot of the $\theta(l)=\arccos (\Tr(H(0)H(l))/\norm{H(0)}^2)$
(top panels) and $l_\textsc{qsl}$ (bottom panels)
for several values of $N$.
We set the initial matrix as a symmetric tridiagonal form
and each component is taken from a uniform random number between $-1$ and $1$.
The left panels represent the Wegner flow, while
the right panels correspond to the Toda flow.
}
\label{wt2}
\end{figure}

The independence of the Toda flow on the matrix dimension implies that we can find a tight OQSL
for specific initial Hamiltonians.
As we have discussed in section \ref{2-support},
saturation is possible when the flowing operator has support
in only two of the eigenspaces of the generator.
In what follows, we will consider the condition that the eigenvectors of the generator $\eta(l)$
are $l$-independent.

Hereafter, we write $H_{nn}(l)=h_n(l)$ and
$H_{n,n+1}(l)=v_n(l)$.
The eigenvalue equation  
$\eta(l)|\varphi\rangle=i\lambda(l)|\varphi\rangle$, 
with a real eigenvalue $\lambda(l)$ and the corresponding
eigenvector $|\varphi\rangle$ is written as 
\begin{equation}
\label{matrix}
\left(\begin{array}{ccccc}
\varphi_2 & 0 &  & &\\
-\varphi_1 & \varphi_3 & 0 &  &\\
0 & -\varphi_2 & \varphi_4 &   &\\
&&\ddots&\ddots&\\
&&0 & -\varphi_{N-2} & \varphi_{N}
\end{array}\right)
\left(\begin{array}{c}
v_1(l) \\ v_2(l) \\ v_3(l) \\ \vdots \\ v_{N-1}(l)
\end{array}\right)
 = i\lambda(l) \left(\begin{array}{c}
\varphi_1 \\ \varphi_2 \\\varphi_3 \\ \vdots \\ \varphi_{N-1}
\end{array}\right),
\end{equation}
and $-v_{N-1}(l)\varphi_{N-1}=i\lambda(l)\varphi_N$,
where $(\varphi_1,\varphi_2,\dots,\varphi_N)$ denotes
the $l$-independent eigenvector $|\varphi\rangle$.
When the diagonal components of the matrix on the left-hand side
are nonzero, the matrix is invertible and
$v_n(l)$ for any index $n$ is proportional
to the same $l$-dependent function $\lambda(l)$.
The possibility that some of the components of $|\varphi\rangle$
are identically zero is excluded since that condition only results in
$|\varphi\rangle=0$.

As an exceptional case,
we can find the eigenvector with $\lambda(l)=0$
when $N$ is odd.
In that case, the eigenvector is written as
\begin{equation}
|\varphi\rangle
\propto\left(1,0,\frac{v_1(l)}{v_2(l)},0,
\frac{v_3(l)v_1(l)}{v_4(l)v_2(l)},0,\dots,0,
\frac{v_{N-2}(l)v_{N-4}(l)\cdots v_1(l)}{v_{N-1}(l)v_{N-3}(l)\cdots v_2(l)}
\right)^{\rm T}.
\end{equation}
The $l$-independence of $|\varphi\rangle$ gives 
the conditions $v_{2k}(l)\propto v_{2k-1}(l)$ with $k=1,2,\cdots, (N-1)/2$.

We note that the eigenvector with $\lambda(l)=0$ is unique if it exists.
Therefore, when we impose the condition that
two of the eigenvectors of $\eta(l)$ are $l$-independent, we are able to invert the matrix in \eqref{matrix} for at least one of the eigenvectors. We can thus conclude in this case that the dependence of $\eta(l)$ on $l$ will be described by a single function $f(l)$, proportional $\lambda(l)$.
Each of the nonzero components is written as
\begin{equation}
 v_n(l)=f(l)v_n. \label{fl}
\end{equation}

We insert the condition (\ref{fl})
into Eqs.~(\ref{toda1}) and (\ref{toda2}) to find 
\begin{eqnarray}
 && h'_n(l)=2f^2(l)(v_n^2-v_{n-1}^2), \label{heq}\\
 && f'(l)=f(l)(h_{n+1}(l)-h_n(l)),  \label{veq}
\end{eqnarray}
where the prime symbol denotes the derivative with respect to $l$.
The second equation (\ref{veq}) shows that
$h_n(l)$ is a linear function in $n$.
Since the constant shift $h_n(l)\to h_n(l)+h_0$
does not change Eqs.~(\ref{toda1}) and (\ref{toda2}),
we set $\sum_{n=1}^N h_n(l)=0$ and obtain 
\begin{equation}
  h_n(l)=\frac{f'(l)}{f(l)}\left(n-\frac{N+1}{2}\right).
\end{equation}
We use this form for the first equation (\ref{heq}).
Then, $v_n^2(l)$ is a quadratic function of $n$ and 
$f(l)$ obeys the differential equation
\begin{equation}
\left(\frac{f'(l)}{f(l)}\right)'=-2d_1f^2(l), \label{fp}
\end{equation}
where $d_1$ represents a constant.
The corresponding form of $v_n$ is 
\begin{equation}
v_n^2=\frac{1}{2}d_1n(N-n)+d_0,
\end{equation}
where $d_0$ represents a constant.

To determine $d_0$, we look at the following condition, which follows from the conservation of the norm 
\begin{equation}
 \sum_{n=1}^N h_n^2(l)+2\sum_{n=1}^{N-1}v_n^2(l)={\rm const}. \label{square}
\end{equation}
Without losing the generality, we can put the form $f(l)=\cos\theta(l)$.
Using Eqs.~(\ref{fp}) and (\ref{square}), we find
as the possible solution 
\begin{eqnarray}
&& d_0=0, \\
&& \left(\frac{\theta'(l)}{\cos\theta(l)}\right)^2=2d_1.
\end{eqnarray}
The $l$ dependence of each component is specified as 
$h_n(l)=h_n\sin\theta(l)$ and $v_n(l)=v_n\cos\theta(l)$.
By representing $d_1$ with respect to $h_1$, we finally obtain 
\begin{eqnarray}
&& h_n(l)=-\frac{2h_1}{N-1}\left(n-\frac{N+1}{2}\right)\sin\theta(l), \\
&& v^2_n(l)=\frac{n(N-n)}{(N-1)^2}h_1^2\cos^2\theta(l),
\end{eqnarray}
and 
\begin{equation}
 \frac{\theta'(l)}{\cos\theta(l)}=\frac{2h_1}{N-1}.
\end{equation}
The differential equation for $\theta(l)$ is easily solved as 
\begin{equation}
  \sin\theta(l)=
  \frac{\sinh\left(\frac{4h_1}{N-1}l\right)
    +\sin\theta(0)\cosh\left(\frac{4h_1}{N-1}l\right)}
       {\cosh\left(\frac{4h_1}{N-1}l\right)
         +\sin\theta(0)\sinh\left(\frac{4h_1}{N-1}l\right)}. 
\end{equation}
In Fig.~\ref{toda},
we plot $\{h_n\}_{n=1,2,\cdots,N}$, $\{v_n\}_{n=1,2,\cdots,N-1}$,
and $\theta(l)$ for $N=20$.

\begin{figure}[t]
    \centering\includegraphics[width=0.8\columnwidth]{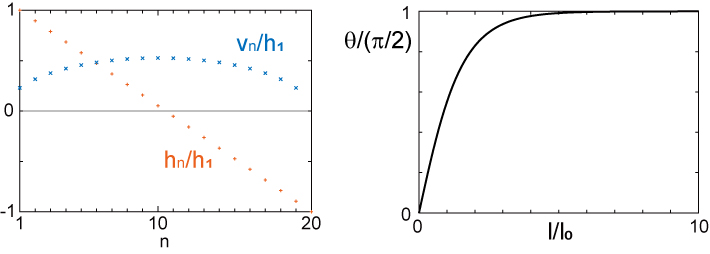}
    \caption{
    Left: The parameters of the initial matrix $H$,
    $\{h_n\}_{n=1,2,\cdots,N}$ and $\{v_n\}_{n=1,2,\cdots,N-1}$,
    resulting in the tight bound in the Toda flow with $N=20$.
    Right: $\theta(l)=\arccos (\Tr(H(0)H(l))/\norm{H(0)}^2)$.
    We set $l_0=(N-1)/4h_1$ and $\sin\theta(0)=0$.}
    \label{toda}
\end{figure}

All components of the matrix $H(l)$ are parameterized by
a single $l$-dependent function $\theta(l)$,
which implies that the time evolution can be denoted by a motion
along an arc in the Bloch space.
In fact, we find that $\theta(l)$ denotes the operator angle
and the dynamics gives the tight bound: 
\begin{equation}
 {\rm arccos}\,\left|\frac{\Tr (H(l)H(0))}{\Tr ( H(0)^2)}\right|
 = \int_0^ldt\,\sqrt{
   \frac{\Tr ([\eta(t),H(t)]^2)}{\Tr (H(0)^2)}}
 = \theta(l)-\theta(0). 
\end{equation}

\section{Operator Growth and Krylov Complexity}

In the previous section, we have considered the flow of an observable, the Hamiltonian, with respect to a parameter different than time. Here we illustrate another application of the OQSL, pointing out that operator flows need not necessarily concern an observable. In particular, given a Liouvillian operator $\mathds{L} = [H,\cdot]$, we show that the geometrical OQSL \eqref{result1} can also be applied to the unitary flow of a superoperator, generated under the action of $\mathds{S}=[\mathds{L}, \cdot]$, which can be accordingly viewed as a ``super Liouvillian''. This kind of flow arises naturally in characterizing the complexity of a given quantum evolution. Specifically, in the context of operator growth, the notion of Krylov complexity 
\cite{Parker2019,Barbon2019,Caputa2019,Dymarsky20,Rabinovici2021,Hornedal22}
 has recently gained attention as a measure of operator complexity for the Heisenberg evolution of an observable under the action of a time-independent Hamiltonian. The evolution of simple, local observables into increasingly complex and nonlocal ones can be described as the operator spreading in the so-called Krylov space. As mentioned in section \ref{Krylov}, the latter provides the minimal subspace in which the Heisenberg dynamics unfolds and is uniquely determined by the Hamiltonian of the system and the initial operator $O_0$. Krylov complexity can then be understood as the mean position of the evolving operator $O_t$ in the so-called Krylov basis. It can be expressed as an expectation value $\braked{O_t}{\mathds{K}O_t}$ of a corresponding (super)operator $\mathds{K}$, known as the complexity operator. At this point, one can again change representation and let the superoperators, such as $\mathds{K}$, evolve while keeping the observable $\ked{O}$ fixed. We shall call this representation the super-Heisenberg picture from the evident analogy with the standard Heisenberg representation of the quantum evolution in the Hilbert space. In this picture, the complexity operator evolves accordingly to the equation
\begin{equation} \label{Kflow}
    \dot{\mathds{K}}=i[\mathds{L},\mathds{K}]
\end{equation}
and the corresponding unitary flow is constrained by the speed limit \eqref{result1}, upon identifying $A$ and $H$ with $\mathds{K}$ and $\mathds{L}$, respectively. We note that our result \eqref{result1} holds for finite dimensions and that the dimension of the Krylov space is always finite whenever the Hilbert space that the observables are defined over is finite \cite{Rabinovici2021}. 

\subsection{Quantum dynamics in Krylov space}

Let us start by briefly recalling how the Krylov space and the corresponding notion of complexity are constructed. For a more detailed discussion, we refer to \cite{Parker2019,Barbon2019,Caputa2019,Dymarsky20,Rabinovici2021,Hornedal22}. The evolution in the Heisenberg picture of an operator $O_t=e^{iHt}O_0e^{-iHt}$ can be formally written in terms of the nested commutators with the Hamiltonian $H$, that is, the powers of the Liouvillian $\mathds{L} = [H,\cdot]$, as $O_t = \sum_{n=0}^\infty \frac{(it)^n}{n!}\mathds{L}^nO$. The space explored during this evolution is given by the span of the infinite set $\{\mathds{L}^nO\}_{n=0}^\infty$ and is precisely the Krylov space. From this infinite set, one can extract an orthonormal, finite basis $\{O_n\}_{n=0}^{D-1}$ by applying the so-called Lanczos algorithm. The first element of the basis coincides with the initial operator $O_0$, which we will assume to be normalized to one. Then, at each iterative step, the next orthogonal vector is constructed as $\ked{A_{n+1}} = \mathds{L}\ked{O_n}-b_{n}\ked{O_{n-1}}$, where $b_n = \norm{A_n}$ is the $n$-th Lanczos coefficient, and the corresponding element of the Krylov basis $\ked{O_{n+1}}$ is obtained upon normalization as $O_n=A_n/b_n$. Throughout this section, we will use the Hilbert-Schmidt inner product $\braked{A}{B}=\Tr A^\dagger B$ between operators. By making use of the Krylov space, the unitary evolution of the operator $O_t$ is effectively mapped to a hopping problem on the one-dimensional, semi-infinite chain represented by the Krylov basis $\{O_n\}_{n=0}^{D-1}$, where the Lanczos coefficients $b_n$ play the role of hopping parameters and the Liouvillian, which takes the tridiagonal form
\begin{equation}\label{LiouvK}
    \mathds{L}=\sum_{n=0}^{D-1} b_{n+1}\ked{O_{n+1}}\brad{O_n}+b_n\ked{O_{n-1}}\brad{O_{n}},
\end{equation}
with $\ked{O_{-1}}=\ked{O_D}=0$, acts as an Hamiltonian for the so-called operator wavefunction $\ked{O_t}$. The Krylov complexity operator $\mathds{K}$ is then defined as the position operator
\begin{equation} \label{K}
    \mathds{K}=\sum_{n=0}^{D-1} n \ked{O_n}\brad{O_n},
\end{equation}
on this lattice. The most studied object in this context is the expectation value of the above (super)-operator with respect to $O_t$ and is known as the Krylov complexity: $K=\braked{O_t}{\mathds{K}O_t}$. Its rate of growth is constrained by the speed limit
\begin{equation} \label{dispersion-bound}
    |\partial_t K(t)|\leq 2b_1\Delta\mathds{K},
\end{equation}
introduced in \cite{Hornedal22} and known as the \textit{dispersion bound}, given that $(\Delta\mathds{K})^2=\braked{O_t}{\mathds{K}^2O_t}-\braked{O_t}{\mathds{K}O_t}^2$ is the variance of $\mathds{K}$ with respect to $O_t$. As was shown in \cite{Hornedal22}, the dispersion bound is saturated at any time if and only if the structure of Krylov space features the so-called \textit{complexity algebra} \eqref{algebra closure}, which we shall introduce below.

It was pointed out in \cite{Caputa2019} that the Liouvillian in Krylov space, given by Eq.~\eqref{LiouvK}, can be written as the sum $\mathds{L}=\mathds{L}_+ + \mathds{L}_-$ of raising and lowering operators that act on the Krylov basis as $\mathds{L}_+\ked{O_n} = b_{n+1}\ked{O_{n+1}}$ and $\mathds{L}_-\ked{O_n} = b_{n}\ked{O_{n-1}}$, respectively. It appears then natural to introduce a super-operator $\mathds{B}= \mathds{L}_+ - \mathds{L}_-$, conjugated to the Liouvillian, and consider their commutator $\Tilde{\mathds{K}} = [\mathds{L}, \mathds{B}]$ \cite{Caputa2019}. The dispersion bound \eqref{dispersion-bound} is identically saturated if and only if these three operators close an algebra, which can only take the form \cite{Hornedal22}
\begin{equation} \label{algebra closure}
        [\mathds{L}, \mathds{B}] = \Tilde{\mathds{K}},\quad [\Tilde{\mathds{K}}, \mathds{L}] = \alpha\mathds{B},\quad[\Tilde{\mathds{K}}, \mathds{B}] = \alpha\mathds{L}
\end{equation}
and implies also that $\Tilde{\mathds{K}} = \alpha\mathds{K}+\gamma\mathds{1}$, where $\alpha,\gamma\in\mathbb{R}$ \cite{Caputa2019}. The Krylov complexity growth rate is then maximal \cite{Hornedal22}. It can be shown that $\gamma$ is always a positive number and $\alpha$ is a real number satisfying the condition $\alpha = -\frac{2\gamma}{D-1}$ for finite Krylov dimension $D$ and $\alpha\geq0$ if $D=\infty$ \cite{Hornedal22}. Moreover, the algebraic closure \eqref{algebra closure}, or equivalently the saturation of the bound \eqref{dispersion-bound}, implies that the Lanczos coefficients evolve according to \cite{Caputa2019,Hornedal22}
\begin{equation} \label{bn}
b_n = \sqrt{\frac{1}{4}\alpha n(n-1) + \frac{1}{2}\gamma n}.
\end{equation}
For $\alpha>0$, this dependence captures the asymptotic linear growth $b_n=\sqrt{\alpha} n$ conjectured by Parker {\it et al.} to hold in generic non-integrable systems, leading to the maximal growth of Krylov complexity \cite{Parker2019}. A paradigmatic example of this class of systems is the celebrated Sachdev-Ye-Kitaev (SYK) model \cite{Chowdhury22}.

\subsection{Saturation of the OQSL by the complexity algebras}

In the ``super'' Heisenberg picture, the observable operator is kept fixed while the complexity operator $\mathds{K}$ evolves unitarily according to 
\begin{equation} \label{Kt}
    \mathds{K}_t = e^{-i\mathds{L}t} \mathds{K}_0 e^{i\mathds{L}t}= \sum_{n=0}^\infty \frac{(-i)^n}{n!} \mathds{S}^n(\mathds{K}_0) \, t^n,
\end{equation}
where $\mathds{S}=[\mathds{L}, \cdot]$ will be referred to as super Liouvillian. In the case of closed complexity algebras, thanks to the commutation relations \eqref{algebra closure}, all the powers $\mathds{S}^n(\mathds{K}_0)$ reduce to terms proportional to either $\mathds{K}_0$ or $\mathds{B}$. In particular, if $\alpha \neq 0$ one can show that \cite{Hornedal22}
\begin{eqnarray} \label{Spower}
  \mathds{S}^{2n}(\mathds{K}_0) &=& (-1)^n \alpha^{n-1}(\alpha \mathds{K}_0 + \gamma\mathds{1} ),\\
  \label{Spowers}
  \mathds{S}^{2n+1}(\mathds{K}_0) &=& (-1)^{n+1} \alpha^{n} \mathds{B},
\end{eqnarray}
where the first equation clearly holds only for $n>0$, being $\mathds{S}^0(\mathds{K}_0)=\mathds{K}_0$. Conversely, when $\alpha=0$ only $\mathds{S}(\mathds{K}_0)= -\mathds{B}$ and $\mathds{S}^2(\mathds{K}_0)= -\gamma\mathds{1}$ survive, being $\mathds{S}^n(\mathds{K}_0)=0$ for $n>2$. Thus,
\begin{equation}
    \mathds{K}_t= \left\{
    \begin{array}{ll} \cos(\sqrt{-\alpha}\,t)\big(\mathds{K}_0 + \frac{\gamma}{\alpha}\mathds{1}\big) -\frac{\gamma}{\alpha}\mathds{1}+\frac{i}{\sqrt{-\alpha}}\sin(\sqrt{-\alpha}\,t)\mathds{B}& \quad \alpha \neq 0, \\
    \mathds{K}_0+i\mathds{B}t+\frac{\gamma}{2}t^2\mathds{1} & \quad \alpha = 0.
    \end{array}
    \right.
\end{equation}
 Equation \eqref{Kt}, together with Eqs. \eqref{Spower}-\eqref{Spowers}, implies that whenever the dispersion bound \eqref{dispersion-bound} is saturated, the full time-evolution of the Krylov complexity operator $\mathds{K}_t$ must be contained in a $3$-dimensional space, spanned by the identity $\mathds{1}$, the initial complexity $\mathds{K}_0$ and $\mathds{B}=[\mathds{K}_0,\mathds{L}]$:
\begin{equation} \label{KtSpan}
    \mathds{K}_t \in \text{Span}\{\mathds{1},\mathds{K}_0,\mathds{B}\}.
\end{equation}
This space defines the ``super'' Krylov space of Krylov complexity itself and turns out to be dramatically simplified by the assumption of closed complexity algebra. From the discussion in section \ref{Krylov}, we can conclude that $\mathds{K}_t$ having a $3$-dimensional Krylov space is a sufficient condition for it to saturate the refined OQSL \eqref{result3}. We recall that here $A_t$ and $\mathds{L}$ are replaced by $\mathds{K}_t$ and $\mathds{S}$, respectively. Then Eq.~\eqref{KtSpan} implies that $\mathds{K_t}$ has support in only three eigenspaces of $\mathds{S}$, one of them corresponding to the eigenvalue $0$: $\mathds{K}_t=\mathds{P}_0 + \mathds{P}_\omega + \mathds{P}_{-\omega}$. Therefore, we conclude that the super operator $\mathds{K}_t-\mathds{P}_0$ saturates the OQSL \eqref{result3}; that is, it evolves along a geodesic trajectory. In other words, the maximal growth rate of Krylov complexity, leading to the saturation of the dispersion bound \eqref{dispersion-bound}, is equivalent to the geodesic evolution of the Krylov complexity operator, provided that we subtract its stationary component $\mathds{P}_0$. Indeed, we remark that $\mathds{P}_0$ is necessarily different from zero since $\Tr \mathds{K} \neq0$, and must be removed to obtain a tight OQSL. From the explicit computation performed below, we shall conclude that the identity is indeed the only stationary component of the Krylov complexity, i.e., $\mathds{P}_0=\Tr \mathds{K}\frac{\mathds{1}}{\|\mathds{1}\|^2}$. By doing so, we shall also explicitly assess the improvement achieved by replacing the OQSL \eqref{result1} with its refined counterpart \eqref{result3}.

\subsection{Computation of the OQSL for the complexity algebras}

The geometrical OQSL \eqref{result1} for the Krylov complexity reads as
\begin{equation} \label{resultK}
    t \geq  \|\mathds{K}_0\| \frac{\arccos\big( \frac{\braked{\mathds{K}_0}{\mathds{K}_t} }{\|\mathds{K}_0\|^2} \big)}{\|[\mathds{L},\mathds{K}_0]\|},
\end{equation}
where, given a time-independent Liouvillian $\mathds{L}$, the velocity of the flow $\|[\mathds{L},\mathds{K}_0]\|$ is constant. The initial complexity $\mathds{K}_0$ coincides with the usual Krylov complexity operator \eqref{K} in the standard Heisenberg picture. In order to assess its deviation from saturation, we explicitly evaluate the OQSL \eqref{resultK} in the case of the $SU(2)$ complexity algebra, that is, when the complexity growth saturates the dispersion bound \eqref{dispersion-bound} in finite dimension \cite{Hornedal22}.
Let us define $\mathcal{V}_\mathds{K} \equiv \|\mathds{K}_0\|^{-1}\|[\mathds{L},\mathds{K}_0]\|$ as the (normalized) velocity of the complexity flow. By explicit computation in the Krylov basis, one can verify that $\mathcal{V}_\mathds{K}$, being $[\mathds{L},\mathds{K}_0]=-\mathds{B}$ by construction, always reduces to
\begin{equation} \label{velK}
    \mathcal{V}^2_\mathds{K} = \frac{\|\mathds{B}\|^2}{\|\mathds{K}_0\|^2} = \frac{2}{\|\mathds{K}_0\|^2} \sum_{n=1}^{D-1} b_n^2,
\end{equation}
where the norm of the complexity is fixed by the Krylov dimension as
\begin{equation} \label{Knorm}
    \|\mathds{K}_0\|^2 = \sum_{n=0}^{D-1} n^2 = \frac{D(D-1)(2D-1)}{6}.
\end{equation}
The velocity \eqref{velK} of the complexity flow is maximized whenever the Lanczos coefficients growth is maximal, i.e.,~linear in $n$, which is the case for maximally chaotic systems according to the universal growth hypothesis \cite{Parker2019}. Indeed, by focusing on the initial scrambling period and neglecting the role of the following plateau and descent in the $b_n$'s \cite{Rabinovici2021}, we conclude that a sub-polynomial behavior $b_n\propto n^\delta$ with $0<\delta<1$ always leads to a smaller velocity $\mathcal{V}_\mathds{K}$ than the linear growth  $b_n\propto n$. We stress that this observation also holds for infinite-dimensional Krylov spaces. As shown below, the velocity \eqref{velK} remains finite in this limit, and once we fix the proportionality constant, the velocity is maximized by the linear growth of the $b_n$'s. In other words, according to the universal growth hypothesis \cite{Parker2019}, complexity flows at the highest speed in maximally chaotic systems.

Let us now focus on the instances of Krylov dynamics that saturate another notion of the speed limit for operator growth, namely the above-mentioned dispersion bound \eqref{dispersion-bound}. As reviewed above, in such cases, the dynamics of Krylov complexity is determined by an underlying $3$-dimensional algebra \cite{Hornedal22} and the Lanczos coefficients obey Eq.~\eqref{bn}. As a result, the velocity \eqref{velK} of the complexity flow can be expressed as
\begin{equation} \label{velK-alg}
    \mathcal{V}^2_\mathds{K}= \frac{\alpha(D-2) + 3\gamma}{2D-1}.
\end{equation}
Before restricting the analysis to the finite-dimensional case $\alpha<0$, where the geometrical QSL \eqref{result1} can be applied, let us stress that this notion of velocity is well defined also in the limit $D\to\infty$, where $\alpha\geq0$. In particular, if $\alpha>0$, when the Krylov complexity diverges exponentially with time as $K(t)\sim e^{\sqrt{\alpha}t}$ \cite{Caputa2019,Hornedal22}, we obtain that $\mathcal{V}_\mathds{K}\to\sqrt{\alpha/2}$: the speed of the complexity operator flow is proportional to the characteristic time scale of the exponential divergence of the Krylov complexity. Instead, if $\alpha=0$, which leads to a quadratic divergence of $K$ \cite{Caputa2019,Hornedal22}, the above-defined speed of the flow vanishes. This singular behavior can be understood as the QSL \eqref{resultK}, derived under the assumption of finite dimension $D$, may not have a well-defined counterpart for $D\to\infty$. In particular, as we shall see below,  the numerator in Eq.~\eqref{resultK} also vanishes for $\alpha=0$, resulting in an indeterminate form $0/0$. Finally, for the Krylov dynamics to saturate the dispersion bound \eqref{dispersion-bound} over a finite-dimensional space, the underlying complexity algebra must be that of $\textit{SU}(2)$, corresponding to the case $\alpha<0$ \cite{Hornedal22}. In such case, given the condition $b_D=0$, the parameters $\alpha$ and $\gamma$ of Eq.~\eqref{bn} are subject to the further constraint $2\gamma=|\alpha|(D-1)$ \cite{Hornedal22}, which also ensures the expression \eqref{velK-alg} to be positive. Indeed, the velocity of the complexity flow generated by the $\textit{SU}(2)$ algebra reduces to $\mathcal{V}^2_\mathds{K}=|\alpha|(D+1)/[2(2D-1)]$. For this class of models, we compute the QSL \eqref{resultK} exactly, thus establishing a direct comparison between the geometrical OQSL introduced in the present work and the dispersion bound that constraints the growth of Krylov complexity \cite{Hornedal22}.

In addition to the velocity of the flow, the other quantity that characterizes the speed limit is the notion of distance spanned during the evolution, which appears at the numerator of Eqs.~\eqref{result1} and \eqref{resultK} and is given in terms of the autocorrelation function, i.e., the operator overlap. In the case of closed complexity algebras, the autocorrelation function of the complexity
\begin{equation} \label{Kautocorr}
    \braked{\mathds{K}_0}{\mathds{K}_t} = \sum_{n=0}^\infty \frac{(-i)^n}{n!} \braked{\mathds{K}_0}{\mathds{S}^n(\mathds{K}_0)} \, t^n
\end{equation}
can be explicitly evaluated by making use of Eqs.~\eqref{Spower}-\eqref{Spowers}. We note that, since $\braked{\mathds{K}_0}{\mathds{B}}=0$, only the even powers of the super Liouvillian $\mathds{S}$ contribute to the sum in Eq.~\eqref{Kautocorr}. Moreover, the only finite-dimensional case where we can straightforwardly apply the  OQSL \eqref{resultK} is that of the $\textit{SU}(2)$ complexity algebra, i.e., when $\alpha<0$. For this class of models, by substituting Eqs.~\eqref{Spower}-\eqref{Spowers} into the expression \eqref{Kautocorr} and recognizing the Taylor expansion of the cosine, we obtain
\begin{equation} \label{Kautocorr-SU2}
    \braked{\mathds{K}_0}{\mathds{K}_t} = (\|\mathds{K}\|^2 + \frac{\gamma}{\alpha}\Tr\mathds{K}) \cos{\sqrt{|\alpha|}t} - \frac{\gamma}{\alpha}\Tr\mathds{K},
\end{equation}
where $\Tr\mathds{K}=D(D-1)/2$ and $\|\mathds{K}\|^2$ is given by Eq.~\eqref{Knorm}. Therefore, in the case of the $SU(2)$ complexity algebra, the autocorrelation function $\braked{\mathds{K}_0}{\mathds{K}_t}$ oscillates at the frequency $\sqrt{|\alpha|}$, which, we note, is half the frequency of oscillation of the Krylov complexity $K$ itself \cite{Hornedal22}. By substituting Eq.~\eqref{Kautocorr-SU2} into Eq.~\eqref{resultK} and using that $2\gamma=|\alpha|(D-1)$, we rewrite the OQSL as
\begin{equation} \label{QSL-SU2}
    t \geq \frac{1}{\mathcal{V}_\mathds{K}} \arccos\Bigg[\Bigg(1- \frac{3(D-1)}{2(2D-1)} \Bigg)\cos{\sqrt{|\alpha|}t} + \frac{3(D-1)}{2(2D-1)} \Bigg],
\end{equation}
where $\mathcal{V}^2_\mathds{K}=|\alpha|(D+1)/[2(2D-1)]$. As argued above, we expect this bound not to be tight due to the presence of a stationary component in the complexity flow given by the non-vanishing of its trace. We illustrate the deviation from the geodesic trajectory
\begin{equation} \label{geodesic-autocorr}
    \braked{\mathds{K}_0}{\mathds{K}_t}|_{\text{geodesic}}= \|\mathds{K}\|^2\cos{\mathcal{V}_{\mathds{K}}t}
\end{equation}
and the divergence of the two sides of Eq.~\eqref{QSL-SU2} in Fig.~\ref{fig-SU2_withTr}. In what follows, we shall explicitly remove the stationary component of the complexity flow, thus proving the saturation of the refined OQSL \eqref{result3} and showing its equivalence with the dispersion bound \eqref{dispersion-bound}.

\begin{figure}[t]
	\centering
	
	\includegraphics[width=1.01\textwidth]{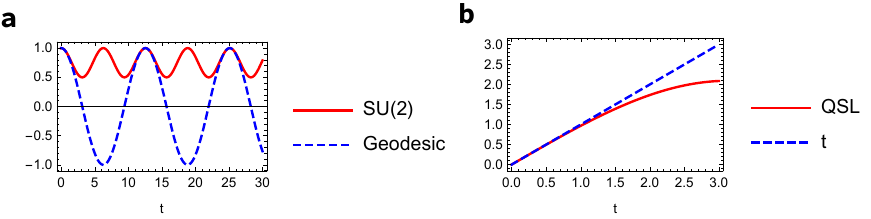}

	\caption{We illustrate the deviation from the saturation of the OQSL \eqref{resultK} for the $SU(2)$ complexity algebra, with Krylov dimension $D=1000$ and $\alpha=-1$. \textbf{a} We compare the evolution \eqref{Kautocorr-SU2} of the complexity autocorrelation function $\braked{\mathds{K}_0}{\mathds{K}_t}$ (red curve) with the geodesic trajectory $ \braked{\mathds{K}_0}{\mathds{K}_t}|_{\text{geodesic}}$ \eqref{geodesic-autocorr} (blue dashed curve). \textbf{b} We show the left and right-hand side of the inequality \eqref{QSL-SU2} (blue dashed and red curves, respectively). The OQSL is tight only near $t=0$, and the deviation increases with time.}
	\label{fig-SU2_withTr}
	
\end{figure}

Finally, although the framework of the geometrical QSL requires a finite dimension, it is instructive to consider the behavior of the quantities involved in Eq.~\eqref{resultK} as $D\to\infty$. In particular, we have already stressed above that the velocity $\mathcal{V}_\mathds{K}$ \eqref{velK} of the complexity flow remains finite, is non-zero for $\alpha>0$ and vanishes for $\alpha=0$. Moreover, from Eq.~\eqref{Kautocorr}, with analogous steps as for the $\textit{SU}(2)$, we obtain that for $\alpha=0$ the autocorrelation function behaves as $\braked{\mathds{K}_0}{\mathds{K}_t} = \|\mathds{K}\|^2 +\frac{\gamma}{2}t^2$. At any finite time, this implies the vanishing of the numerator of the QSL \eqref{resultK}, since the argument of the arccosine approaches $1$ as $D\to\infty$. The evaluation of the limit of the full expression yields as a result that $\tau_{QSL}\to 0$ as $D\to\infty$ in the case of the HW algebra, i.e., for $\alpha=0$. Conversely, the notion of distance employed in our QSL \eqref{result1} is not well defined in the case $\alpha>0$, since the normalized autocorrelation function diverges exponentially with time, $\braked{\mathds{K}_0}{\mathds{K}_t}/\|\mathds{K}\|^2 \sim \exp{\sqrt{\alpha}t}$.

\subsubsection{Saturation of the refined OQSL}

The fact that the Krylov complexity operator $\mathds{K}_t$ cannot saturate the OQSL \eqref{resultK} is already evident from the observation that $\Tr\mathds{K}_t\neq 0$, as this condition results in a non-zero stationary component $\mathds{P}_0$. In particular, the trace accounts for the stationary component along the identity. Removing the trace is sufficient to achieve saturation only if there are no other stationary components of $\mathds{K}_t$, meaning that the zero eigenvalue of the super Liouvillian $\mathds{S}$ has no degeneracy, such that the corresponding eigenspace is spanned by the identity. We explicitly show that this is indeed the case for closed complexity algebras. In this sense, the dispersion bound \eqref{dispersion-bound} and the OQSL \eqref{result3} provide a unique constraint on the operator growth in Krylov space and the saturation of the former automatically implies the saturation of the latter. 

Let us, therefore, consider the operator $\overline{\mathds{K}}_t$ obtained by subtracting from the Krylov complexity its component along the identity:
\begin{equation}
    \overline{\mathds{K}}_t = \mathds{K}_t - \braked{\mathds{K}_t}{\mathds{1}} \frac{\mathds{1}}{\|\mathds{1}\|^2},
\end{equation}
where $\braked{\mathds{K}_t}{\mathds{1}}=\Tr \mathds{K}$ and $\|\mathds{1}\|^2=D$. The OQSL \eqref{result1} for $\overline{\mathds{K}}_t$ reads as
\begin{equation} \label{resultKbar}
    t \geq   \frac{\arccos\Big( \frac{\braked{\overline{\mathds{K}}_0}{\overline{\mathds{K}}_t} }{\|\overline{\mathds{K}}\|^2} \Big)}{\mathcal{V}_{\overline{\mathds{K}}}},
\end{equation}
where $\mathcal{V}^2_{\overline{\mathds{K}}}=\|\mathds{B}\|^2/\|\overline{\mathds{K}}\|^2$. Now, by using Eq.~\eqref{Knorm} and that $\Tr\mathds{K}=D(D-1)/2$, we obtain
\begin{equation}
    \|\overline{\mathds{K}}\|^2=\|\mathds{K}\|^2-\frac{(\Tr\mathds{K})^2}{D}=\frac{D(D^2-1)}{12}.
\end{equation}
Moreover, in the case of $SU(2)$ complexity algebra, from Eq.~\eqref{bn} with $\alpha<0$ and $2\gamma=|\alpha|(D-1)$ we find
\begin{equation}
    \|\mathds{B}\|^2=2\sum_{n=1}^{D-1} b_n^2=|\alpha|\frac{D(D^2-1)}{12}.
\end{equation}
By taking the ratio of the expressions above, we conclude that the velocity of the $\overline{\mathds{K}}_t$ flow for the $SU(2)$ complexity algebra is $\mathcal{V}_{\overline{\mathds{K}}}=\sqrt{|\alpha|}$. On the other hand, the autocorrelation function $\braked{\overline{\mathds{K}}_0}{\overline{\mathds{K}}_t}$ can be computed analogously to the one of $\mathds{K}_t$ in Eqs.~\eqref{Kautocorr}-\eqref{Kautocorr-SU2}, with the only difference that now $\Tr \overline{\mathds{K}}=0$. We thus obtain
\begin{equation} \label{autocorr-bar}
    \braked{\overline{\mathds{K}}_0}{\overline{\mathds{K}}_t}= \|\overline{\mathds{K}}\|^2 \cos\sqrt{|\alpha|}t=\|\overline{\mathds{K}}\|^2 \cos \mathcal{V}_{\overline{\mathds{K}}}t,
\end{equation}
which implies that the inequality in the OQSL \eqref{resultKbar} reduces to an identity at any time. In conclusion, the saturation of the dispersion bound \eqref{dispersion-bound} in finite dimension implies that the Krylov complexity $\mathds{K}_t$ also saturates the refined OQSL \eqref{result3} with $\mathds{P}_0=\mathds{1}\Tr \mathds{K}/D$. We illustrate this saturation in Fig.~\eqref{fig-SU2_withoutTr}. From the comparison between Figs.~\ref{fig-SU2_withTr} and \eqref{fig-SU2_withoutTr} we can assess the efficiency of the refined OQSL \eqref{result3} in yielding a tight evolution by removing the components that do not contribute dynamically to the flow.

\begin{figure}[t]
	\centering
	
	\includegraphics[width=1.01\textwidth]{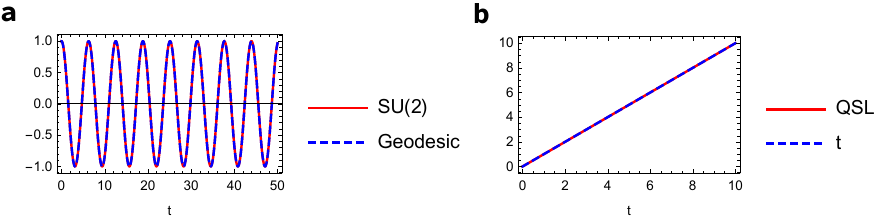}

	\caption{Illustration of the saturation of the OQSL \eqref{resultKbar} for the $SU(2)$ complexity algebra, with Krylov dimension $D=1000$ and $\alpha=-1$. \textbf{a} The complexity autocorrelation function $\braked{\overline{\mathds{K}}_0}{\overline{\mathds{K}}_t}$ \eqref{autocorr-bar} (red curve) is shown to match the geodesic trajectory (blue dashed curve). \textbf{b} The left-hand and right-hand sides of the inequality \eqref{resultKbar} (blue dashed and red curves, respectively). The OQSL reduces to an identity at any time.}
	\label{fig-SU2_withoutTr}
	
\end{figure}

\section{Conclusions, Discussion and Outlook}
Conventional QSLs bound the minimum time for the completion of a process by quantifying the distance traveled by the system along the evolution in state space. OQSLs generalize the scope of conventional QSL to account for processes described in terms of operator flows, i.e., the change of an operator resulting from a conjugation by a one-parameter unitary \cite{Carabba22}. In this work, we have introduced a geometric OQSL that holds for arbitrary unitaries, i.e.,  whether the generator of flow is parameter dependent or not. In addition, we have shown the bound to be tight and identified the required conditions for its saturation. This has led us to introduce a refined OQSL upon identifying the subspace in which the dynamics unfolds.

The usefulness of these OQSLs has been illustrated in the context of a continuous renormalization group, formulated as a Wegner Hamiltonian flow for block diagonalization. In this context, the  flow involves a parameter-dependent generator and its characterization is possible by making use of the geometric OQSL presented. 
We have shown that  Wegner's choice of the flow generator leads to a monotonic decay of the off-diagonal elements of the flowing Hamiltonian towards the target block-diagonal one. However, such a choice does not saturate the OQSL. By contrast, an alternative choice of the generator associated with the Toda flow can lead to the saturation of the OQSL for a specific family of initial Hamiltonians.
Beyond the case of Wegner Hamiltonian flows, we expect our results to apply to other schemes for Hamiltonian diagonalization, such as those relying on the Schrieffer-Wolff transformation \cite{Bravyi11}.

We have further discussed the implication of our results in the context of operator growth in Krylov space. In this representation, the time evolution of an operator is analogous to the spreading of a particle in the Krylov lattice, where the mean position is a proxy for operator complexity. The conditions for maximal operator growth 
are then associated with the saturation of the dispersion bound \cite{Hornedal22}, which occurs when the Lanczos coefficients exhibit a specific dependence on the lattice site index.
Here, we have introduced a ``super-Heisenberg'' representation of the Krylov complexity operator generated by a super Liouvillian. Making use of the OQSL in such representation, we have shown that the saturation of the dispersion bound implies the saturation of the OQSL for the Krylov complexity operator. The application of OQSL to other complexity measures, such as the family of q-complexities including out-of-time-order correlators \cite{Parker2019},  offers an interesting prospect.

Beyond these examples, we expect OQSLs to find manifold applications in 
 the characterization of nonequilibrium phenomena, such as the crossing of a quantum phase transition, the equilibration and thermalization of isolated many-body systems,  quantum thermodynamic processes, quantum control, and quantum annealing. In addition, OQSL may be used in the study of integrable systems, using the zero-curvature representation \cite{Faddeev87},  Lax pairs \cite{Sutherland04}, and Hamiltonian deformations \cite{Gross20,Gross20b,Matsoukas-Roubeas22}. In addition,  our results apply directly to  isospectral operator flows with a double bracket structure appearing in projected gradient methods for least-square matrix approximations \cite{ChuDriessel90}, and dynamical systems for list sorting and linear programming problems \cite{Brockett91}. 
The generalization of our results to dissipative quantum systems would be highly desirable, given its prospective applications, e.g.,  to the description of open quantum dynamics in Heisenberg's representation and the quest for fundamental limits to nonunitary operator growth \cite{Bhattacharya22,LiuTangZhai22,Bhattacharjee22b}. 

\section{Acknowledgements}
It is a pleasure to acknowledge discussions with  L\'eonce Dupays,  Anatoly Dymarsky,  
\'Iñigo L. Egusquiza, and Federico Roccati. KT acknowledges support by 
JSPS KAKENHI grant No. JP20K03781 and No. JP20H01827.


\appendix
\section*{Appendices}
\addcontentsline{toc}{section}{Appendices}
\renewcommand{\thesubsection}{\Alph{subsection}}

\subsection{Proving bijection between positive semi-definite inner-products and positive semi-definite operators}
\label{app-A}

\begin{lem}
    Let $\P\in\textrm{End}(\B)$ be a positive semi-definite superoperator. The binary operation $\inner{\cdot}{\P\cdot}_\textsc{h} : \B\times\B \rightarrow \mathds{R}$ is a positive semi-definite inner product on $\B$.
\end{lem}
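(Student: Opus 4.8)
The plan is to verify directly the three properties that, under the paper's conventions, a positive semi-definite inner product must enjoy: positive semi-definiteness (in place of positive-definiteness), linearity in the second argument, and conjugate symmetry. Writing $\braked{A}{B}=\inner{A}{\P B}_\textsc{h}$ for the candidate form, each of these will follow from a structural property of $\P$ combined with the fact that $\inner{\cdot}{\cdot}_\textsc{h}$ is already a genuine complex inner product on $\B$.

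First I would dispose of the two routine properties. Positive semi-definiteness is immediate: for any $A\in\B$ the number $\braked{A}{A}=\inner{A}{\P A}_\textsc{h}$ is non-negative by the very hypothesis that $\P$ is positive semi-definite with respect to the Hilbert-Schmidt inner product. Linearity in the second argument is equally direct: since $\P\in\textrm{End}(\B)$ is linear and $\inner{\cdot}{\cdot}_\textsc{h}$ is linear in its second slot, the map $B\mapsto \inner{A}{\P B}_\textsc{h}$ is linear for each fixed $A$. By conjugate symmetry (established below) this also yields conjugate-linearity in the first argument automatically.

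The step requiring care, and the only real obstacle, is conjugate symmetry. The key input is that a positive semi-definite superoperator on the complex inner product space $(\B,\inner{\cdot}{\cdot}_\textsc{h})$ is necessarily self-adjoint, $\P^\dagger=\P$, where $\P^\dagger$ denotes the Hilbert-Schmidt adjoint. If self-adjointness is built into the meaning of ``positive semi-definite'' there is nothing to check; otherwise it is extracted from the fact that $\inner{A}{\P A}_\textsc{h}\in\mathds{R}$ for every $A$, which over $\mathds{C}$ forces $\P=\P^\dagger$ by a standard polarization argument. Granting this, conjugate symmetry follows from the chain
\[
    \inner{A}{\P B}_\textsc{h}
    = \overline{\inner{\P B}{A}_\textsc{h}}
    = \overline{\inner{B}{\P^\dagger A}_\textsc{h}}
    = \overline{\inner{B}{\P A}_\textsc{h}},
\]
in which the first equality is conjugate symmetry of $\inner{\cdot}{\cdot}_\textsc{h}$, the second is the defining property of the adjoint, and the third uses $\P^\dagger=\P$. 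This says precisely that $\braked{A}{B}=\overline{\braked{B}{A}}$. The only point I would be careful to state explicitly is whether self-adjointness is assumed as part of the definition of positive semi-definiteness or must be deduced from positivity over $\mathds{C}$; every remaining ingredient transfers mechanically from the corresponding properties of the Hilbert-Schmidt inner product together with the linearity of $\P$.
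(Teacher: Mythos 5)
Your proposal is correct and follows essentially the same route as the paper's proof: linearity and positive semi-definiteness transferred directly from $\P$ and the Hilbert--Schmidt inner product, and conjugate symmetry via the chain $\inner{A}{\P B}_\textsc{h} = \inner{\P B}{A}_\textsc{h}^* = \inner{B}{\P^\dagger A}_\textsc{h}^* = \inner{B}{\P A}_\textsc{h}^*$ using self-adjointness of $\P$. Your added remark that self-adjointness either comes with the definition or follows from positivity over $\mathds{C}$ by polarization is a sound clarification of a point the paper simply asserts.
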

\begin{proof}
    We need to show that the map $\inner{\cdot}{\P\cdot}_\textsc{h}$ satisfies linearity, Hermitian symmetry, and positive semi-definiteness.
    
    \underline{\textbf{Linearity}}
    
    Consider any triplet of operators $A$, $B$ and $C$ and a complex number $\lambda$. We then have
    \begin{align}
        &\inner{A}{\P\lambda B}_\textsc{h} = \inner{A}{\lambda\P B}_\textsc{h} = \lambda\inner{A}{\P B}_\textsc{h}\\
        &\inner{A}{\P(B+C)}_\textsc{h} = \inner{A}{\P B+\P C }_\textsc{h} = \inner{A}{\P B}_\textsc{h} + \inner{A}{\P C }_\textsc{h},
    \end{align}
    where we have used the linearity of the Hilbert-Schmidt inner product and the superoperator.
    
    \underline{\textbf{Hermitian symmetry}}
    
    For any pair of operators $A$ and $B$ we have
    \begin{equation}
        \inner{A}{\P B}_\textsc{h} = \inner{\P B}{A}_\textsc{h}^* = \inner{B}{\P^\dagger(A)}_\textsc{h}^* = \inner{B}{\P A}_\textsc{h}^*,
    \end{equation}
    where we have used the Hermitian symmetry property of the Hilbert-Schmidt inner product and the fact that a positive semi-definite superoperator is self-adjoint.
    
    \underline{\textbf{Positive semi-definiteness}}
    
    From the definition of positive semi-definiteness of a superoperator 
    it follows directly that
    \begin{equation}
        \inner{A}{\P A}_\textsc{h} \geq 0,
    \end{equation}
    for any operator $A$.
\end{proof}
\begin{prop}
    Given the Hilbert space $(\B, \inner{\cdot}{\cdot}_\textsc{h})$, the map $\P \mapsto \inner{\cdot}{\P \cdot }_\textsc{h}$ is a bijection between the set of positive semi-definite operators on $(\B, \inner{\cdot}{\cdot}_\textsc{h})$ and the set of positive semi-definite inner products on $\B$.
\end{prop}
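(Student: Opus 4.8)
The plan is to prove the map is a bijection by establishing injectivity and surjectivity separately; the preceding Lemma already guarantees the map is well defined, i.e., that each positive semi-definite superoperator $\P$ produces a genuine positive semi-definite inner product $\inner{\cdot}{\P\cdot}_\textsc{h}$, so these two properties suffice.

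Injectivity I would obtain immediately from the non-degeneracy of the Hilbert-Schmidt inner product. If two positive semi-definite superoperators $\P_1$ and $\P_2$ induce the same inner product, then $\inner{A}{\P_1 B}_\textsc{h}=\inner{A}{\P_2 B}_\textsc{h}$ for all $A,B\in\B$, hence $\inner{A}{(\P_1-\P_2)B}_\textsc{h}=0$ for every $A$; non-degeneracy then forces $(\P_1-\P_2)B=0$ for every $B$, so $\P_1=\P_2$.

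The substantive step is surjectivity, which I would handle with the Riesz representation theorem on the finite-dimensional Hilbert space $(\B,\inner{\cdot}{\cdot}_\textsc{h})$. Given an arbitrary positive semi-definite inner product $\braked{\cdot}{\cdot}$, I would note that for each fixed $A$ the assignment $B\mapsto\braked{A}{B}$ is a linear functional, since $\braked{\cdot}{\cdot}$ is linear in its second argument; Riesz then yields a unique element which I take to define $\P A$, so that $\braked{A}{B}=\inner{\P A}{B}_\textsc{h}$ for all $B$. I would then verify three things in turn. First, $\P$ is linear: tracking the conjugate-linearity of $\braked{\cdot}{\cdot}$ in its first slot against that of $\inner{\cdot}{\cdot}_\textsc{h}$ gives $\inner{\P(\lambda A)}{B}_\textsc{h}=\bar\lambda\,\inner{\P A}{B}_\textsc{h}=\inner{\lambda\P A}{B}_\textsc{h}$, hence $\P(\lambda A)=\lambda\P A$, and additivity is analogous. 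Second, $\P$ is self-adjoint: conjugate symmetry $\braked{A}{B}=\overline{\braked{B}{A}}$ combined with that of $\inner{\cdot}{\cdot}_\textsc{h}$ gives $\inner{\P A}{B}_\textsc{h}=\inner{A}{\P B}_\textsc{h}$, which simultaneously recasts the representation into the desired form $\braked{A}{B}=\inner{A}{\P B}_\textsc{h}$. Third, $\P$ is positive semi-definite: setting $B=A$ yields $\inner{A}{\P A}_\textsc{h}=\braked{A}{A}\ge 0$.

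I expect the main obstacle to be the careful bookkeeping of the sesquilinearity conventions in the surjectivity argument. One must keep straight which slot is linear and which is conjugate-linear in both $\braked{\cdot}{\cdot}$ and $\inner{\cdot}{\cdot}_\textsc{h}$ in order to conclude that the Riesz-constructed $\P$ is genuinely \emph{linear} rather than conjugate-linear, and that it is self-adjoint. Once the conventions are aligned, linearity, self-adjointness, and positivity all follow directly, and together with injectivity this establishes the bijection.
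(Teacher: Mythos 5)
Your proof is correct, and your injectivity argument coincides with the paper's (both rest on non-degeneracy of the Hilbert--Schmidt inner product), but your surjectivity argument takes a genuinely different route. The paper constructs $\P$ explicitly in coordinates: it fixes an operator basis $M_1,\dots,M_{n^2}$ of $\B$, declares the matrix elements $\inner{M_i}{\P M_j}_\textsc{h} = \braked{M_i}{M_j}$, and then verifies by sesquilinear expansion that $\inner{A}{\P B}_\textsc{h} = \braked{A}{B}$ for all $A,B$, from which positive semi-definiteness of $\P$ is immediate; linearity of $\P$ is built into the basis definition, and self-adjointness is left implicit (over $\mathbb{C}$, the condition $\inner{A}{\P A}_\textsc{h}\ge 0$ for all $A$ already forces $\P=\P^\dagger$, consistent with the paper's usage of ``positive semi-definite superoperator''). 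You instead invoke the Riesz representation theorem on $(\B,\inner{\cdot}{\cdot}_\textsc{h})$ to define $\P A$ abstractly as the representative of the functional $B\mapsto \braked{A}{B}$, and you then verify by hand that the resulting map is linear rather than conjugate-linear, self-adjoint, and positive. What each approach buys: the paper's construction is elementary and self-contained, requiring nothing beyond a basis expansion, at the cost of a choice of basis; yours is basis-free, makes uniqueness of $\P$ manifest from the uniqueness clause in Riesz, explicitly supplies the self-adjointness step the paper glosses over, and is the formulation that would carry over to an infinite-dimensional setting for bounded forms. Your flagged concern about sesquilinearity bookkeeping is exactly the right one, and your handling of it (conjugate-linearity in the first slot of both forms cancelling to give genuine linearity of $\P$) is correct.
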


\begin{proof}
    We can prove that the map is a bijection if we can prove that it is injective and surjective.
    
    \underline{\textbf{Injectivity}}
    
    Suppose that $\inner{\cdot}{\P \cdot }_\textsc{h} = \inner{\cdot}{\eta(\cdot)}_\textsc{h}$ for some pair of superoperators $\P$ and $\eta$. It follows that
    \begin{align}
    \begin{split}
        \inner{\cdot}{\P \cdot }_\textsc{h} = \inner{\cdot}{\eta(\cdot)}_\textsc{h} &\iff \inner{A}{\P B}_\textsc{h} = \inner{A}{\eta(B)}_\textsc{h}\quad\forall A,B\in\B\\
        &\iff \P = \eta.
        \end{split}
    \end{align}
    \underline{\textbf{Surjectivity}}
    
    Given any positive semi-definite inner product $\braked{\cdot}{\cdot}$, we want to construct a positive semi-definite superoperator $\P$ such that $\braked{\cdot}{\cdot} = \inner{\cdot}{\P \cdot }_\textsc{h}$. Let $M_1, M_2 \dots M_{n^2}$ be an operator basis in $\B$ and $a_k$ and $b_k$ be the corresponding components of $A$ and $B$. Define $\inner{M_i}{\P M_j }_\textsc{h} = \braked{M_i}{M_j}$. This superoperator is positive semi-definite since
    \begin{align}
    \label{pos-def}
    \begin{split}
        \inner{A}{\P B}_\textrm{H} &= \sum_{i=1}^{n^2}\sum_{j=1}^{n^2}a^*_ib_j\inner{M_i}{\P M_j }_\textrm{H}
        = \sum_{i=1}^{n^2}\sum_{j=1}^{n^2}a^*_ib_j\braked{M_i}{M_j}\\
        &= \braked{A}{B} \quad \implies \quad  \inner{A}{\P A}_\textrm{H} = \braked{A}{A} \geq 0,
        \end{split}
    \end{align}
    from which it is also clear that $\P$ maps to $\braked{\cdot}{\cdot}$.
\end{proof}

\subsection{The kernel of seminorms}
\label{app-B}

\begin{prop}
\label{kern}
    Let $\norm{\cdot}$ be the seminorm induced by the positive semi-definite inner product $\inner{\cdot}{\P \cdot }_\textsc{h}$. It is then the case that $\norm{A} = 0 \iff\P A = 0$.
\end{prop}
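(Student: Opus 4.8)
The plan is to prove the two implications separately; the forward direction is immediate, while the content resides in the converse. For $\P A = 0 \implies \norm{A} = 0$, I would simply substitute into the definition of the seminorm: $\norm{A}^2 = \inner{A}{\P A}_\textsc{h} = \inner{A}{0}_\textsc{h} = 0$, using only linearity of the Hilbert--Schmidt inner product in its second argument.

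For the converse $\norm{A} = 0 \implies \P A = 0$, the key idea is to exploit that $\P$, being positive semi-definite and self-adjoint with respect to $\inner{\cdot}{\cdot}_\textsc{h}$, admits a positive semi-definite square root $\P^{1/2}$ (guaranteed by the spectral theorem, since $\P$ diagonalizes in an orthonormal eigenbasis with non-negative eigenvalues). I would then rewrite
\[
0 = \norm{A}^2 = \inner{A}{\P A}_\textsc{h} = \inner{A}{\P^{1/2}\P^{1/2} A}_\textsc{h} = \inner{\P^{1/2}A}{\P^{1/2}A}_\textsc{h} = \norm{\P^{1/2}A}_\textsc{h}^2,
\]
where the penultimate equality uses the self-adjointness of $\P^{1/2}$. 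Since $\inner{\cdot}{\cdot}_\textsc{h}$ is a genuine (positive-definite) inner product, this forces $\P^{1/2}A = 0$, and applying $\P^{1/2}$ once more yields $\P A = \P^{1/2}(\P^{1/2}A) = 0$, completing the proof.

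Alternatively, I could bypass the square root and work directly with the spectral decomposition $\P = \sum_k p_k \Pi_k$ (with $p_k > 0$ and $\Pi_k$ the orthogonal eigenprojections already used elsewhere in the paper), writing $\norm{A}^2 = \sum_k p_k \norm{\Pi_k A}_\textsc{h}^2$; vanishing of this sum of non-negative terms forces each $\Pi_k A = 0$, hence $\P A = 0$. I expect the only subtlety---the ``hard part,'' though it is mild---to be the justification that the scalar identity $\inner{A}{\P A}_\textsc{h} = 0$ upgrades to the operator identity $\P A = 0$. Both routes reduce this step to the positive-definiteness of the Hilbert--Schmidt inner product together with the spectral theorem, which is precisely the structural input that a bare seminorm on $\B$ does not by itself supply.
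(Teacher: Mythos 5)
Your proof is correct in both directions, and your two routes sit in an interesting relation to the paper's own argument. Your second, ``alternative'' route is essentially the paper's proof: the paper expands $A$ in an orthonormal eigenbasis $\{M_k\}$ of $\P$, guaranteed by the spectral theorem, computes $\norm{A}^2=\inner{A}{\P A}_\textsc{h}=\sum_k\abs{a_k}^2\lambda_k$, and observes that this sum of non-negative terms can vanish only if $a_k=0$ whenever $\lambda_k>0$, i.e.\ $A\in\ker(\P)$; your identity $\norm{A}^2=\sum_k p_k\norm{\Pi_k A}_\textsc{h}^2$ is the same computation written with eigenprojections instead of coordinates. Your primary route via the positive square root $\P^{1/2}$ is a genuinely different packaging of the same structural input: rather than splitting the quadratic form into non-negative summands, it converts $\inner{A}{\P A}_\textsc{h}$ in one move into $\norm{\P^{1/2}A}_\textsc{h}^2$, invokes positive-definiteness of the Hilbert--Schmidt inner product exactly once to get $\P^{1/2}A=0$, and concludes $\P A=\P^{1/2}(\P^{1/2}A)=0$. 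What the square-root argument buys is basis-independence and immediate portability beyond the finite-dimensional setting (any bounded positive operator admits a positive square root, whereas a discrete eigenbasis need not exist); what the paper's coordinate computation buys is that the same display simultaneously handles the trivial converse and exhibits $\ker(\P)$ as the span of the zero-eigenvalue eigenvectors, a fact the paper reuses when constructing the effective Hilbert space $\H_\P=\im(\P)$. A minor stylistic difference: you prove the easy direction $\P A=0\implies\norm{A}=0$ explicitly, while the paper leaves it implicit in the same identity.
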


\begin{proof}
    Let $\{M_k\}_{k=0}^{n^2}$ be an orthonormal eigenbasis of $\P$ such that $\lambda_k$ is the eigenvalue corresponding to the eigenvector $M_k$. Let $a_k$ be the components of an operator $A$. We then have
    \begin{align}
    \begin{split}
        \inner{A}{\P A}_\textrm{H} &= \sum_{i=1}^{n^2}\sum_{j=1}^{n^2}a^*_ia_j\inner{M_i}{\P M_j }_\textrm{H} = \sum_{i=1}^{n^2}\sum_{j=1}^{n^2}a^*_ia_j\lambda_j\delta_{ij}\\
        &= \sum_{k=1}^{n^2}\abs{a_k}^2\lambda_k.
    \end{split}
    \end{align}
    Since $\lambda_k \geq 0$, this sum can only be zero if $a_k = 0$ for $\lambda_k > 0$, in other words, $A$ lies in the kernel of $\P$.
\end{proof}

\subsection{Proof of equation \ref{inner-equality}}
\label{app-C}

\begin{proof}
    Define $\Tilde{A} = A-\hat{A}$. We can choose an orthogonal basis $M_1, M_2,\dots M_d$,  $N_1, N_2,\dots N_{n-d}$, such that the operators $M_k$ and $N_k$ span $\textrm{im}(\P)$ and $\textrm{ker}(\P)$ respectively and $d$ is the dimension of $\textrm{im}(\P)$. Let $a_k$ be the components of $A$ with respect to this basis. We have that
    \begin{align}
    \begin{split}
    \P \Tilde{A}  =
        \P(A-\hat{A}) &= \P A - \P \hat{A}\\
        &= \sum_{k=1}^d a_k\P M_k  + \sum_{k=1}^{n-d} a_k\P N_k - \sum_{k=1}^d a_k\P M_k  = 0.
    \end{split}
    \end{align}
    Using this together with linearity of $\braked{\cdot}{\cdot}$, we get
    \begin{equation}
        \braked{A}{B} = \braked{\hat{A}}{\hat{B}} + \braked{\hat{A}}{\Tilde{B}} + \braked{\Tilde{A}}{\hat{B}} + \braked{\Tilde{A}}{\Tilde{B}} = \braked{\hat{A}}{\hat{B}} = \braket*{\hat{A}}{\hat{B}},
    \end{equation}
    where the step to the second equality follows from proposition \ref{kern} and the last step follows from the definition of $\inner{\cdot}{\cdot}$.
\end{proof}

\subsection{Smallest subspace containing the dynamics}
\label{app-D}

Let $V$ be a real or complex finite dimensional vector space and let $L$ be a linear endomorphism on $V$. Assuming that $L$ is diagonalizable, we can write $L = \sum_{i=1}^d l_iP_i$, where $l_i$ are the $d$ distinct eigenvalues of $L$ and $P_i$ are the projections onto the corresponding eigenspaces satisfying $\sum_{i=1}^d P_i = I$ and $P_iP_j = \delta_{ij}P_i$. Here $I$ is the identity map on $V$. It follows from the definition of the exponential function that $e^{Lt} = \sum_{i=1}^d e^{l_it}P_i$. Consider now any initial vector $v$ in $V$ evolving according to $v(t) = e^{Lt}v$. Let us define the subspace $W = \textrm{span}\{v_i\}_{i\in I}$ where $i\in I \iff v_i := P_iv \neq 0$. We then have that $v(t) = \sum_{i\in I} e^{l_i t}v_i$ and we see that the evolution is entirely contained in $W$. Given a proper time interval $T\subset \mathbb{R}$, we now ask whether $W$ is the smallest subspace for which \{$v(t)$ : $t\in T$\} is contained in.\footnote{A proper interval is an interval in $\mathbb{R}$ excluding the empty set and singletons.} The answer is affirmative. To show this, first, note that the functions $e^{l_it}$ with domain $T$ are linearly independent given that all eigenvalues $l_i$ are distinct. This implies that 
\begin{equation}
\label{vector2}
    \sum_{i\in I}c_ie^{l_i t} = 0\textrm{  }\forall t\in T\iff c_i = 0\textrm{  }\forall i\in I,
\end{equation}
where $c_i\in\mathbb{C}$. We will use proof by contradiction to show that $W$ must be the smallest subspace. Assume that there exists a subspace $\Tilde{W}$ containing the evolution with a dimension strictly smaller than $W$. We must then have the evolution contained in the subspace given by the intersection $F = W \cap \Tilde{W}$. By our assumption, $F$ must then have a dimension strictly smaller than $W$. This implies that there exists a non-zero linear functional $w$ with domain $W$ for which $F\subset\textrm{ker}(w)$. We can expand this functional in the basis $\{f_i\}_{i\in I}$ defined by $f_i(v_j) = \delta_{ij}$ so that $w = \sum_{i\in I}w_i f_i$, where $w_i = w(v_i)$. We get that $w\big(v(t)\big) = 0\textrm{  }\forall t\in T \iff \sum_{i\in I}w_ie^{l_i t} = 0\textrm{  }\forall t\in T$. This last expression together with \eqref{vector2} implies that $w_i = 0$ $\forall i\in I\iff w = 0$. Thus, we have reached a contradiction; hence, $W$ is the smallest subspace containing \{$v(t)$ : $t\in T$\}.

The proof can be carried out analogously for the case when $L$ is time dependent but commutes, i.e., $[L(t_1),L(t_2)] = 0$ $\forall t_1,t_2\in T$.

\subsection{Optimal refinement}
\label{app-E}

Consider the decomposition $\hat{A} = S + V_t$ discussed in section \ref{refined} and assume that the subspace $\ker(\mathds{L})\cap\H_\P$ does not change over time. Since $S\in \H_\P$ does not evolve per assumption under the influence of $\mathds{L}$, it must also belong to $\ker(\mathds{L})\cap\H_\P$. Suppose $V_t$ has a non-zero projection $S'$ onto $\ker(\mathds{L})\cap\H_\P$ then $S'$ is guaranteed to remain unchanged since we have by assumption that $\ker(\mathds{L})\cap\H_\P$ is time-independent. Let $V_t'$ be the orthogonal complement so that $V_t = S' + V_t'$. It follows that
\begin{equation}
    \inner{V}{V_t} = \braked{V}{V_t} = \braked{S'+V'}{S'+V'_t} = \braked{V'}{V'_t} + \norm{S'}^2,
\end{equation}
where we have used the fact that $S'$ is orthogonal to $V'_t$, $V'\equiv V'_0$. This implies that
\begin{equation}
    \Re \braked{V'}{V'_t} = \Re \braked{V}{V_t} - \norm{S'}^2 = \Re C(t) - \norm{S}^2 - \norm{S'}^2 = \Re C(t) - \norm{S+S'}^2,
\end{equation}
where the last equality follows from the assumption $\Re \inner{S}{V_t} = 0$. We can thus improve the speed limit further whenever $S'\neq 0$ since we would then have that $\norm{S+S'}=\norm{S} + \norm{S'} > \norm{S}$. The operator $P_0 = S+S'$ is precisely the orthogonal projection of $\hat{A}_t$ onto $\ker(\mathds{L})\cap\H_\P$.

\subsection{Proving orthogonality from the preservation of norm}
\label{app-F}

We here want to show that the operators $P_0$, $X_\omega$ and $Y_\omega$ in \eqref{optimal dynamics} are orthogonal and that $X_\omega$ and $Y_\omega$ have the same norm. Using \eqref{optimal dynamics}, we get that
\begin{align}
\label{X}
    \norm{A_t}^2 &= \norm{P_0}^2 + \norm{P_\omega}^2 + \norm{P_{-\omega}}^2 +2\cos{\theta(t)}\Re\inner{P_0}{X_\omega} + 2\cos(2\theta(t))\Re\inner{P_\omega}{P_{-\omega}}\\
    \label{Y}
    &= \norm{P_0}^2 + \norm{P_\omega}^2 + \norm{P_{-\omega}}^2 +2\sin{\theta(t)}\Re\inner{P_0}{Y_\omega} + 2\cos(2\theta(t))\Re\inner{P_\omega}{P_{-\omega}}.
\end{align}
At time $t=0$, we have that $\theta(0) = 0$ and we get from expression \eqref{X} and \eqref{Y} that $\Re\inner{P_0}{X_\omega} = 0$. Assuming that $\theta(t)$ is not zero over the whole interval $[0,\tau]$, we can conclude from $\Re\inner{P_0}{X_\omega} = 0$, \eqref{X} and \eqref{Y} that $\Re\inner{P_0}{Y_\omega} = 0$ must also hold. This in turn implies that $\Re\inner{P_\omega}{P_{-\omega}} = 0$ since $\norm{A_t}$ must be constant. This last equality guarantees that the norm of $X_\omega$ and $Y_\omega$ are equal. Note that in the case when $\theta(t) = 0$ over the whole interval, we have that the dynamics is stationary.

\nocite{apsrev41Control}
\bibliography{Quantum.bib} 

\begin{thebibliography}{96}%
\makeatletter
\providecommand \@ifxundefined [1]{%
 \@ifx{#1\undefined}
}%
\providecommand \@ifnum [1]{%
 \ifnum #1\expandafter \@firstoftwo
 \else \expandafter \@secondoftwo
 \fi
}%
\providecommand \@ifx [1]{%
 \ifx #1\expandafter \@firstoftwo
 \else \expandafter \@secondoftwo
 \fi
}%
\providecommand \natexlab [1]{#1}%
\providecommand \enquote  [1]{``#1''}%
\providecommand \bibnamefont  [1]{#1}%
\providecommand \bibfnamefont [1]{#1}%
\providecommand \citenamefont [1]{#1}%
\providecommand \href@noop [0]{\@secondoftwo}%
\providecommand \href [0]{\begingroup \@sanitize@url \@href}%
\providecommand \@href[1]{\@@startlink{#1}\@@href}%
\providecommand \@@href[1]{\endgroup#1\@@endlink}%
\providecommand \@sanitize@url [0]{\catcode `\\12\catcode `\$12\catcode
  `\&12\catcode `\#12\catcode `\^12\catcode `\_12\catcode `\%12\relax}%
\providecommand \@@startlink[1]{}%
\providecommand \@@endlink[0]{}%
\providecommand \url  [0]{\begingroup\@sanitize@url \@url }%
\providecommand \@url [1]{\endgroup\@href {#1}{\urlprefix }}%
\providecommand \urlprefix  [0]{URL }%
\providecommand \Eprint [0]{\href }%
\providecommand \doibase [0]{http://dx.doi.org/}%
\providecommand \selectlanguage [0]{\@gobble}%
\providecommand \bibinfo  [0]{\@secondoftwo}%
\providecommand \bibfield  [0]{\@secondoftwo}%
\providecommand \translation [1]{[#1]}%
\providecommand \BibitemOpen [0]{}%
\providecommand \bibitemStop [0]{}%
\providecommand \bibitemNoStop [0]{.\EOS\space}%
\providecommand \EOS [0]{\spacefactor3000\relax}%
\providecommand \BibitemShut  [1]{\csname bibitem#1\endcsname}%
\let\auto@bib@innerbib\@empty
\bibitem [{\citenamefont {Pfeifer}\ and\ \citenamefont
  {Fr\"ohlich}(1995)}]{Pfeifer95}%
  \BibitemOpen
  \bibfield  {author} {\bibinfo {author} {\bibfnamefont {P.}~\bibnamefont
  {Pfeifer}}\ and\ \bibinfo {author} {\bibfnamefont {J.}~\bibnamefont
  {Fr\"ohlich}},\ }\bibfield  {title} {\enquote {\bibinfo {title} {Generalized
  time-energy uncertainty relations and bounds on lifetimes of resonances},}\
  }\href {\doibase 10.1103/RevModPhys.67.759} {\bibfield  {journal} {\bibinfo
  {journal} {Rev. Mod. Phys.}\ }\textbf {\bibinfo {volume} {67}},\ \bibinfo
  {pages} {759--779} (\bibinfo {year} {1995})}\BibitemShut {NoStop}%
\bibitem [{\citenamefont {Busch}(2008)}]{Busch2008}%
  \BibitemOpen
  \bibfield  {author} {\bibinfo {author} {\bibfnamefont {P.}~\bibnamefont
  {Busch}},\ }\enquote {\bibinfo {title} {The time--energy uncertainty
  relation},}\ in\ \href {\doibase 10.1007/978-3-540-73473-4_3} {\emph
  {\bibinfo {booktitle} {Time in Quantum Mechanics}}},\ \bibinfo {editor}
  {edited by\ \bibinfo {editor} {\bibfnamefont {J.~G.}\ \bibnamefont {Muga}},
  \bibinfo {editor} {\bibfnamefont {R.~S.}\ \bibnamefont {Mayato}}, \ and\
  \bibinfo {editor} {\bibfnamefont {{\'I}.~L.}\ \bibnamefont {Egusquiza}}}\
  (\bibinfo  {publisher} {Springer Berlin Heidelberg},\ \bibinfo {address}
  {Berlin, Heidelberg},\ \bibinfo {year} {2008})\ pp.\ \bibinfo {pages}
  {73--105}\BibitemShut {NoStop}%
\bibitem [{\citenamefont {Schulman}(2008)}]{Schulman2008}%
  \BibitemOpen
  \bibfield  {author} {\bibinfo {author} {\bibfnamefont {L.~S.}\ \bibnamefont
  {Schulman}},\ }\enquote {\bibinfo {title} {Jump time and passage time: The
  duration ofs a quantum transition},}\ in\ \href {\doibase
  10.1007/978-3-540-73473-4_4} {\emph {\bibinfo {booktitle} {Time in Quantum
  Mechanics}}},\ \bibinfo {editor} {edited by\ \bibinfo {editor} {\bibfnamefont
  {J.~G.}\ \bibnamefont {Muga}}, \bibinfo {editor} {\bibfnamefont {R.~Sala}\
  \bibnamefont {Mayato}}, \ and\ \bibinfo {editor} {\bibfnamefont {{\'I}.~L.}\
  \bibnamefont {Egusquiza}}}\ (\bibinfo  {publisher} {Springer Berlin
  Heidelberg},\ \bibinfo {address} {Berlin, Heidelberg},\ \bibinfo {year}
  {2008})\ pp.\ \bibinfo {pages} {107--128}\BibitemShut {NoStop}%
\bibitem [{\citenamefont {Dodonov}\ and\ \citenamefont
  {Dodonov}(2015)}]{Dodonov2015}%
  \BibitemOpen
  \bibfield  {author} {\bibinfo {author} {\bibfnamefont {V.~V.}\ \bibnamefont
  {Dodonov}}\ and\ \bibinfo {author} {\bibfnamefont {A.~V.}\ \bibnamefont
  {Dodonov}},\ }\bibfield  {title} {\enquote {\bibinfo {title}
  {Energy{\textendash}time and frequency{\textendash}time uncertainty
  relations: exact inequalities},}\ }\href {\doibase
  10.1088/0031-8949/90/7/074049} {\bibfield  {journal} {\bibinfo  {journal}
  {Physica Scripta}\ }\textbf {\bibinfo {volume} {90}},\ \bibinfo {pages}
  {074049} (\bibinfo {year} {2015})}\BibitemShut {NoStop}%
\bibitem [{\citenamefont {Mandelstam}\ and\ \citenamefont
  {Tamm}(1991)}]{Mandelstam91}%
  \BibitemOpen
  \bibfield  {author} {\bibinfo {author} {\bibfnamefont {L.}~\bibnamefont
  {Mandelstam}}\ and\ \bibinfo {author} {\bibfnamefont {I.}~\bibnamefont
  {Tamm}},\ }\enquote {\bibinfo {title} {The uncertainty relation between
  energy and time in non-relativistic quantum mechanics},}\ in\ \href {\doibase
  10.1007/978-3-642-74626-0_8} {\emph {\bibinfo {booktitle} {Selected
  Papers}}},\ \bibinfo {editor} {edited by\ \bibinfo {editor} {\bibfnamefont
  {Boris~M.}\ \bibnamefont {Bolotovskii}}, \bibinfo {editor} {\bibfnamefont
  {Victor~Ya.}\ \bibnamefont {Frenkel}}, \ and\ \bibinfo {editor}
  {\bibfnamefont {Rudolf}\ \bibnamefont {Peierls}}}\ (\bibinfo  {publisher}
  {Springer Berlin Heidelberg},\ \bibinfo {address} {Berlin, Heidelberg},\
  \bibinfo {year} {1991})\ pp.\ \bibinfo {pages} {115--123}\BibitemShut
  {NoStop}%
\bibitem [{\citenamefont {Deffner}\ and\ \citenamefont
  {Campbell}(2017)}]{Deffner17}%
  \BibitemOpen
  \bibfield  {author} {\bibinfo {author} {\bibfnamefont {S.}~\bibnamefont
  {Deffner}}\ and\ \bibinfo {author} {\bibfnamefont {S.}~\bibnamefont
  {Campbell}},\ }\bibfield  {title} {\enquote {\bibinfo {title} {Quantum speed
  limits: from heisenberg's uncertainty principle to optimal quantum
  control},}\ }\href {\doibase 10.1088/1751-8121/aa86c6} {\bibfield  {journal}
  {\bibinfo  {journal} {Journal of Physics A: Mathematical and Theoretical}\
  }\textbf {\bibinfo {volume} {50}},\ \bibinfo {pages} {453001} (\bibinfo
  {year} {2017})}\BibitemShut {NoStop}%
\bibitem [{\citenamefont {Gong}\ and\ \citenamefont
  {Hamazaki}(2022)}]{GongHamazaki22}%
  \BibitemOpen
  \bibfield  {author} {\bibinfo {author} {\bibfnamefont {Z.}~\bibnamefont
  {Gong}}\ and\ \bibinfo {author} {\bibfnamefont {R.}~\bibnamefont
  {Hamazaki}},\ }\bibfield  {title} {\enquote {\bibinfo {title} {Bounds in
  nonequilibrium quantum dynamics},}\ }\href {\doibase
  10.1142/S0217979222300079} {\bibfield  {journal} {\bibinfo  {journal}
  {International Journal of Modern Physics B}\ }\textbf {\bibinfo {volume}
  {36}},\ \bibinfo {pages} {2230007} (\bibinfo {year} {2022})}\BibitemShut
  {NoStop}%
\bibitem [{\citenamefont {Margolus}\ and\ \citenamefont
  {Levitin}(1998)}]{Margolus98}%
  \BibitemOpen
  \bibfield  {author} {\bibinfo {author} {\bibfnamefont {N.}~\bibnamefont
  {Margolus}}\ and\ \bibinfo {author} {\bibfnamefont {L.~B.}\ \bibnamefont
  {Levitin}},\ }\bibfield  {title} {\enquote {\bibinfo {title} {The maximum
  speed of dynamical evolution},}\ }\href {\doibase
  https://doi.org/10.1016/S0167-2789(98)00054-2} {\bibfield  {journal}
  {\bibinfo  {journal} {Physica D: Nonlinear Phenomena}\ }\textbf {\bibinfo
  {volume} {120}},\ \bibinfo {pages} {188--195} (\bibinfo {year} {1998})},\
  \bibinfo {note} {proceedings of the Fourth Workshop on Physics and
  Consumption}\BibitemShut {NoStop}%
\bibitem [{\citenamefont {Zieli\ifmmode~\acute{n}\else \'{n}\fi{}ski}\ and\
  \citenamefont {Zych}(2006)}]{Zych06}%
  \BibitemOpen
  \bibfield  {author} {\bibinfo {author} {\bibfnamefont {B.}~\bibnamefont
  {Zieli\ifmmode~\acute{n}\else \'{n}\fi{}ski}}\ and\ \bibinfo {author}
  {\bibfnamefont {M.}~\bibnamefont {Zych}},\ }\bibfield  {title} {\enquote
  {\bibinfo {title} {Generalization of the margolus-levitin bound},}\ }\href
  {\doibase 10.1103/PhysRevA.74.034301} {\bibfield  {journal} {\bibinfo
  {journal} {Phys. Rev. A}\ }\textbf {\bibinfo {volume} {74}},\ \bibinfo
  {pages} {034301} (\bibinfo {year} {2006})}\BibitemShut {NoStop}%
\bibitem [{\citenamefont {{Margolus}}(2011)}]{Margolus11}%
  \BibitemOpen
  \bibfield  {author} {\bibinfo {author} {\bibfnamefont {N.}~\bibnamefont
  {{Margolus}}},\ }\bibfield  {title} {\enquote {\bibinfo {title} {{The
  finite-state character of physical dynamics}},}\ }\href@noop {} {\bibfield
  {journal} {\bibinfo  {journal} {arXiv e-prints}\ ,\ \bibinfo {eid}
  {arXiv:1109.4994}} (\bibinfo {year} {2011})},\ \Eprint
  {http://arxiv.org/abs/1109.4994} {arXiv:1109.4994 [quant-ph]} \BibitemShut
  {NoStop}%
\bibitem [{\citenamefont {Uhlmann}(1992)}]{Uhlmann1992}%
  \BibitemOpen
  \bibfield  {author} {\bibinfo {author} {\bibfnamefont {A.}~\bibnamefont
  {Uhlmann}},\ }\bibfield  {title} {\enquote {\bibinfo {title} {An energy
  dispersion estimate},}\ }\href {\doibase
  https://doi.org/10.1016/0375-9601(92)90555-Z} {\bibfield  {journal} {\bibinfo
   {journal} {Physics Letters A}\ }\textbf {\bibinfo {volume} {161}},\ \bibinfo
  {pages} {329--331} (\bibinfo {year} {1992})}\BibitemShut {NoStop}%
\bibitem [{\citenamefont {Deffner}\ and\ \citenamefont
  {Lutz}(2013{\natexlab{a}})}]{Deffner2013JPA}%
  \BibitemOpen
  \bibfield  {author} {\bibinfo {author} {\bibfnamefont {S.}~\bibnamefont
  {Deffner}}\ and\ \bibinfo {author} {\bibfnamefont {E.}~\bibnamefont {Lutz}},\
  }\bibfield  {title} {\enquote {\bibinfo {title} {Energy{\textendash}time
  uncertainty relation for driven quantum systems},}\ }\href {\doibase
  10.1088/1751-8113/46/33/335302} {\bibfield  {journal} {\bibinfo  {journal}
  {Journal of Physics A: Mathematical and Theoretical}\ }\textbf {\bibinfo
  {volume} {46}},\ \bibinfo {pages} {335302} (\bibinfo {year}
  {2013}{\natexlab{a}})}\BibitemShut {NoStop}%
\bibitem [{\citenamefont {Okuyama}\ and\ \citenamefont
  {Ohzeki}(2018{\natexlab{a}})}]{Okuyama2018}%
  \BibitemOpen
  \bibfield  {author} {\bibinfo {author} {\bibfnamefont {M.}~\bibnamefont
  {Okuyama}}\ and\ \bibinfo {author} {\bibfnamefont {M.}~\bibnamefont
  {Ohzeki}},\ }\bibfield  {title} {\enquote {\bibinfo {title} {Comment on
  `energy-time uncertainty relation for driven quantum systems'},}\ }\href
  {\doibase 10.1088/1751-8121/aacb90} {\bibfield  {journal} {\bibinfo
  {journal} {Journal of Physics A: Mathematical and Theoretical}\ }\textbf
  {\bibinfo {volume} {51}},\ \bibinfo {pages} {318001} (\bibinfo {year}
  {2018}{\natexlab{a}})}\BibitemShut {NoStop}%
\bibitem [{\citenamefont {Taddei}\ \emph {et~al.}(2013)\citenamefont {Taddei},
  \citenamefont {Escher}, \citenamefont {Davidovich},\ and\ \citenamefont
  {de~Matos~Filho}}]{Taddei13}%
  \BibitemOpen
  \bibfield  {author} {\bibinfo {author} {\bibfnamefont {M.~M.}\ \bibnamefont
  {Taddei}}, \bibinfo {author} {\bibfnamefont {B.~M.}\ \bibnamefont {Escher}},
  \bibinfo {author} {\bibfnamefont {L.}~\bibnamefont {Davidovich}}, \ and\
  \bibinfo {author} {\bibfnamefont {R.~L.}\ \bibnamefont {de~Matos~Filho}},\
  }\bibfield  {title} {\enquote {\bibinfo {title} {Quantum speed limit for
  physical processes},}\ }\href {\doibase 10.1103/PhysRevLett.110.050402}
  {\bibfield  {journal} {\bibinfo  {journal} {Phys. Rev. Lett.}\ }\textbf
  {\bibinfo {volume} {110}},\ \bibinfo {pages} {050402} (\bibinfo {year}
  {2013})}\BibitemShut {NoStop}%
\bibitem [{\citenamefont {del Campo}\ \emph {et~al.}(2013)\citenamefont {del
  Campo}, \citenamefont {Egusquiza}, \citenamefont {Plenio},\ and\
  \citenamefont {Huelga}}]{delcampo13}%
  \BibitemOpen
  \bibfield  {author} {\bibinfo {author} {\bibfnamefont {A.}~\bibnamefont {del
  Campo}}, \bibinfo {author} {\bibfnamefont {I.~L.}\ \bibnamefont {Egusquiza}},
  \bibinfo {author} {\bibfnamefont {M.~B.}\ \bibnamefont {Plenio}}, \ and\
  \bibinfo {author} {\bibfnamefont {S.~F.}\ \bibnamefont {Huelga}},\ }\bibfield
   {title} {\enquote {\bibinfo {title} {Quantum speed limits in open system
  dynamics},}\ }\href {\doibase 10.1103/PhysRevLett.110.050403} {\bibfield
  {journal} {\bibinfo  {journal} {Phys. Rev. Lett.}\ }\textbf {\bibinfo
  {volume} {110}},\ \bibinfo {pages} {050403} (\bibinfo {year}
  {2013})}\BibitemShut {NoStop}%
\bibitem [{\citenamefont {Deffner}\ and\ \citenamefont
  {Lutz}(2013{\natexlab{b}})}]{Deffner2013PRL}%
  \BibitemOpen
  \bibfield  {author} {\bibinfo {author} {\bibfnamefont {S.}~\bibnamefont
  {Deffner}}\ and\ \bibinfo {author} {\bibfnamefont {E.}~\bibnamefont {Lutz}},\
  }\bibfield  {title} {\enquote {\bibinfo {title} {Quantum speed limit for
  non-markovian dynamics},}\ }\href {\doibase 10.1103/PhysRevLett.111.010402}
  {\bibfield  {journal} {\bibinfo  {journal} {Phys. Rev. Lett.}\ }\textbf
  {\bibinfo {volume} {111}},\ \bibinfo {pages} {010402} (\bibinfo {year}
  {2013}{\natexlab{b}})}\BibitemShut {NoStop}%
\bibitem [{\citenamefont {Campaioli}\ \emph {et~al.}(2019)\citenamefont
  {Campaioli}, \citenamefont {Pollock},\ and\ \citenamefont
  {Modi}}]{Campaioli2019}%
  \BibitemOpen
  \bibfield  {author} {\bibinfo {author} {\bibfnamefont {F.}~\bibnamefont
  {Campaioli}}, \bibinfo {author} {\bibfnamefont {F.~A.}\ \bibnamefont
  {Pollock}}, \ and\ \bibinfo {author} {\bibfnamefont {K.}~\bibnamefont
  {Modi}},\ }\bibfield  {title} {\enquote {\bibinfo {title} {Tight, robust, and
  feasible quantum speed limits for open dynamics},}\ }\href {\doibase
  10.22331/q-2019-08-05-168} {\bibfield  {journal} {\bibinfo  {journal}
  {{Quantum}}\ }\textbf {\bibinfo {volume} {3}},\ \bibinfo {pages} {168}
  (\bibinfo {year} {2019})}\BibitemShut {NoStop}%
\bibitem [{\citenamefont {Garc{\'{\i}}a-Pintos}\ and\ \citenamefont {del
  Campo}(2019)}]{GarciaPintos19}%
  \BibitemOpen
  \bibfield  {author} {\bibinfo {author} {\bibfnamefont {L.~P.}\ \bibnamefont
  {Garc{\'{\i}}a-Pintos}}\ and\ \bibinfo {author} {\bibfnamefont
  {A.}~\bibnamefont {del Campo}},\ }\bibfield  {title} {\enquote {\bibinfo
  {title} {Quantum speed limits under continuous quantum measurements},}\
  }\href {\doibase 10.1088/1367-2630/ab099e} {\bibfield  {journal} {\bibinfo
  {journal} {New Journal of Physics}\ }\textbf {\bibinfo {volume} {21}},\
  \bibinfo {pages} {033012} (\bibinfo {year} {2019})}\BibitemShut {NoStop}%
\bibitem [{\citenamefont {Shanahan}\ \emph {et~al.}(2018)\citenamefont
  {Shanahan}, \citenamefont {Chenu}, \citenamefont {Margolus},\ and\
  \citenamefont {del Campo}}]{Shanahan18}%
  \BibitemOpen
  \bibfield  {author} {\bibinfo {author} {\bibfnamefont {B.}~\bibnamefont
  {Shanahan}}, \bibinfo {author} {\bibfnamefont {A.}~\bibnamefont {Chenu}},
  \bibinfo {author} {\bibfnamefont {N.}~\bibnamefont {Margolus}}, \ and\
  \bibinfo {author} {\bibfnamefont {A.}~\bibnamefont {del Campo}},\ }\bibfield
  {title} {\enquote {\bibinfo {title} {Quantum speed limits across the
  quantum-to-classical transition},}\ }\href {\doibase
  10.1103/PhysRevLett.120.070401} {\bibfield  {journal} {\bibinfo  {journal}
  {Phys. Rev. Lett.}\ }\textbf {\bibinfo {volume} {120}},\ \bibinfo {pages}
  {070401} (\bibinfo {year} {2018})}\BibitemShut {NoStop}%
\bibitem [{\citenamefont {Okuyama}\ and\ \citenamefont
  {Ohzeki}(2018{\natexlab{b}})}]{Okuyama18}%
  \BibitemOpen
  \bibfield  {author} {\bibinfo {author} {\bibfnamefont {M.}~\bibnamefont
  {Okuyama}}\ and\ \bibinfo {author} {\bibfnamefont {M.}~\bibnamefont
  {Ohzeki}},\ }\bibfield  {title} {\enquote {\bibinfo {title} {Quantum speed
  limit is not quantum},}\ }\href {\doibase 10.1103/PhysRevLett.120.070402}
  {\bibfield  {journal} {\bibinfo  {journal} {Phys. Rev. Lett.}\ }\textbf
  {\bibinfo {volume} {120}},\ \bibinfo {pages} {070402} (\bibinfo {year}
  {2018}{\natexlab{b}})}\BibitemShut {NoStop}%
\bibitem [{\citenamefont {Shiraishi}\ \emph {et~al.}(2018)\citenamefont
  {Shiraishi}, \citenamefont {Funo},\ and\ \citenamefont
  {Saito}}]{Shiraishi18}%
  \BibitemOpen
  \bibfield  {author} {\bibinfo {author} {\bibfnamefont {N.}~\bibnamefont
  {Shiraishi}}, \bibinfo {author} {\bibfnamefont {K.}~\bibnamefont {Funo}}, \
  and\ \bibinfo {author} {\bibfnamefont {K.}~\bibnamefont {Saito}},\ }\bibfield
   {title} {\enquote {\bibinfo {title} {Speed limit for classical stochastic
  processes},}\ }\href {\doibase 10.1103/PhysRevLett.121.070601} {\bibfield
  {journal} {\bibinfo  {journal} {Phys. Rev. Lett.}\ }\textbf {\bibinfo
  {volume} {121}},\ \bibinfo {pages} {070601} (\bibinfo {year}
  {2018})}\BibitemShut {NoStop}%
\bibitem [{\citenamefont {Nicholson}\ \emph {et~al.}(2020)\citenamefont
  {Nicholson}, \citenamefont {Garc{\'\i}a-Pintos}, \citenamefont {del Campo},\
  and\ \citenamefont {Green}}]{Nicholson2020}%
  \BibitemOpen
  \bibfield  {author} {\bibinfo {author} {\bibfnamefont {S.~B.}\ \bibnamefont
  {Nicholson}}, \bibinfo {author} {\bibfnamefont {L.~P.}\ \bibnamefont
  {Garc{\'\i}a-Pintos}}, \bibinfo {author} {\bibfnamefont {A.}~\bibnamefont
  {del Campo}}, \ and\ \bibinfo {author} {\bibfnamefont {J.~R.}\ \bibnamefont
  {Green}},\ }\bibfield  {title} {\enquote {\bibinfo {title} {Time--information
  uncertainty relations in thermodynamics},}\ }\href {\doibase
  10.1038/s41567-020-0981-y} {\bibfield  {journal} {\bibinfo  {journal} {Nature
  Physics}\ }\textbf {\bibinfo {volume} {16}},\ \bibinfo {pages} {1211--1215}
  (\bibinfo {year} {2020})}\BibitemShut {NoStop}%
\bibitem [{\citenamefont {Vo}\ \emph {et~al.}(2020)\citenamefont {Vo},
  \citenamefont {Van~Vu},\ and\ \citenamefont {Hasegawa}}]{VoVuHasegawa20}%
  \BibitemOpen
  \bibfield  {author} {\bibinfo {author} {\bibfnamefont {V.~T.}\ \bibnamefont
  {Vo}}, \bibinfo {author} {\bibfnamefont {T.}~\bibnamefont {Van~Vu}}, \ and\
  \bibinfo {author} {\bibfnamefont {Y.}~\bibnamefont {Hasegawa}},\ }\bibfield
  {title} {\enquote {\bibinfo {title} {Unified approach to classical speed
  limit and thermodynamic uncertainty relation},}\ }\href {\doibase
  10.1103/PhysRevE.102.062132} {\bibfield  {journal} {\bibinfo  {journal}
  {Phys. Rev. E}\ }\textbf {\bibinfo {volume} {102}},\ \bibinfo {pages}
  {062132} (\bibinfo {year} {2020})}\BibitemShut {NoStop}%
\bibitem [{\citenamefont {Van~Vu}\ and\ \citenamefont
  {Hasegawa}(2021)}]{VuHasegawa21}%
  \BibitemOpen
  \bibfield  {author} {\bibinfo {author} {\bibfnamefont {T.}~\bibnamefont
  {Van~Vu}}\ and\ \bibinfo {author} {\bibfnamefont {Y.}~\bibnamefont
  {Hasegawa}},\ }\bibfield  {title} {\enquote {\bibinfo {title} {Geometrical
  bounds of the irreversibility in markovian systems},}\ }\href {\doibase
  10.1103/PhysRevLett.126.010601} {\bibfield  {journal} {\bibinfo  {journal}
  {Phys. Rev. Lett.}\ }\textbf {\bibinfo {volume} {126}},\ \bibinfo {pages}
  {010601} (\bibinfo {year} {2021})}\BibitemShut {NoStop}%
\bibitem [{\citenamefont {Garc\'{\i}a-Pintos}\ \emph
  {et~al.}(2022)\citenamefont {Garc\'{\i}a-Pintos}, \citenamefont {Nicholson},
  \citenamefont {Green}, \citenamefont {del Campo},\ and\ \citenamefont
  {Gorshkov}}]{GarciaPintos2022}%
  \BibitemOpen
  \bibfield  {author} {\bibinfo {author} {\bibfnamefont {L.~P.}\ \bibnamefont
  {Garc\'{\i}a-Pintos}}, \bibinfo {author} {\bibfnamefont {S.~B.}\ \bibnamefont
  {Nicholson}}, \bibinfo {author} {\bibfnamefont {J.~R.}\ \bibnamefont
  {Green}}, \bibinfo {author} {\bibfnamefont {A.}~\bibnamefont {del Campo}}, \
  and\ \bibinfo {author} {\bibfnamefont {A.~V.}\ \bibnamefont {Gorshkov}},\
  }\bibfield  {title} {\enquote {\bibinfo {title} {Unifying quantum and
  classical speed limits on observables},}\ }\href {\doibase
  10.1103/PhysRevX.12.011038} {\bibfield  {journal} {\bibinfo  {journal} {Phys.
  Rev. X}\ }\textbf {\bibinfo {volume} {12}},\ \bibinfo {pages} {011038}
  (\bibinfo {year} {2022})}\BibitemShut {NoStop}%
\bibitem [{\citenamefont {Bengtsson}\ and\ \citenamefont
  {Życzkowski}(2017)}]{Bengtsson2017}%
  \BibitemOpen
  \bibfield  {author} {\bibinfo {author} {\bibfnamefont {I.}~\bibnamefont
  {Bengtsson}}\ and\ \bibinfo {author} {\bibfnamefont {K.}~\bibnamefont
  {Życzkowski}},\ }\href {\doibase 10.1017/9781139207010} {\emph {\bibinfo
  {title} {Geometry of Quantum States: An Introduction to Quantum
  Entanglement}}},\ \bibinfo {edition} {2nd}\ ed.\ (\bibinfo  {publisher}
  {Cambridge University Press},\ \bibinfo {year} {2017})\BibitemShut {NoStop}%
\bibitem [{\citenamefont {Pires}\ \emph {et~al.}(2016)\citenamefont {Pires},
  \citenamefont {Cianciaruso}, \citenamefont {C\'eleri}, \citenamefont
  {Adesso},\ and\ \citenamefont {Soares-Pinto}}]{Pires16}%
  \BibitemOpen
  \bibfield  {author} {\bibinfo {author} {\bibfnamefont {D.~P.}\ \bibnamefont
  {Pires}}, \bibinfo {author} {\bibfnamefont {M.}~\bibnamefont {Cianciaruso}},
  \bibinfo {author} {\bibfnamefont {L.~C.}\ \bibnamefont {C\'eleri}}, \bibinfo
  {author} {\bibfnamefont {G.}~\bibnamefont {Adesso}}, \ and\ \bibinfo {author}
  {\bibfnamefont {D.~O.}\ \bibnamefont {Soares-Pinto}},\ }\bibfield  {title}
  {\enquote {\bibinfo {title} {Generalized geometric quantum speed limits},}\
  }\href {\doibase 10.1103/PhysRevX.6.021031} {\bibfield  {journal} {\bibinfo
  {journal} {Phys. Rev. X}\ }\textbf {\bibinfo {volume} {6}},\ \bibinfo {pages}
  {021031} (\bibinfo {year} {2016})}\BibitemShut {NoStop}%
\bibitem [{\citenamefont {Campaioli}\ \emph {et~al.}(2018)\citenamefont
  {Campaioli}, \citenamefont {Pollock}, \citenamefont {Binder},\ and\
  \citenamefont {Modi}}]{Campaioli2018}%
  \BibitemOpen
  \bibfield  {author} {\bibinfo {author} {\bibfnamefont {F.}~\bibnamefont
  {Campaioli}}, \bibinfo {author} {\bibfnamefont {F.~A.}\ \bibnamefont
  {Pollock}}, \bibinfo {author} {\bibfnamefont {F.~C.}\ \bibnamefont {Binder}},
  \ and\ \bibinfo {author} {\bibfnamefont {K.}~\bibnamefont {Modi}},\
  }\bibfield  {title} {\enquote {\bibinfo {title} {Tightening quantum speed
  limits for almost all states},}\ }\href {\doibase
  10.1103/PhysRevLett.120.060409} {\bibfield  {journal} {\bibinfo  {journal}
  {Phys. Rev. Lett.}\ }\textbf {\bibinfo {volume} {120}},\ \bibinfo {pages}
  {060409} (\bibinfo {year} {2018})}\BibitemShut {NoStop}%
\bibitem [{\citenamefont {H{\"o}rnedal}\ \emph
  {et~al.}(2022{\natexlab{a}})\citenamefont {H{\"o}rnedal}, \citenamefont
  {Allan},\ and\ \citenamefont {S{\"o}nnerborn}}]{Hornedal22MT}%
  \BibitemOpen
  \bibfield  {author} {\bibinfo {author} {\bibfnamefont {N.}~\bibnamefont
  {H{\"o}rnedal}}, \bibinfo {author} {\bibfnamefont {D.}~\bibnamefont {Allan}},
  \ and\ \bibinfo {author} {\bibfnamefont {O.}~\bibnamefont {S{\"o}nnerborn}},\
  }\bibfield  {title} {\enquote {\bibinfo {title} {Extensions of the
  mandelstam–tamm quantum speed limit to systems in mixed states},}\ }\href
  {\doibase 10.1088/1367-2630/ac688a} {\bibfield  {journal} {\bibinfo
  {journal} {New Journal of Physics}\ }\textbf {\bibinfo {volume} {24}},\
  \bibinfo {pages} {055004} (\bibinfo {year} {2022}{\natexlab{a}})}\BibitemShut
  {NoStop}%
\bibitem [{\citenamefont {Bukov}\ \emph {et~al.}(2019)\citenamefont {Bukov},
  \citenamefont {Sels},\ and\ \citenamefont {Polkovnikov}}]{Bukov2019}%
  \BibitemOpen
  \bibfield  {author} {\bibinfo {author} {\bibfnamefont {M.}~\bibnamefont
  {Bukov}}, \bibinfo {author} {\bibfnamefont {D.}~\bibnamefont {Sels}}, \ and\
  \bibinfo {author} {\bibfnamefont {A.}~\bibnamefont {Polkovnikov}},\
  }\bibfield  {title} {\enquote {\bibinfo {title} {Geometric speed limit of
  accessible many-body state preparation},}\ }\href {\doibase
  10.1103/PhysRevX.9.011034} {\bibfield  {journal} {\bibinfo  {journal} {Phys.
  Rev. X}\ }\textbf {\bibinfo {volume} {9}},\ \bibinfo {pages} {011034}
  (\bibinfo {year} {2019})}\BibitemShut {NoStop}%
\bibitem [{\citenamefont {Fogarty}\ \emph {et~al.}(2020)\citenamefont
  {Fogarty}, \citenamefont {Deffner}, \citenamefont {Busch},\ and\
  \citenamefont {Campbell}}]{Fogarty20}%
  \BibitemOpen
  \bibfield  {author} {\bibinfo {author} {\bibfnamefont {T.}~\bibnamefont
  {Fogarty}}, \bibinfo {author} {\bibfnamefont {S.}~\bibnamefont {Deffner}},
  \bibinfo {author} {\bibfnamefont {T.}~\bibnamefont {Busch}}, \ and\ \bibinfo
  {author} {\bibfnamefont {S.}~\bibnamefont {Campbell}},\ }\bibfield  {title}
  {\enquote {\bibinfo {title} {Orthogonality catastrophe as a consequence of
  the quantum speed limit},}\ }\href {\doibase 10.1103/PhysRevLett.124.110601}
  {\bibfield  {journal} {\bibinfo  {journal} {Phys. Rev. Lett.}\ }\textbf
  {\bibinfo {volume} {124}},\ \bibinfo {pages} {110601} (\bibinfo {year}
  {2020})}\BibitemShut {NoStop}%
\bibitem [{\citenamefont {Suzuki}\ and\ \citenamefont
  {Takahashi}(2020)}]{Suzuki20}%
  \BibitemOpen
  \bibfield  {author} {\bibinfo {author} {\bibfnamefont {K.}~\bibnamefont
  {Suzuki}}\ and\ \bibinfo {author} {\bibfnamefont {K.}~\bibnamefont
  {Takahashi}},\ }\bibfield  {title} {\enquote {\bibinfo {title} {Performance
  evaluation of adiabatic quantum computation via quantum speed limits and
  possible applications to many-body systems},}\ }\href {\doibase
  10.1103/PhysRevResearch.2.032016} {\bibfield  {journal} {\bibinfo  {journal}
  {Phys. Rev. Research}\ }\textbf {\bibinfo {volume} {2}},\ \bibinfo {pages}
  {032016} (\bibinfo {year} {2020})}\BibitemShut {NoStop}%
\bibitem [{\citenamefont {del Campo}(2021)}]{Delcampo21}%
  \BibitemOpen
  \bibfield  {author} {\bibinfo {author} {\bibfnamefont {A.}~\bibnamefont {del
  Campo}},\ }\bibfield  {title} {\enquote {\bibinfo {title} {Probing quantum
  speed limits with ultracold gases},}\ }\href {\doibase
  10.1103/PhysRevLett.126.180603} {\bibfield  {journal} {\bibinfo  {journal}
  {Phys. Rev. Lett.}\ }\textbf {\bibinfo {volume} {126}},\ \bibinfo {pages}
  {180603} (\bibinfo {year} {2021})}\BibitemShut {NoStop}%
\bibitem [{\citenamefont {Hamazaki}(2022)}]{Hamazaki2022}%
  \BibitemOpen
  \bibfield  {author} {\bibinfo {author} {\bibfnamefont {R.}~\bibnamefont
  {Hamazaki}},\ }\bibfield  {title} {\enquote {\bibinfo {title} {Speed limits
  for macroscopic transitions},}\ }\href {\doibase 10.1103/PRXQuantum.3.020319}
  {\bibfield  {journal} {\bibinfo  {journal} {PRX Quantum}\ }\textbf {\bibinfo
  {volume} {3}},\ \bibinfo {pages} {020319} (\bibinfo {year}
  {2022})}\BibitemShut {NoStop}%
\bibitem [{\citenamefont {Braunstein}\ \emph {et~al.}(1996)\citenamefont
  {Braunstein}, \citenamefont {Caves},\ and\ \citenamefont
  {Milburn}}]{Braunstein96}%
  \BibitemOpen
  \bibfield  {author} {\bibinfo {author} {\bibfnamefont {S.~L.}\ \bibnamefont
  {Braunstein}}, \bibinfo {author} {\bibfnamefont {C.~M.}\ \bibnamefont
  {Caves}}, \ and\ \bibinfo {author} {\bibfnamefont {G.~J.}\ \bibnamefont
  {Milburn}},\ }\bibfield  {title} {\enquote {\bibinfo {title} {Generalized
  uncertainty relations: Theory, examples, and lorentz invariance},}\ }\href
  {\doibase https://doi.org/10.1006/aphy.1996.0040} {\bibfield  {journal}
  {\bibinfo  {journal} {Annals of Physics}\ }\textbf {\bibinfo {volume}
  {247}},\ \bibinfo {pages} {135--173} (\bibinfo {year} {1996})}\BibitemShut
  {NoStop}%
\bibitem [{\citenamefont {Giovannetti}\ \emph {et~al.}(2011)\citenamefont
  {Giovannetti}, \citenamefont {Lloyd},\ and\ \citenamefont
  {Maccone}}]{Giovannetti11}%
  \BibitemOpen
  \bibfield  {author} {\bibinfo {author} {\bibfnamefont {V.}~\bibnamefont
  {Giovannetti}}, \bibinfo {author} {\bibfnamefont {S.}~\bibnamefont {Lloyd}},
  \ and\ \bibinfo {author} {\bibfnamefont {L.}~\bibnamefont {Maccone}},\
  }\bibfield  {title} {\enquote {\bibinfo {title} {Advances in quantum
  metrology},}\ }\href {\doibase 10.1038/nphoton.2011.35} {\bibfield  {journal}
  {\bibinfo  {journal} {Nature Photonics}\ }\textbf {\bibinfo {volume} {5}},\
  \bibinfo {pages} {222--229} (\bibinfo {year} {2011})}\BibitemShut {NoStop}%
\bibitem [{\citenamefont {T{\'{o}}th}\ and\ \citenamefont
  {Apellaniz}(2014)}]{Toth14}%
  \BibitemOpen
  \bibfield  {author} {\bibinfo {author} {\bibfnamefont {G.}~\bibnamefont
  {T{\'{o}}th}}\ and\ \bibinfo {author} {\bibfnamefont {I.}~\bibnamefont
  {Apellaniz}},\ }\bibfield  {title} {\enquote {\bibinfo {title} {Quantum
  metrology from a quantum information science perspective},}\ }\href {\doibase
  10.1088/1751-8113/47/42/424006} {\bibfield  {journal} {\bibinfo  {journal}
  {Journal of Physics A: Mathematical and Theoretical}\ }\textbf {\bibinfo
  {volume} {47}},\ \bibinfo {pages} {424006} (\bibinfo {year}
  {2014})}\BibitemShut {NoStop}%
\bibitem [{\citenamefont {Beau}\ and\ \citenamefont {del
  Campo}(2017)}]{Beau17}%
  \BibitemOpen
  \bibfield  {author} {\bibinfo {author} {\bibfnamefont {M.}~\bibnamefont
  {Beau}}\ and\ \bibinfo {author} {\bibfnamefont {A.}~\bibnamefont {del
  Campo}},\ }\bibfield  {title} {\enquote {\bibinfo {title} {Nonlinear quantum
  metrology of many-body open systems},}\ }\href {\doibase
  10.1103/PhysRevLett.119.010403} {\bibfield  {journal} {\bibinfo  {journal}
  {Phys. Rev. Lett.}\ }\textbf {\bibinfo {volume} {119}},\ \bibinfo {pages}
  {010403} (\bibinfo {year} {2017})}\BibitemShut {NoStop}%
\bibitem [{\citenamefont {Caneva}\ \emph {et~al.}(2009)\citenamefont {Caneva},
  \citenamefont {Murphy}, \citenamefont {Calarco}, \citenamefont {Fazio},
  \citenamefont {Montangero}, \citenamefont {Giovannetti},\ and\ \citenamefont
  {Santoro}}]{Caneva09}%
  \BibitemOpen
  \bibfield  {author} {\bibinfo {author} {\bibfnamefont {T.}~\bibnamefont
  {Caneva}}, \bibinfo {author} {\bibfnamefont {M.}~\bibnamefont {Murphy}},
  \bibinfo {author} {\bibfnamefont {T.}~\bibnamefont {Calarco}}, \bibinfo
  {author} {\bibfnamefont {R.}~\bibnamefont {Fazio}}, \bibinfo {author}
  {\bibfnamefont {S.}~\bibnamefont {Montangero}}, \bibinfo {author}
  {\bibfnamefont {V.}~\bibnamefont {Giovannetti}}, \ and\ \bibinfo {author}
  {\bibfnamefont {G.~E.}\ \bibnamefont {Santoro}},\ }\bibfield  {title}
  {\enquote {\bibinfo {title} {Optimal control at the quantum speed limit},}\
  }\href {\doibase 10.1103/PhysRevLett.103.240501} {\bibfield  {journal}
  {\bibinfo  {journal} {Phys. Rev. Lett.}\ }\textbf {\bibinfo {volume} {103}},\
  \bibinfo {pages} {240501} (\bibinfo {year} {2009})}\BibitemShut {NoStop}%
\bibitem [{\citenamefont {An}\ \emph {et~al.}(2016)\citenamefont {An},
  \citenamefont {Lv}, \citenamefont {del Campo},\ and\ \citenamefont
  {Kim}}]{An16}%
  \BibitemOpen
  \bibfield  {author} {\bibinfo {author} {\bibfnamefont {S.}~\bibnamefont
  {An}}, \bibinfo {author} {\bibfnamefont {D.}~\bibnamefont {Lv}}, \bibinfo
  {author} {\bibfnamefont {A.}~\bibnamefont {del Campo}}, \ and\ \bibinfo
  {author} {\bibfnamefont {K.}~\bibnamefont {Kim}},\ }\bibfield  {title}
  {\enquote {\bibinfo {title} {Shortcuts to adiabaticity by counterdiabatic
  driving for trapped-ion displacement in phase space},}\ }\href {\doibase
  10.1038/ncomms12999} {\bibfield  {journal} {\bibinfo  {journal} {Nature
  Communications}\ }\textbf {\bibinfo {volume} {7}},\ \bibinfo {pages} {12999}
  (\bibinfo {year} {2016})}\BibitemShut {NoStop}%
\bibitem [{\citenamefont {Funo}\ \emph {et~al.}(2017)\citenamefont {Funo},
  \citenamefont {Zhang}, \citenamefont {Chatou}, \citenamefont {Kim},
  \citenamefont {Ueda},\ and\ \citenamefont {del Campo}}]{Funo17}%
  \BibitemOpen
  \bibfield  {author} {\bibinfo {author} {\bibfnamefont {K.}~\bibnamefont
  {Funo}}, \bibinfo {author} {\bibfnamefont {J.-N.}\ \bibnamefont {Zhang}},
  \bibinfo {author} {\bibfnamefont {C.}~\bibnamefont {Chatou}}, \bibinfo
  {author} {\bibfnamefont {K.}~\bibnamefont {Kim}}, \bibinfo {author}
  {\bibfnamefont {M.}~\bibnamefont {Ueda}}, \ and\ \bibinfo {author}
  {\bibfnamefont {A.}~\bibnamefont {del Campo}},\ }\bibfield  {title} {\enquote
  {\bibinfo {title} {Universal work fluctuations during shortcuts to
  adiabaticity by counterdiabatic driving},}\ }\href {\doibase
  10.1103/PhysRevLett.118.100602} {\bibfield  {journal} {\bibinfo  {journal}
  {Phys. Rev. Lett.}\ }\textbf {\bibinfo {volume} {118}},\ \bibinfo {pages}
  {100602} (\bibinfo {year} {2017})}\BibitemShut {NoStop}%
\bibitem [{\citenamefont {Campbell}\ and\ \citenamefont
  {Deffner}(2017)}]{Campbell2017}%
  \BibitemOpen
  \bibfield  {author} {\bibinfo {author} {\bibfnamefont {S.}~\bibnamefont
  {Campbell}}\ and\ \bibinfo {author} {\bibfnamefont {S.}~\bibnamefont
  {Deffner}},\ }\bibfield  {title} {\enquote {\bibinfo {title} {Trade-off
  between speed and cost in shortcuts to adiabaticity},}\ }\href {\doibase
  10.1103/PhysRevLett.118.100601} {\bibfield  {journal} {\bibinfo  {journal}
  {Phys. Rev. Lett.}\ }\textbf {\bibinfo {volume} {118}},\ \bibinfo {pages}
  {100601} (\bibinfo {year} {2017})}\BibitemShut {NoStop}%
\bibitem [{\citenamefont {Campo}\ \emph {et~al.}(2014)\citenamefont {Campo},
  \citenamefont {Goold},\ and\ \citenamefont
  {Paternostro}}]{delcampoGoold2014}%
  \BibitemOpen
  \bibfield  {author} {\bibinfo {author} {\bibfnamefont {A.~del}\ \bibnamefont
  {Campo}}, \bibinfo {author} {\bibfnamefont {J.}~\bibnamefont {Goold}}, \ and\
  \bibinfo {author} {\bibfnamefont {M.}~\bibnamefont {Paternostro}},\
  }\bibfield  {title} {\enquote {\bibinfo {title} {More bang for your buck:
  Super-adiabatic quantum engines},}\ }\href {\doibase 10.1038/srep06208}
  {\bibfield  {journal} {\bibinfo  {journal} {Scientific Reports}\ }\textbf
  {\bibinfo {volume} {4}},\ \bibinfo {pages} {6208} (\bibinfo {year}
  {2014})}\BibitemShut {NoStop}%
\bibitem [{\citenamefont {Binder}\ \emph {et~al.}(2015)\citenamefont {Binder},
  \citenamefont {Vinjanampathy}, \citenamefont {Modi},\ and\ \citenamefont
  {Goold}}]{Binder15}%
  \BibitemOpen
  \bibfield  {author} {\bibinfo {author} {\bibfnamefont {F.~C.}\ \bibnamefont
  {Binder}}, \bibinfo {author} {\bibfnamefont {S.}~\bibnamefont
  {Vinjanampathy}}, \bibinfo {author} {\bibfnamefont {K.}~\bibnamefont {Modi}},
  \ and\ \bibinfo {author} {\bibfnamefont {J.}~\bibnamefont {Goold}},\
  }\bibfield  {title} {\enquote {\bibinfo {title} {Quantacell: powerful
  charging of quantum batteries},}\ }\href {\doibase
  10.1088/1367-2630/17/7/075015} {\bibfield  {journal} {\bibinfo  {journal}
  {New Journal of Physics}\ }\textbf {\bibinfo {volume} {17}},\ \bibinfo
  {pages} {075015} (\bibinfo {year} {2015})}\BibitemShut {NoStop}%
\bibitem [{\citenamefont {Wegner}(1994)}]{Wegner94}%
  \BibitemOpen
  \bibfield  {author} {\bibinfo {author} {\bibfnamefont {F.}~\bibnamefont
  {Wegner}},\ }\bibfield  {title} {\enquote {\bibinfo {title} {Flow-equations
  for hamiltonians},}\ }\href {\doibase
  https://doi.org/10.1002/andp.19945060203} {\bibfield  {journal} {\bibinfo
  {journal} {Annalen der Physik}\ }\textbf {\bibinfo {volume} {506}},\ \bibinfo
  {pages} {77--91} (\bibinfo {year} {1994})}\BibitemShut {NoStop}%
\bibitem [{\citenamefont {G\l{}azek}\ and\ \citenamefont
  {Wilson}(1993)}]{GlazekWilson93}%
  \BibitemOpen
  \bibfield  {author} {\bibinfo {author} {\bibfnamefont {S.~D.}\ \bibnamefont
  {G\l{}azek}}\ and\ \bibinfo {author} {\bibfnamefont {K.~G.}\ \bibnamefont
  {Wilson}},\ }\bibfield  {title} {\enquote {\bibinfo {title} {Renormalization
  of hamiltonians},}\ }\href {\doibase 10.1103/PhysRevD.48.5863} {\bibfield
  {journal} {\bibinfo  {journal} {Phys. Rev. D}\ }\textbf {\bibinfo {volume}
  {48}},\ \bibinfo {pages} {5863--5872} (\bibinfo {year} {1993})}\BibitemShut
  {NoStop}%
\bibitem [{\citenamefont {Glazek}\ and\ \citenamefont
  {Wilson}(1994)}]{GlazekWilson94}%
  \BibitemOpen
  \bibfield  {author} {\bibinfo {author} {\bibfnamefont {S.~D.}\ \bibnamefont
  {Glazek}}\ and\ \bibinfo {author} {\bibfnamefont {K.~G.}\ \bibnamefont
  {Wilson}},\ }\bibfield  {title} {\enquote {\bibinfo {title} {Perturbative
  renormalization group for hamiltonians},}\ }\href {\doibase
  10.1103/PhysRevD.49.4214} {\bibfield  {journal} {\bibinfo  {journal} {Phys.
  Rev. D}\ }\textbf {\bibinfo {volume} {49}},\ \bibinfo {pages} {4214--4218}
  (\bibinfo {year} {1994})}\BibitemShut {NoStop}%
\bibitem [{\citenamefont {Wegner}(2001)}]{Wegner2001}%
  \BibitemOpen
  \bibfield  {author} {\bibinfo {author} {\bibfnamefont {F.~J.}\ \bibnamefont
  {Wegner}},\ }\bibfield  {title} {\enquote {\bibinfo {title} {Flow equations
  for hamiltonians},}\ }\href {\doibase
  https://doi.org/10.1016/S0370-1573(00)00136-8} {\bibfield  {journal}
  {\bibinfo  {journal} {Physics Reports}\ }\textbf {\bibinfo {volume} {348}},\
  \bibinfo {pages} {77--89} (\bibinfo {year} {2001})}\BibitemShut {NoStop}%
\bibitem [{\citenamefont {Kehrein}(2007)}]{Kehrein2007}%
  \BibitemOpen
  \bibfield  {author} {\bibinfo {author} {\bibfnamefont {S.}~\bibnamefont
  {Kehrein}},\ }\href {https://doi.org/10.1007/3-540-34068-8} {\emph {\bibinfo
  {title} {The Flow Equation Approach to Many-Particle Systems}}},\ Springer
  Tracts in Modern Physics\ (\bibinfo  {publisher} {Springer Berlin
  Heidelberg},\ \bibinfo {year} {2007})\BibitemShut {NoStop}%
\bibitem [{\citenamefont {von Keyserlingk}\ \emph {et~al.}(2018)\citenamefont
  {von Keyserlingk}, \citenamefont {Rakovszky}, \citenamefont {Pollmann},\ and\
  \citenamefont {Sondhi}}]{Keyserlingk18}%
  \BibitemOpen
  \bibfield  {author} {\bibinfo {author} {\bibfnamefont {C.~W.}\ \bibnamefont
  {von Keyserlingk}}, \bibinfo {author} {\bibfnamefont {T.}~\bibnamefont
  {Rakovszky}}, \bibinfo {author} {\bibfnamefont {F.}~\bibnamefont {Pollmann}},
  \ and\ \bibinfo {author} {\bibfnamefont {S.~L.}\ \bibnamefont {Sondhi}},\
  }\bibfield  {title} {\enquote {\bibinfo {title} {Operator hydrodynamics,
  otocs, and entanglement growth in systems without conservation laws},}\
  }\href {\doibase 10.1103/PhysRevX.8.021013} {\bibfield  {journal} {\bibinfo
  {journal} {Phys. Rev. X}\ }\textbf {\bibinfo {volume} {8}},\ \bibinfo {pages}
  {021013} (\bibinfo {year} {2018})}\BibitemShut {NoStop}%
\bibitem [{\citenamefont {Nahum}\ \emph {et~al.}(2018)\citenamefont {Nahum},
  \citenamefont {Vijay},\ and\ \citenamefont {Haah}}]{Nahum18}%
  \BibitemOpen
  \bibfield  {author} {\bibinfo {author} {\bibfnamefont {A.}~\bibnamefont
  {Nahum}}, \bibinfo {author} {\bibfnamefont {S.}~\bibnamefont {Vijay}}, \ and\
  \bibinfo {author} {\bibfnamefont {J.}~\bibnamefont {Haah}},\ }\bibfield
  {title} {\enquote {\bibinfo {title} {Operator spreading in random unitary
  circuits},}\ }\href {\doibase 10.1103/PhysRevX.8.021014} {\bibfield
  {journal} {\bibinfo  {journal} {Phys. Rev. X}\ }\textbf {\bibinfo {volume}
  {8}},\ \bibinfo {pages} {021014} (\bibinfo {year} {2018})}\BibitemShut
  {NoStop}%
\bibitem [{\citenamefont {Rakovszky}\ \emph {et~al.}(2018)\citenamefont
  {Rakovszky}, \citenamefont {Pollmann},\ and\ \citenamefont {von
  Keyserlingk}}]{Rakovszky18}%
  \BibitemOpen
  \bibfield  {author} {\bibinfo {author} {\bibfnamefont {T.}~\bibnamefont
  {Rakovszky}}, \bibinfo {author} {\bibfnamefont {F.}~\bibnamefont {Pollmann}},
  \ and\ \bibinfo {author} {\bibfnamefont {C.~W.}\ \bibnamefont {von
  Keyserlingk}},\ }\bibfield  {title} {\enquote {\bibinfo {title} {Diffusive
  hydrodynamics of out-of-time-ordered correlators with charge conservation},}\
  }\href {\doibase 10.1103/PhysRevX.8.031058} {\bibfield  {journal} {\bibinfo
  {journal} {Phys. Rev. X}\ }\textbf {\bibinfo {volume} {8}},\ \bibinfo {pages}
  {031058} (\bibinfo {year} {2018})}\BibitemShut {NoStop}%
\bibitem [{\citenamefont {Khemani}\ \emph {et~al.}(2018)\citenamefont
  {Khemani}, \citenamefont {Vishwanath},\ and\ \citenamefont
  {Huse}}]{Khemani18}%
  \BibitemOpen
  \bibfield  {author} {\bibinfo {author} {\bibfnamefont {V.}~\bibnamefont
  {Khemani}}, \bibinfo {author} {\bibfnamefont {A.}~\bibnamefont {Vishwanath}},
  \ and\ \bibinfo {author} {\bibfnamefont {D.~A.}\ \bibnamefont {Huse}},\
  }\bibfield  {title} {\enquote {\bibinfo {title} {Operator spreading and the
  emergence of dissipative hydrodynamics under unitary evolution with
  conservation laws},}\ }\href {\doibase 10.1103/PhysRevX.8.031057} {\bibfield
  {journal} {\bibinfo  {journal} {Phys. Rev. X}\ }\textbf {\bibinfo {volume}
  {8}},\ \bibinfo {pages} {031057} (\bibinfo {year} {2018})}\BibitemShut
  {NoStop}%
\bibitem [{\citenamefont {Parker}\ \emph {et~al.}(2019)\citenamefont {Parker},
  \citenamefont {Cao}, \citenamefont {Avdoshkin}, \citenamefont {Scaffidi},\
  and\ \citenamefont {Altman}}]{Parker2019}%
  \BibitemOpen
  \bibfield  {author} {\bibinfo {author} {\bibfnamefont {D.~E.}\ \bibnamefont
  {Parker}}, \bibinfo {author} {\bibfnamefont {X.}~\bibnamefont {Cao}},
  \bibinfo {author} {\bibfnamefont {A.}~\bibnamefont {Avdoshkin}}, \bibinfo
  {author} {\bibfnamefont {T.}~\bibnamefont {Scaffidi}}, \ and\ \bibinfo
  {author} {\bibfnamefont {E.}~\bibnamefont {Altman}},\ }\bibfield  {title}
  {\enquote {\bibinfo {title} {A universal operator growth hypothesis},}\
  }\href {\doibase 10.1103/PhysRevX.9.041017} {\bibfield  {journal} {\bibinfo
  {journal} {Phys. Rev. X}\ }\textbf {\bibinfo {volume} {9}},\ \bibinfo {pages}
  {041017} (\bibinfo {year} {2019})}\BibitemShut {NoStop}%
\bibitem [{\citenamefont {Forster}(2018)}]{Forster18}%
  \BibitemOpen
  \bibfield  {author} {\bibinfo {author} {\bibfnamefont {D.}~\bibnamefont
  {Forster}},\ }\href@noop {} {\emph {\bibinfo {title} {Hydrodynamic
  Fluctuations, Broken Symmetry, And Correlation Functions}}},\ Advanced Books
  Classics\ (\bibinfo  {publisher} {CRC Press},\ \bibinfo {year}
  {2018})\BibitemShut {NoStop}%
\bibitem [{\citenamefont {Carabba}\ \emph {et~al.}(2022)\citenamefont
  {Carabba}, \citenamefont {H{\"{o}}rnedal},\ and\ \citenamefont
  {Campo}}]{Carabba22}%
  \BibitemOpen
  \bibfield  {author} {\bibinfo {author} {\bibfnamefont {Nicoletta}\
  \bibnamefont {Carabba}}, \bibinfo {author} {\bibfnamefont {Niklas}\
  \bibnamefont {H{\"{o}}rnedal}}, \ and\ \bibinfo {author} {\bibfnamefont
  {Adolfo~del}\ \bibnamefont {Campo}},\ }\bibfield  {title} {\enquote {\bibinfo
  {title} {Quantum speed limits on operator flows and correlation functions},}\
  }\href {\doibase 10.22331/q-2022-12-22-884} {\bibfield  {journal} {\bibinfo
  {journal} {{Quantum}}\ }\textbf {\bibinfo {volume} {6}},\ \bibinfo {pages}
  {884} (\bibinfo {year} {2022})}\BibitemShut {NoStop}%
\bibitem [{\citenamefont {Mohan}\ and\ \citenamefont {Pati}(2022)}]{Mohan21}%
  \BibitemOpen
  \bibfield  {author} {\bibinfo {author} {\bibfnamefont {B.}~\bibnamefont
  {Mohan}}\ and\ \bibinfo {author} {\bibfnamefont {A.~K.}\ \bibnamefont
  {Pati}},\ }\bibfield  {title} {\enquote {\bibinfo {title} {Quantum speed
  limits for observables},}\ }\href {\doibase 10.1103/PhysRevA.106.042436}
  {\bibfield  {journal} {\bibinfo  {journal} {Phys. Rev. A}\ }\textbf {\bibinfo
  {volume} {106}},\ \bibinfo {pages} {042436} (\bibinfo {year}
  {2022})}\BibitemShut {NoStop}%
\bibitem [{\citenamefont {Barb{\'o}n}\ \emph {et~al.}(2019)\citenamefont
  {Barb{\'o}n}, \citenamefont {Rabinovici}, \citenamefont {Shir},\ and\
  \citenamefont {Sinha}}]{Barbon2019}%
  \BibitemOpen
  \bibfield  {author} {\bibinfo {author} {\bibfnamefont {J.~L.~F.}\
  \bibnamefont {Barb{\'o}n}}, \bibinfo {author} {\bibfnamefont
  {E.}~\bibnamefont {Rabinovici}}, \bibinfo {author} {\bibfnamefont
  {R.}~\bibnamefont {Shir}}, \ and\ \bibinfo {author} {\bibfnamefont
  {R.}~\bibnamefont {Sinha}},\ }\bibfield  {title} {\enquote {\bibinfo {title}
  {On the evolution of operator complexity beyond scrambling},}\ }\href
  {\doibase 10.1007/JHEP10(2019)264} {\bibfield  {journal} {\bibinfo  {journal}
  {Journal of High Energy Physics}\ }\textbf {\bibinfo {volume} {2019}},\
  \bibinfo {pages} {264} (\bibinfo {year} {2019})}\BibitemShut {NoStop}%
\bibitem [{\citenamefont {Caputa}\ \emph {et~al.}(2022)\citenamefont {Caputa},
  \citenamefont {Magan},\ and\ \citenamefont {Patramanis}}]{Caputa2019}%
  \BibitemOpen
  \bibfield  {author} {\bibinfo {author} {\bibfnamefont {P.}~\bibnamefont
  {Caputa}}, \bibinfo {author} {\bibfnamefont {J.~M.}\ \bibnamefont {Magan}}, \
  and\ \bibinfo {author} {\bibfnamefont {D.}~\bibnamefont {Patramanis}},\
  }\bibfield  {title} {\enquote {\bibinfo {title} {Geometry of krylov
  complexity},}\ }\href {\doibase 10.1103/PhysRevResearch.4.013041} {\bibfield
  {journal} {\bibinfo  {journal} {Phys. Rev. Research}\ }\textbf {\bibinfo
  {volume} {4}},\ \bibinfo {pages} {013041} (\bibinfo {year}
  {2022})}\BibitemShut {NoStop}%
\bibitem [{\citenamefont {Dymarsky}\ and\ \citenamefont
  {Gorsky}(2020)}]{Dymarsky20}%
  \BibitemOpen
  \bibfield  {author} {\bibinfo {author} {\bibfnamefont {Anatoly}\ \bibnamefont
  {Dymarsky}}\ and\ \bibinfo {author} {\bibfnamefont {Alexander}\ \bibnamefont
  {Gorsky}},\ }\bibfield  {title} {\enquote {\bibinfo {title} {Quantum chaos as
  delocalization in krylov space},}\ }\href {\doibase
  10.1103/PhysRevB.102.085137} {\bibfield  {journal} {\bibinfo  {journal}
  {Phys. Rev. B}\ }\textbf {\bibinfo {volume} {102}},\ \bibinfo {pages}
  {085137} (\bibinfo {year} {2020})}\BibitemShut {NoStop}%
\bibitem [{\citenamefont {Rabinovici}\ \emph {et~al.}(2021)\citenamefont
  {Rabinovici}, \citenamefont {S{\'a}nchez-Garrido}, \citenamefont {Shir},\
  and\ \citenamefont {Sonner}}]{Rabinovici2021}%
  \BibitemOpen
  \bibfield  {author} {\bibinfo {author} {\bibfnamefont {E.}~\bibnamefont
  {Rabinovici}}, \bibinfo {author} {\bibfnamefont {A.}~\bibnamefont
  {S{\'a}nchez-Garrido}}, \bibinfo {author} {\bibfnamefont {R.}~\bibnamefont
  {Shir}}, \ and\ \bibinfo {author} {\bibfnamefont {J.}~\bibnamefont
  {Sonner}},\ }\bibfield  {title} {\enquote {\bibinfo {title} {Operator
  complexity: a journey to the edge of krylov space},}\ }\href {\doibase
  10.1007/JHEP06(2021)062} {\bibfield  {journal} {\bibinfo  {journal} {Journal
  of High Energy Physics}\ }\textbf {\bibinfo {volume} {2021}},\ \bibinfo
  {pages} {62} (\bibinfo {year} {2021})}\BibitemShut {NoStop}%
\bibitem [{\citenamefont {H{\"o}rnedal}\ \emph
  {et~al.}(2022{\natexlab{b}})\citenamefont {H{\"o}rnedal}, \citenamefont
  {Carabba}, \citenamefont {Matsoukas-Roubeas},\ and\ \citenamefont {del
  Campo}}]{Hornedal22}%
  \BibitemOpen
  \bibfield  {author} {\bibinfo {author} {\bibfnamefont {N.}~\bibnamefont
  {H{\"o}rnedal}}, \bibinfo {author} {\bibfnamefont {N.}~\bibnamefont
  {Carabba}}, \bibinfo {author} {\bibfnamefont {A.~S.}\ \bibnamefont
  {Matsoukas-Roubeas}}, \ and\ \bibinfo {author} {\bibfnamefont
  {A.}~\bibnamefont {del Campo}},\ }\bibfield  {title} {\enquote {\bibinfo
  {title} {Ultimate speed limits to the growth of operator complexity},}\
  }\href {\doibase 10.1038/s42005-022-00985-1} {\bibfield  {journal} {\bibinfo
  {journal} {Communications Physics}\ }\textbf {\bibinfo {volume} {5}},\
  \bibinfo {pages} {207} (\bibinfo {year} {2022}{\natexlab{b}})}\BibitemShut
  {NoStop}%
\bibitem [{\citenamefont {Mori}(1965)}]{Mori65}%
  \BibitemOpen
  \bibfield  {author} {\bibinfo {author} {\bibfnamefont {H.}~\bibnamefont
  {Mori}},\ }\bibfield  {title} {\enquote {\bibinfo {title} {{A
  Continued-Fraction Representation of the Time-Correlation Functions}},}\
  }\href {\doibase 10.1143/PTP.34.399} {\bibfield  {journal} {\bibinfo
  {journal} {Progress of Theoretical Physics}\ }\textbf {\bibinfo {volume}
  {34}},\ \bibinfo {pages} {399--416} (\bibinfo {year} {1965})}\BibitemShut
  {NoStop}%
\bibitem [{\citenamefont {Kubo}(1966)}]{Kubo66}%
  \BibitemOpen
  \bibfield  {author} {\bibinfo {author} {\bibfnamefont {R.}~\bibnamefont
  {Kubo}},\ }\bibfield  {title} {\enquote {\bibinfo {title} {The
  fluctuation-dissipation theorem},}\ }\href {\doibase
  10.1088/0034-4885/29/1/306} {\bibfield  {journal} {\bibinfo  {journal}
  {Reports on Progress in Physics}\ }\textbf {\bibinfo {volume} {29}},\
  \bibinfo {pages} {255} (\bibinfo {year} {1966})}\BibitemShut {NoStop}%
\bibitem [{\citenamefont {V.~S.~Viswanath}(1994)}]{Recursionmethod}%
  \BibitemOpen
  \bibfield  {author} {\bibinfo {author} {\bibfnamefont {G.~Müller}\
  \bibnamefont {V.~S.~Viswanath}},\ }\href
  {https://doi.org/10.1007/978-3-540-48651-0} {\emph {\bibinfo {title} {The
  Recursion Method: Application to Many-Body Dynamics}}}\ (\bibinfo
  {publisher} {Springer-Verlag},\ \bibinfo {year} {1994})\BibitemShut {NoStop}%
\bibitem [{\citenamefont {Ernst}\ \emph {et~al.}(1990)\citenamefont {Ernst},
  \citenamefont {Bodenhausen},\ and\ \citenamefont {Wokaun}}]{Ernst90}%
  \BibitemOpen
  \bibfield  {author} {\bibinfo {author} {\bibfnamefont {R.~R.}\ \bibnamefont
  {Ernst}}, \bibinfo {author} {\bibfnamefont {G.}~\bibnamefont {Bodenhausen}},
  \ and\ \bibinfo {author} {\bibfnamefont {A.}~\bibnamefont {Wokaun}},\ }\href
  {https://global.oup.com/academic/product/principles-of-nuclear-magnetic-resonance-in-one-and-two-dimensions-9780198556473}
  {\emph {\bibinfo {title} {Principles of Nuclear Magnetic Resonance in One and
  Two Dimensions}}},\ International series of monographs on chemistry\
  (\bibinfo  {publisher} {Clarendon Press},\ \bibinfo {year}
  {1990})\BibitemShut {NoStop}%
\bibitem [{\citenamefont {Gyamfi}(2020)}]{Gyamfi20}%
  \BibitemOpen
  \bibfield  {author} {\bibinfo {author} {\bibfnamefont {J.~A.}\ \bibnamefont
  {Gyamfi}},\ }\bibfield  {title} {\enquote {\bibinfo {title} {Fundamentals of
  quantum mechanics in liouville space},}\ }\href {\doibase
  10.1088/1361-6404/ab9fdd} {\bibfield  {journal} {\bibinfo  {journal}
  {European Journal of Physics}\ }\textbf {\bibinfo {volume} {41}},\ \bibinfo
  {pages} {063002} (\bibinfo {year} {2020})}\BibitemShut {NoStop}%
\bibitem [{\citenamefont {Friedberg}\ \emph {et~al.}(2019)\citenamefont
  {Friedberg}, \citenamefont {Insel},\ and\ \citenamefont
  {Spence}}]{Friedberg2018}%
  \BibitemOpen
  \bibfield  {author} {\bibinfo {author} {\bibfnamefont {S.~H.}\ \bibnamefont
  {Friedberg}}, \bibinfo {author} {\bibfnamefont {A.~J.}\ \bibnamefont
  {Insel}}, \ and\ \bibinfo {author} {\bibfnamefont {L.~E.}\ \bibnamefont
  {Spence}},\ }\href {https://books.google.lu/books?id=zhw6vQEACAAJ} {\emph
  {\bibinfo {title} {Linear Algebra}}}\ (\bibinfo  {publisher} {Pearson},\
  \bibinfo {year} {2019})\BibitemShut {NoStop}%
\bibitem [{\citenamefont {Levitin}\ and\ \citenamefont
  {Toffoli}(2009)}]{Levitin2009}%
  \BibitemOpen
  \bibfield  {author} {\bibinfo {author} {\bibfnamefont {L.~B.}\ \bibnamefont
  {Levitin}}\ and\ \bibinfo {author} {\bibfnamefont {T.}~\bibnamefont
  {Toffoli}},\ }\bibfield  {title} {\enquote {\bibinfo {title} {Fundamental
  limit on the rate of quantum dynamics: The unified bound is tight},}\ }\href
  {\doibase 10.1103/PhysRevLett.103.160502} {\bibfield  {journal} {\bibinfo
  {journal} {Phys. Rev. Lett.}\ }\textbf {\bibinfo {volume} {103}},\ \bibinfo
  {pages} {160502} (\bibinfo {year} {2009})}\BibitemShut {NoStop}%
\bibitem [{\citenamefont {Haegeman}\ \emph {et~al.}(2013)\citenamefont
  {Haegeman}, \citenamefont {Osborne}, \citenamefont {Verschelde},\ and\
  \citenamefont {Verstraete}}]{Haegeman13}%
  \BibitemOpen
  \bibfield  {author} {\bibinfo {author} {\bibfnamefont {J.}~\bibnamefont
  {Haegeman}}, \bibinfo {author} {\bibfnamefont {T.~J.}\ \bibnamefont
  {Osborne}}, \bibinfo {author} {\bibfnamefont {H.}~\bibnamefont {Verschelde}},
  \ and\ \bibinfo {author} {\bibfnamefont {F.}~\bibnamefont {Verstraete}},\
  }\bibfield  {title} {\enquote {\bibinfo {title} {Entanglement renormalization
  for quantum fields in real space},}\ }\href {\doibase
  10.1103/PhysRevLett.110.100402} {\bibfield  {journal} {\bibinfo  {journal}
  {Phys. Rev. Lett.}\ }\textbf {\bibinfo {volume} {110}},\ \bibinfo {pages}
  {100402} (\bibinfo {year} {2013})}\BibitemShut {NoStop}%
\bibitem [{\citenamefont {Nozaki}\ \emph {et~al.}(2012)\citenamefont {Nozaki},
  \citenamefont {Ryu},\ and\ \citenamefont {Takayanagi}}]{Nozaki2012}%
  \BibitemOpen
  \bibfield  {author} {\bibinfo {author} {\bibfnamefont {M.}~\bibnamefont
  {Nozaki}}, \bibinfo {author} {\bibfnamefont {S.}~\bibnamefont {Ryu}}, \ and\
  \bibinfo {author} {\bibfnamefont {T.}~\bibnamefont {Takayanagi}},\ }\bibfield
   {title} {\enquote {\bibinfo {title} {Holographic geometry of entanglement
  renormalization in quantum field theories},}\ }\href {\doibase
  10.1007/JHEP10(2012)193} {\bibfield  {journal} {\bibinfo  {journal} {Journal
  of High Energy Physics}\ }\textbf {\bibinfo {volume} {2012}},\ \bibinfo
  {pages} {193} (\bibinfo {year} {2012})}\BibitemShut {NoStop}%
\bibitem [{\citenamefont {Molina-Vilaplana}\ and\ \citenamefont {del
  Campo}(2018)}]{Molina-Vilaplana2018}%
  \BibitemOpen
  \bibfield  {author} {\bibinfo {author} {\bibfnamefont {J.}~\bibnamefont
  {Molina-Vilaplana}}\ and\ \bibinfo {author} {\bibfnamefont {A.}~\bibnamefont
  {del Campo}},\ }\bibfield  {title} {\enquote {\bibinfo {title} {Complexity
  functionals and complexity growth limits in continuous mera circuits},}\
  }\href {\doibase 10.1007/JHEP08(2018)012} {\bibfield  {journal} {\bibinfo
  {journal} {Journal of High Energy Physics}\ }\textbf {\bibinfo {volume}
  {2018}},\ \bibinfo {pages} {12} (\bibinfo {year} {2018})}\BibitemShut
  {NoStop}%
\bibitem [{\citenamefont {Caputa}\ \emph
  {et~al.}(2017{\natexlab{a}})\citenamefont {Caputa}, \citenamefont {Kundu},
  \citenamefont {Miyaji}, \citenamefont {Takayanagi},\ and\ \citenamefont
  {Watanabe}}]{Caputa17}%
  \BibitemOpen
  \bibfield  {author} {\bibinfo {author} {\bibfnamefont {P.}~\bibnamefont
  {Caputa}}, \bibinfo {author} {\bibfnamefont {N.}~\bibnamefont {Kundu}},
  \bibinfo {author} {\bibfnamefont {M.}~\bibnamefont {Miyaji}}, \bibinfo
  {author} {\bibfnamefont {T.}~\bibnamefont {Takayanagi}}, \ and\ \bibinfo
  {author} {\bibfnamefont {K.}~\bibnamefont {Watanabe}},\ }\bibfield  {title}
  {\enquote {\bibinfo {title} {Anti--de sitter space from optimization of path
  integrals in conformal field theories},}\ }\href {\doibase
  10.1103/PhysRevLett.119.071602} {\bibfield  {journal} {\bibinfo  {journal}
  {Phys. Rev. Lett.}\ }\textbf {\bibinfo {volume} {119}},\ \bibinfo {pages}
  {071602} (\bibinfo {year} {2017}{\natexlab{a}})}\BibitemShut {NoStop}%
\bibitem [{\citenamefont {Caputa}\ \emph
  {et~al.}(2017{\natexlab{b}})\citenamefont {Caputa}, \citenamefont {Kundu},
  \citenamefont {Miyaji}, \citenamefont {Takayanagi},\ and\ \citenamefont
  {Watanabe}}]{Caputa17b}%
  \BibitemOpen
  \bibfield  {author} {\bibinfo {author} {\bibfnamefont {P.}~\bibnamefont
  {Caputa}}, \bibinfo {author} {\bibfnamefont {N.}~\bibnamefont {Kundu}},
  \bibinfo {author} {\bibfnamefont {M.}~\bibnamefont {Miyaji}}, \bibinfo
  {author} {\bibfnamefont {T.}~\bibnamefont {Takayanagi}}, \ and\ \bibinfo
  {author} {\bibfnamefont {K.}~\bibnamefont {Watanabe}},\ }\bibfield  {title}
  {\enquote {\bibinfo {title} {Liouville action as path-integral complexity:
  from continuous tensor networks to ads/cft},}\ }\href {\doibase
  10.1007/JHEP11(2017)097} {\bibfield  {journal} {\bibinfo  {journal} {Journal
  of High Energy Physics}\ }\textbf {\bibinfo {volume} {2017}},\ \bibinfo
  {pages} {97} (\bibinfo {year} {2017}{\natexlab{b}})}\BibitemShut {NoStop}%
\bibitem [{\citenamefont {Brown}\ \emph
  {et~al.}(2016{\natexlab{a}})\citenamefont {Brown}, \citenamefont {Roberts},
  \citenamefont {Susskind}, \citenamefont {Swingle},\ and\ \citenamefont
  {Zhao}}]{Brown16}%
  \BibitemOpen
  \bibfield  {author} {\bibinfo {author} {\bibfnamefont {A.~R.}\ \bibnamefont
  {Brown}}, \bibinfo {author} {\bibfnamefont {D.~A.}\ \bibnamefont {Roberts}},
  \bibinfo {author} {\bibfnamefont {L.}~\bibnamefont {Susskind}}, \bibinfo
  {author} {\bibfnamefont {B.}~\bibnamefont {Swingle}}, \ and\ \bibinfo
  {author} {\bibfnamefont {Y.}~\bibnamefont {Zhao}},\ }\bibfield  {title}
  {\enquote {\bibinfo {title} {Complexity, action, and black holes},}\ }\href
  {\doibase 10.1103/PhysRevD.93.086006} {\bibfield  {journal} {\bibinfo
  {journal} {Phys. Rev. D}\ }\textbf {\bibinfo {volume} {93}},\ \bibinfo
  {pages} {086006} (\bibinfo {year} {2016}{\natexlab{a}})}\BibitemShut
  {NoStop}%
\bibitem [{\citenamefont {Brown}\ \emph
  {et~al.}(2016{\natexlab{b}})\citenamefont {Brown}, \citenamefont {Roberts},
  \citenamefont {Susskind}, \citenamefont {Swingle},\ and\ \citenamefont
  {Zhao}}]{Brown16b}%
  \BibitemOpen
  \bibfield  {author} {\bibinfo {author} {\bibfnamefont {A.~R.}\ \bibnamefont
  {Brown}}, \bibinfo {author} {\bibfnamefont {D.~A.}\ \bibnamefont {Roberts}},
  \bibinfo {author} {\bibfnamefont {L.}~\bibnamefont {Susskind}}, \bibinfo
  {author} {\bibfnamefont {B.}~\bibnamefont {Swingle}}, \ and\ \bibinfo
  {author} {\bibfnamefont {Y.}~\bibnamefont {Zhao}},\ }\bibfield  {title}
  {\enquote {\bibinfo {title} {Holographic complexity equals bulk action?}}\
  }\href {\doibase 10.1103/PhysRevLett.116.191301} {\bibfield  {journal}
  {\bibinfo  {journal} {Phys. Rev. Lett.}\ }\textbf {\bibinfo {volume} {116}},\
  \bibinfo {pages} {191301} (\bibinfo {year} {2016}{\natexlab{b}})}\BibitemShut
  {NoStop}%
\bibitem [{\citenamefont {Uzdin}\ and\ \citenamefont
  {Kosloff}(2016)}]{Uzdin2016}%
  \BibitemOpen
  \bibfield  {author} {\bibinfo {author} {\bibfnamefont {R.}~\bibnamefont
  {Uzdin}}\ and\ \bibinfo {author} {\bibfnamefont {R.}~\bibnamefont
  {Kosloff}},\ }\bibfield  {title} {\enquote {\bibinfo {title} {Speed limits in
  liouville space for open quantum systems},}\ }\href {\doibase
  10.1209/0295-5075/115/40003} {\bibfield  {journal} {\bibinfo  {journal}
  {{EPL} (Europhysics Letters)}\ }\textbf {\bibinfo {volume} {115}},\ \bibinfo
  {pages} {40003} (\bibinfo {year} {2016})}\BibitemShut {NoStop}%
\bibitem [{\citenamefont {Lidar}\ \emph {et~al.}(2006)\citenamefont {Lidar},
  \citenamefont {Shabani},\ and\ \citenamefont {Alicki}}]{Lidar2006}%
  \BibitemOpen
  \bibfield  {author} {\bibinfo {author} {\bibfnamefont {D.~A.}\ \bibnamefont
  {Lidar}}, \bibinfo {author} {\bibfnamefont {A.}~\bibnamefont {Shabani}}, \
  and\ \bibinfo {author} {\bibfnamefont {R.}~\bibnamefont {Alicki}},\
  }\bibfield  {title} {\enquote {\bibinfo {title} {Conditions for strictly
  purity-decreasing quantum markovian dynamics},}\ }\href {\doibase
  https://doi.org/10.1016/j.chemphys.2005.06.038} {\bibfield  {journal}
  {\bibinfo  {journal} {Chemical Physics}\ }\textbf {\bibinfo {volume} {322}},\
  \bibinfo {pages} {82--86} (\bibinfo {year} {2006})}\BibitemShut {NoStop}%
\bibitem [{\citenamefont {Toda}(1967{\natexlab{a}})}]{Toda1967a}%
  \BibitemOpen
  \bibfield  {author} {\bibinfo {author} {\bibfnamefont {M.}~\bibnamefont
  {Toda}},\ }\bibfield  {title} {\enquote {\bibinfo {title} {Vibration of a
  chain with nonlinear interaction},}\ }\href {\doibase 10.1143/JPSJ.22.431}
  {\bibfield  {journal} {\bibinfo  {journal} {Journal of the Physical Society
  of Japan}\ }\textbf {\bibinfo {volume} {22}},\ \bibinfo {pages} {431--436}
  (\bibinfo {year} {1967}{\natexlab{a}})}\BibitemShut {NoStop}%
\bibitem [{\citenamefont {Toda}(1967{\natexlab{b}})}]{Toda1967b}%
  \BibitemOpen
  \bibfield  {author} {\bibinfo {author} {\bibfnamefont {M.}~\bibnamefont
  {Toda}},\ }\bibfield  {title} {\enquote {\bibinfo {title} {Wave propagation
  in anharmonic lattices},}\ }\href {\doibase 10.1143/JPSJ.23.501} {\bibfield
  {journal} {\bibinfo  {journal} {Journal of the Physical Society of Japan}\
  }\textbf {\bibinfo {volume} {23}},\ \bibinfo {pages} {501--506} (\bibinfo
  {year} {1967}{\natexlab{b}})}\BibitemShut {NoStop}%
\bibitem [{\citenamefont {Flaschka}(1974)}]{Flaschka1974}%
  \BibitemOpen
  \bibfield  {author} {\bibinfo {author} {\bibfnamefont {H.}~\bibnamefont
  {Flaschka}},\ }\bibfield  {title} {\enquote {\bibinfo {title} {The toda
  lattice. ii. existence of integrals},}\ }\href {\doibase
  10.1103/PhysRevB.9.1924} {\bibfield  {journal} {\bibinfo  {journal} {Phys.
  Rev. B}\ }\textbf {\bibinfo {volume} {9}},\ \bibinfo {pages} {1924--1925}
  (\bibinfo {year} {1974})}\BibitemShut {NoStop}%
\bibitem [{\citenamefont {Moser}(1975)}]{Moser1975}%
  \BibitemOpen
  \bibfield  {author} {\bibinfo {author} {\bibfnamefont {J.}~\bibnamefont
  {Moser}},\ }\href {\doibase 10.1007/3-540-07171-7} {\emph {\bibinfo {title}
  {Dynamical Systems, Theory and Applications}}}\ (\bibinfo  {publisher}
  {Springer},\ \bibinfo {year} {1975})\BibitemShut {NoStop}%
\bibitem [{\citenamefont {Monthus}(2016)}]{Monthus2016}%
  \BibitemOpen
  \bibfield  {author} {\bibinfo {author} {\bibfnamefont {C.}~\bibnamefont
  {Monthus}},\ }\bibfield  {title} {\enquote {\bibinfo {title} {Flow towards
  diagonalization for many-body-localization models: adaptation of the toda
  matrix differential flow to random quantum spin chains},}\ }\href {\doibase
  10.1088/1751-8113/49/30/305002} {\bibfield  {journal} {\bibinfo  {journal}
  {Journal of Physics A: Mathematical and Theoretical}\ }\textbf {\bibinfo
  {volume} {49}},\ \bibinfo {pages} {305002} (\bibinfo {year}
  {2016})}\BibitemShut {NoStop}%
\bibitem [{\citenamefont {Okuyama}\ and\ \citenamefont
  {Takahashi}(2016)}]{Okuyama2017}%
  \BibitemOpen
  \bibfield  {author} {\bibinfo {author} {\bibfnamefont {M.}~\bibnamefont
  {Okuyama}}\ and\ \bibinfo {author} {\bibfnamefont {K.}~\bibnamefont
  {Takahashi}},\ }\bibfield  {title} {\enquote {\bibinfo {title} {From
  classical nonlinear integrable systems to quantum shortcuts to
  adiabaticity},}\ }\href {\doibase 10.1103/PhysRevLett.117.070401} {\bibfield
  {journal} {\bibinfo  {journal} {Phys. Rev. Lett.}\ }\textbf {\bibinfo
  {volume} {117}},\ \bibinfo {pages} {070401} (\bibinfo {year}
  {2016})}\BibitemShut {NoStop}%
\bibitem [{\citenamefont {Chowdhury}\ \emph {et~al.}(2022)\citenamefont
  {Chowdhury}, \citenamefont {Georges}, \citenamefont {Parcollet},\ and\
  \citenamefont {Sachdev}}]{Chowdhury22}%
  \BibitemOpen
  \bibfield  {author} {\bibinfo {author} {\bibfnamefont {D.}~\bibnamefont
  {Chowdhury}}, \bibinfo {author} {\bibfnamefont {A.}~\bibnamefont {Georges}},
  \bibinfo {author} {\bibfnamefont {O.}~\bibnamefont {Parcollet}}, \ and\
  \bibinfo {author} {\bibfnamefont {S.}~\bibnamefont {Sachdev}},\ }\bibfield
  {title} {\enquote {\bibinfo {title} {Sachdev-ye-kitaev models and beyond:
  Window into non-fermi liquids},}\ }\href {\doibase
  10.1103/RevModPhys.94.035004} {\bibfield  {journal} {\bibinfo  {journal}
  {Rev. Mod. Phys.}\ }\textbf {\bibinfo {volume} {94}},\ \bibinfo {pages}
  {035004} (\bibinfo {year} {2022})}\BibitemShut {NoStop}%
\bibitem [{\citenamefont {Bravyi}\ \emph {et~al.}(2011)\citenamefont {Bravyi},
  \citenamefont {DiVincenzo},\ and\ \citenamefont {Loss}}]{Bravyi11}%
  \BibitemOpen
  \bibfield  {author} {\bibinfo {author} {\bibfnamefont {S.}~\bibnamefont
  {Bravyi}}, \bibinfo {author} {\bibfnamefont {D.~P.}\ \bibnamefont
  {DiVincenzo}}, \ and\ \bibinfo {author} {\bibfnamefont {D.}~\bibnamefont
  {Loss}},\ }\bibfield  {title} {\enquote {\bibinfo {title} {Schrieffer–wolff
  transformation for quantum many-body systems},}\ }\href {\doibase
  https://doi.org/10.1016/j.aop.2011.06.004} {\bibfield  {journal} {\bibinfo
  {journal} {Annals of Physics}\ }\textbf {\bibinfo {volume} {326}},\ \bibinfo
  {pages} {2793--2826} (\bibinfo {year} {2011})}\BibitemShut {NoStop}%
\bibitem [{\citenamefont {Faddeev}\ and\ \citenamefont
  {Takhtajan}(2007)}]{Faddeev87}%
  \BibitemOpen
  \bibfield  {author} {\bibinfo {author} {\bibfnamefont {L.~D.}\ \bibnamefont
  {Faddeev}}\ and\ \bibinfo {author} {\bibfnamefont {L.~A.}\ \bibnamefont
  {Takhtajan}},\ }\href {https://doi.org/10.1007/978-3-540-69969-9} {\emph
  {\bibinfo {title} {{Hamiltonian Methods in the Theory of Solitons}}}}\
  (\bibinfo  {publisher} {Springer, Berlin Heidelberg},\ \bibinfo {year}
  {2007})\BibitemShut {NoStop}%
\bibitem [{\citenamefont {Sutherland}(2004)}]{Sutherland04}%
  \BibitemOpen
  \bibfield  {author} {\bibinfo {author} {\bibfnamefont {B.}~\bibnamefont
  {Sutherland}},\ }\href {\doibase 10.1142/5552} {\emph {\bibinfo {title}
  {Beautiful Models}}}\ (\bibinfo  {publisher} {World Scientific},\ \bibinfo
  {year} {2004})\BibitemShut {NoStop}%
\bibitem [{\citenamefont {Gross}\ \emph
  {et~al.}(2020{\natexlab{a}})\citenamefont {Gross}, \citenamefont {Kruthoff},
  \citenamefont {Rolph},\ and\ \citenamefont {Shaghoulian}}]{Gross20}%
  \BibitemOpen
  \bibfield  {author} {\bibinfo {author} {\bibfnamefont {D.~J.}\ \bibnamefont
  {Gross}}, \bibinfo {author} {\bibfnamefont {J.}~\bibnamefont {Kruthoff}},
  \bibinfo {author} {\bibfnamefont {A.}~\bibnamefont {Rolph}}, \ and\ \bibinfo
  {author} {\bibfnamefont {E.}~\bibnamefont {Shaghoulian}},\ }\bibfield
  {title} {\enquote {\bibinfo {title} {$t\overline{T}$ in ${\mathrm{ads}}_{2}$
  and quantum mechanics},}\ }\href {\doibase 10.1103/PhysRevD.101.026011}
  {\bibfield  {journal} {\bibinfo  {journal} {Phys. Rev. D}\ }\textbf {\bibinfo
  {volume} {101}},\ \bibinfo {pages} {026011} (\bibinfo {year}
  {2020}{\natexlab{a}})}\BibitemShut {NoStop}%
\bibitem [{\citenamefont {Gross}\ \emph
  {et~al.}(2020{\natexlab{b}})\citenamefont {Gross}, \citenamefont {Kruthoff},
  \citenamefont {Rolph},\ and\ \citenamefont {Shaghoulian}}]{Gross20b}%
  \BibitemOpen
  \bibfield  {author} {\bibinfo {author} {\bibfnamefont {D.~J.}\ \bibnamefont
  {Gross}}, \bibinfo {author} {\bibfnamefont {J.}~\bibnamefont {Kruthoff}},
  \bibinfo {author} {\bibfnamefont {A.}~\bibnamefont {Rolph}}, \ and\ \bibinfo
  {author} {\bibfnamefont {E.}~\bibnamefont {Shaghoulian}},\ }\bibfield
  {title} {\enquote {\bibinfo {title} {Hamiltonian deformations in quantum
  mechanics, $t\overline{T}$, and the syk model},}\ }\href {\doibase
  10.1103/PhysRevD.102.046019} {\bibfield  {journal} {\bibinfo  {journal}
  {Phys. Rev. D}\ }\textbf {\bibinfo {volume} {102}},\ \bibinfo {pages}
  {046019} (\bibinfo {year} {2020}{\natexlab{b}})}\BibitemShut {NoStop}%
\bibitem [{\citenamefont {Matsoukas-Roubeas}\ \emph {et~al.}(2022)\citenamefont
  {Matsoukas-Roubeas}, \citenamefont {Roccati}, \citenamefont {Cornelius},
  \citenamefont {Xu}, \citenamefont {Chenu},\ and\ \citenamefont {del
  Campo}}]{Matsoukas-Roubeas22}%
  \BibitemOpen
  \bibfield  {author} {\bibinfo {author} {\bibfnamefont {A.~S.}\ \bibnamefont
  {Matsoukas-Roubeas}}, \bibinfo {author} {\bibfnamefont {F.}~\bibnamefont
  {Roccati}}, \bibinfo {author} {\bibfnamefont {J.}~\bibnamefont {Cornelius}},
  \bibinfo {author} {\bibfnamefont {Z.}~\bibnamefont {Xu}}, \bibinfo {author}
  {\bibfnamefont {A.}~\bibnamefont {Chenu}}, \ and\ \bibinfo {author}
  {\bibfnamefont {A.}~\bibnamefont {del Campo}},\ }\href {\doibase
  10.48550/ARXIV.2211.05437} {\enquote {\bibinfo {title} {Non-hermitian
  hamiltonian deformations in quantum mechanics},}\ } (\bibinfo {year}
  {2022})\BibitemShut {NoStop}%
\bibitem [{\citenamefont {Chu}\ and\ \citenamefont
  {Driessel}(1990)}]{ChuDriessel90}%
  \BibitemOpen
  \bibfield  {author} {\bibinfo {author} {\bibfnamefont {Moody~T.}\
  \bibnamefont {Chu}}\ and\ \bibinfo {author} {\bibfnamefont {Kenneth~R.}\
  \bibnamefont {Driessel}},\ }\bibfield  {title} {\enquote {\bibinfo {title}
  {The projected gradient method for least squares matrix approximations with
  spectral constraints},}\ }\href {http://www.jstor.org/stable/2157698}
  {\bibfield  {journal} {\bibinfo  {journal} {SIAM Journal on Numerical
  Analysis}\ }\textbf {\bibinfo {volume} {27}},\ \bibinfo {pages} {1050--1060}
  (\bibinfo {year} {1990})}\BibitemShut {NoStop}%
\bibitem [{\citenamefont {Brockett}(1991)}]{Brockett91}%
  \BibitemOpen
  \bibfield  {author} {\bibinfo {author} {\bibfnamefont {R.~W.}\ \bibnamefont
  {Brockett}},\ }\bibfield  {title} {\enquote {\bibinfo {title} {Dynamical
  systems that sort lists, diagonalize matrices, and solve linear programming
  problems},}\ }\href {\doibase https://doi.org/10.1016/0024-3795(91)90021-N}
  {\bibfield  {journal} {\bibinfo  {journal} {Linear Algebra and its
  Applications}\ }\textbf {\bibinfo {volume} {146}},\ \bibinfo {pages} {79--91}
  (\bibinfo {year} {1991})}\BibitemShut {NoStop}%
\bibitem [{\citenamefont {Bhattacharya}\ \emph {et~al.}(2022)\citenamefont
  {Bhattacharya}, \citenamefont {Nandy}, \citenamefont {Nath},\ and\
  \citenamefont {Sahu}}]{Bhattacharya22}%
  \BibitemOpen
  \bibfield  {author} {\bibinfo {author} {\bibfnamefont {A.}~\bibnamefont
  {Bhattacharya}}, \bibinfo {author} {\bibfnamefont {P.}~\bibnamefont {Nandy}},
  \bibinfo {author} {\bibfnamefont {P.~P.}\ \bibnamefont {Nath}}, \ and\
  \bibinfo {author} {\bibfnamefont {H.}~\bibnamefont {Sahu}},\ }\bibfield
  {title} {\enquote {\bibinfo {title} {Operator growth and krylov construction
  in dissipative open quantum systems},}\ }\href {\doibase
  10.1007/JHEP12(2022)081} {\bibfield  {journal} {\bibinfo  {journal} {Journal
  of High Energy Physics}\ }\textbf {\bibinfo {volume} {2022}},\ \bibinfo
  {pages} {81} (\bibinfo {year} {2022})}\BibitemShut {NoStop}%
\bibitem [{\citenamefont {Liu}\ \emph {et~al.}(2022)\citenamefont {Liu},
  \citenamefont {Tang},\ and\ \citenamefont {Zhai}}]{LiuTangZhai22}%
  \BibitemOpen
  \bibfield  {author} {\bibinfo {author} {\bibfnamefont {C.}~\bibnamefont
  {Liu}}, \bibinfo {author} {\bibfnamefont {H.}~\bibnamefont {Tang}}, \ and\
  \bibinfo {author} {\bibfnamefont {H.}~\bibnamefont {Zhai}},\ }\href {\doibase
  10.48550/ARXIV.2207.13603} {\enquote {\bibinfo {title} {Krylov complexity in
  open quantum systems},}\ } (\bibinfo {year} {2022})\BibitemShut {NoStop}%
\bibitem [{\citenamefont {Bhattacharjee}\ \emph {et~al.}(2022)\citenamefont
  {Bhattacharjee}, \citenamefont {Cao}, \citenamefont {Nandy},\ and\
  \citenamefont {Pathak}}]{Bhattacharjee22b}%
  \BibitemOpen
  \bibfield  {author} {\bibinfo {author} {\bibfnamefont {Budhaditya}\
  \bibnamefont {Bhattacharjee}}, \bibinfo {author} {\bibfnamefont {Xiangyu}\
  \bibnamefont {Cao}}, \bibinfo {author} {\bibfnamefont {Pratik}\ \bibnamefont
  {Nandy}}, \ and\ \bibinfo {author} {\bibfnamefont {Tanay}\ \bibnamefont
  {Pathak}},\ }\href {\doibase 10.48550/ARXIV.2212.06180} {\enquote {\bibinfo
  {title} {An operator growth hypothesis for open quantum systems},}\ }
  (\bibinfo {year} {2022})\BibitemShut {NoStop}%
\end{thebibliography}%
\bibliographystyle{apsrev4-1}
\end{document}